\providecommand{\ifarxiv}[2]{#2}
  \renewcommand{\ifarxiv}[2]{#1}%
\title{Symmetries in Sorting}
\author%
{Vikraman Choudhury}%
{University of Strathclyde, Glasgow, United Kingdom \and Universit\`{a} di Bologna, Bologna, Italy}%
{vikraman.choudhury@strath.ac.uk}%
{https://orcid.org/0000-0003-2030-8056}%
{Supported by the
 \href{https://cordis.europa.eu/programme/id/HORIZON.1.2/en}{EU Marie-Sk\l{}odowska-Curie}
 action ``ReGraDe-CS'', grant~\textnumero~\href{https://doi.org/10.3030/101106046}{101106046}.}
\author%
{Wind Wong}%
{Vrije Universiteit Amsterdam, Netherlands}%
{t.f.w.wong@vu.nl}%
{https://orcid.org/0009-0005-8306-7968}
{Supported by the European Research Council Starting Grant for the project
``SecuStack'', funded under the European Union's
\href{https://cordis.europa.eu/programme/id/HORIZON}{Horizon Europe}
grant~\textnumero~\href{https://cordis.europa.eu/project/id/101115046}{101115046}.}
\authorrunning{V. Choudhury and W. Wong}
\keywords{universal algebra, type theory, homotopy type theory, cubical Agda,
  constructive mathematics, univalent mathematics, sorting, combinatorics, formalisation}
\begin{document}

\maketitle

\begin{abstract}
%
Sorting algorithms are fundamental to computer science,
and their correctness criteria are well understood as
rearranging elements of a list according to a specified total order on the underlying set of elements.
As mathematical functions, they are functions on lists that perform combinatorial operations on the representation of the input list.
In this paper, we study sorting algorithms conceptually as abstract sorting functions.

There is a canonical surjection from the free monoid on a set (lists of elements)
to the free commutative monoid on the same set (multisets of elements).
We show that sorting functions determine a section (right inverse) to this surjection satisfying two axioms,
that do not presuppose a total order on the underlying set.
Then, we establish an equivalence between (decidable) total orders on the underlying set and correct sorting functions.

The first part of the paper develops concepts from universal algebra from the point of view of functorial signatures,
and gives constructions of free monoids and free commutative monoids in (univalent) type theory.
Using these constructions, the second part of the paper develops the axiomatisation of sorting functions.
The paper uses informal mathematical language, and comes with an accompanying formalisation in Cubical Agda.

\end{abstract}


\section{Introduction}
\label{sec:introduction}

Consider a puzzle about sorting,
inspired by Dijkstra's Dutch National Flag problem~\cite[Ch.14]{dijkstraDisciplineProgramming1997}.
Suppose there are balls of three colours,
corresponding to the colours of the Dutch flag: red, white, and blue.
Given an unordered list (bag) of such balls, how many ways can you sort them into the Dutch flag?
\[
  \bag{
    \tikz[anchor=base, baseline]{%
      \foreach \x/\color in {1/red,2/red,3/blue,4/white,5/blue,6/red,7/white,8/blue} {
          \node[circle,draw,fill=\color,line width=1pt] at (\x,0) {\phantom{\tiny\x}};
          \ifthenelse{\NOT 1 = \x}{\node at ({\x-0.5},0) {,};}{}
        }
    }
  }
\]
Obviously there is only one way, decided by the order the colours appear in the Dutch flag:
$\term{red} < \term{white} < \term{blue}$.
\[
  [
    \tikz[anchor=base, baseline]{%
      \foreach \x/\color in {1/red,2/red,3/red,4/white,5/white,6/blue,7/blue,8/blue} {
          \node[circle,draw,fill=\color,line width=1pt] at (\x,0) {\phantom{\tiny\x}};
          \ifthenelse{\NOT 1 = \x}{\node at ({\x-0.5},0) {,};}{}
        }
    }
  ]
\]
What if we are avid enjoyers of vexillology who also want to consider other flags?
How many ways can we sort our unordered list?
To answer this, we distinguish between a sorting \emph{algorithm} (a computational procedure) and a sorting \emph{function} (a mathematical mapping). 
Since we sort the elements of a carrier set $A$ directly (so stability and metadata are not concerns), a sorting function must produce a unique, deterministic output list for each input bag. 
Thus, ``how many ways to sort'' means how many extensionally distinct sorting functions can be defined. 
Because there are $3! = 6$ permutations of $\{\term{red}, \term{white}, \term{blue}\}$, there are 6 possible orderings, corresponding to 6 categories of tricolour flags
(see~\href{https://en.wikipedia.org/wiki/List_of_flags_with_blue,_red,_and_white_stripes#Triband}{Wikipedia}):
\begin{center}
  \foreach \colorA/\colorB/\colorC in {red/white/blue, red/blue/white, white/red/blue, white/blue/red, blue/red/white, blue/white/red}{
      \begin{tikzpicture}[scale=0.5]
        \begin{flagdescription}{3/4}
          \hstripesIII{\colorA}{\colorB}{\colorC}
          \framecode{}
        \end{flagdescription}
      \end{tikzpicture}
    }
\end{center}
For a carrier set with 3 elements, the 6 possible total orderings yield exactly 6 \textit{extensionally correct} sorting functions. 
Intuitively, a \emph{correct} sorting function takes a bag and returns a unique, sorted list of the same elements. 
Extensional correctness therefore requires the output list to be a permutation of the input elements, and sorted according to a fixed total order. 
Formally, there is a bijection between total orderings on a carrier set $A$ and correct sorting functions on lists over $A$. 
Our main contribution is showing how a sorting function can be correctly axiomatised and formalised from this point of view.

\paragraph*{Outline and Contributions}

The paper is organised as follows:
\begin{myitemize}
  \item In~\cref{sec:universal-algebra}, we describe a formalisation of universal algebra developed from the point of view
        of functorial signatures, the definition and universal property of free algebras,
        and equational systems over signatures.
        These are elementary constructions but necessary groundwork for us,
        and we give an informal exposition in type theory.
  \item In~\cref{sec:monoids}, we give constructions of free monoids, and proofs of their universal property.
        Following this, in~\cref{sec:commutative-monoids}, we add commutativity to each representation of free monoids,
        and extend the proofs of universal property from free monoids to free commutative monoids.
        These constructions are standard, and we formalise by using our general framework.
  \item In \cref{sec:application}, we build on the previous constructions and study sorting functions.
        The main result connects total orders, sorting, and commutativity,
        by proving an equivalence between decidable total orders on a carrier set $A$,
        and correct sorting functions defined using free monoids and free commutative monoids over $A$.
  \item Finally, \cref{sec:discussion} discusses aspects of the formalisation, related and future work.
\end{myitemize}
These sections are largely independent. Readers interested in universal algebra can start at~\cref{sec:universal-algebra}; those focused on free (commutative) monoids can skip to~\cref{sec:monoids,sec:commutative-monoids}; and those interested primarily in sorting can proceed directly to~\cref{sec:application}.
All results are formalised in Cubical Agda, available online at~\cite{choudhuryAgdasymmetries2025}.
We use informal mathematical language in the paper to keep it accessible.
\ifarxiv{}{An extended version with additional details is on arXiv~\cite{choudhurySymmetriesSorting2025}.}

No prior exposure to Homotopy Type Theory (HoTT) or Univalence is required. 
We assume only basic familiarity with dependent type theory and functional programming (such as Agda, Coq, or Haskell). 
While the formalisation is written in Cubical Agda and uses higher inductive types and univalence for quotient constructions, the paper's text is presented in informal mathematical language.

\section{Universal Algebra}
\label{sec:universal-algebra}

We develop some basic notions from universal algebra and equational
logic~\cite{birkhoffStructureAbstractAlgebras1935},
which gives us a vocabulary and framework to express our results in.
The point of view we take is the standard category-theoretic approach to universal algebra~\cite{waltersCategoricalApproachUniversal1970},
which is an alternative perspective to the Lawvere theory or abstract clone point of view~\cite{lawvereFunctorialSemanticsAlgebraic1963,hylandCategoryTheoreticUnderstanding2007}.
We keep a running example of monoids in mind, while explaining and defining the abstract concepts.

\subsection{Algebras}
\label{sec:universal-algebra:algebras}

\begin{definition}[\alink{definition}{Signature}{signature}]
    \label{algebra:signature}
    \label{def:signature}
    A signature, denoted $\sig$, is a (dependent) pair consisting of
    a set of operations, $\op\colon \Set$, and
    an arity function for each operation, $\ar\colon \op \to \Set$.
\end{definition}

\begin{example}
    A premonoid is a set with an identity element (a nullary operation), and a binary multiplication operation,
    with signature $\sigMon \defeq (\Fin[2],\lambda \{0 \mapsto \Fin[0] ; 1 \mapsto \Fin[2] \})$.
    Informally, we denote the two operations as a tuple $(e,\mult)$.
\end{example}

\noindent
Every signature $\sig$ induces a signature functor $\Sig$ on $\Set$.
\begin{definition}[\alink{definition}{Signature functor}{signature-functor}]
    \label{def:signature-functor}
    \[
        X \mapsto \dsum{o:\op}{X^{\ar(o)}}
        \qquad
        X \xto{f} Y \mapsto
        \dsum{o:\op}{X^{\ar(o)}}
        \xto{(o, \blank \comp f)}
        \dsum{o:\op}{Y^{\ar(o)}}
    \]
\end{definition}

\begin{example}
    The signature functor for monoids, $\SigMon$, assigns to a carrier set $X$,
    the set of inputs for each operation.
    Expanding the dependent product on $\Fin[2]$, we obtain a coproduct of sets:
    $\SigMon(X) \eqv X^{\Fin[0]} + X^{\Fin[2]} \eqv \unitt + X \times X$.
\end{example}
A $\sig$-structure is given by a carrier set, with functions interpreting each operation symbol.
The signature functor applied to a carrier set gives the inputs to each operation,
and the output is simply a map back to the carrier set.
These two pieces of data constitute an algebra for the $\Sig$ functor.
We write $\str{X}$ for a $\sig$-structure with carrier set $X$, following the model-theoretic notational convention.

\begin{definition}[\alink{definition}{Structure}{structure}]
    \label{algebra:struct}
    A $\sig$-structure $\str{X}$ is an $\Sig$-algebra, that is, a pair consisting of
    a carrier set $X$, and
    an algebra map $\alpha_{X}\colon \Sig(X) \to X$.
\end{definition}

\begin{example}
    Concretely, an $\SigMon$-algebra has the type
    \[
        \alpha_{X} : \SigMon(X) \to X
        \quad\eqv\quad
        (\unitt + X \times X) \to X
        \quad\eqv\quad
        (\unitt \to X) \times (X \times X \to X)
    \]
    which is the pair of functions interpreting the two operations.
    Natural numbers $\Nat$ with $(0, +)$ or $(1, \times)$ are examples of monoid structures.
\end{example}

\begin{definition}[\alink{definition}{$\sig$-Homomorphism}{sig-homomorphism}]
    A homomorphism between $\sig$-structures $\str{X}$ and $\str{Y}$ is a morphism of $\Sig$-algebras,
    that is, a map $f : X \to Y$ making the following diagram commute:
    \[
        \begin{tikzcd}
            \Sig(X) \arrow[r, "\alpha_{X}"] \arrow[d, "\Sig(f)"']
            & X \arrow[d, "f"] \\
            \Sig(Y) \arrow[r, "\alpha_{Y}"']
            & Y
        \end{tikzcd}
    \]
\end{definition}

\begin{example}
    Given two monoid structures \(\str{X}\) and \(\str{Y}\), the top half denotes:
    \({\unitt + (X \times X) \xto{\alpha_{X}} X \xto{f} Y}\), which applies \(f\) to the output of each operation,
    and the bottom half denotes the map:
    \({\unitt + (X \times X) \xto{\SigMon(f)} \unitt + (Y \times Y) \xto{\alpha_{Y}} Y}\).
    %
    In other words, a homomorphism between \(\str{X}\) and \(\str{Y}\) is a map $f$ on the carrier sets that commutes with the
    interpretation of the monoid operations, or simply, preserves the monoid structure.
\end{example}
For a fixed signature $\sig$,
the category of $\Sig$-algebras and their morphisms form a category of algebras,
written $\SigAlg$, or simply, $\sigAlg$,
given by the obvious definitions of identity and composition for the underlying functions.

\subsection{Free Algebras}
\label{sec:universal-algebra:free-algebras}

The concrete category $\sigAlg$ of structured sets and structure-preserving maps
admits a forgetful functor $U : \sigAlg \to \Set$.
In our notation, $U(\str{X})$ is simply $X$, a fact we exploit to simplify our notation and formalisation.
The left adjoint to this forgetful functor is the free algebra construction,
also known as the term algebra or the \emph{absolutely free} algebra without equations~\cite{reiterman1991free}.
We rephrase this in more concrete terms.

\begin{definition}[\alink{definition}{Free Algebras}{free-algebras}]
    \label{def:free-algebras}
    A free $\sig$-algebra construction consists of the following:
    \begin{itemize}
        \item a set $F(X)$, for every set $X$,
        \item a $\sig$-structure on $F(X)$, written as $\str{F}(X)$,
        \item a universal map $\eta_X : X \to F(X)$, for every $X$, such that,
        \item for any $\sig$-algebra $\str{Y}$, the operation
              assigning to each homomorphism $f : \str{F}(X) \to \str{Y}$,
              the map ${f \comp \eta_X : X \to Y}$ (or, pre-composition with $\eta_X$),
              is an equivalence.
    \end{itemize}
\end{definition}
In other words,
we ask for a bijection between the set of homomorphisms out of the free algebra to any other algebra,
and the set of functions from the carrier set of the free algebra to the carrier set of the other algebra.
There should be no more data in homomorphisms out of the free algebra than there is in functions out of
the carrier set, which is the property of \emph{freeness}.
The inverse operation to post-composing with $\eta_X$ extends a function to a homomorphism.
\begin{definition}[\alink{definition}{Universal extension}{universal-extension}]
    \label{def:universal-extension}
    The universal extension of a function $f : X \to Y$ to a homomorphism out of the free $\sig$-algebra on $X$ is written
    as $\ext{f} : \str{F}(X) \to \str{Y}$.
    It satisfies the identities: $\ext{f} \comp \eta_X = f$, $\ext{\eta_{X}} = \idfunc_{\str{F}(X)}$,
    and $\ext{(\ext{g} \comp f)} = \ext{g} \comp \ext{f}$.
    The universal morphism $\eta_X$ is highlighted in \redtext{red}.
    \[\begin{tikzcd}[ampersand replacement=\&]
            X \\
            \\
            Y
            \arrow["f"', from=1-1, to=3-1]
        \end{tikzcd}
        \quad\mapsto\quad
        \begin{tikzcd}[ampersand replacement=\&]
            X \&\& {F(X)} \\
            \\
            \&\& {Y}
            \arrow["\eta_X", color={solarized-red}, from=1-1, to=1-3]
            \arrow["f"', from=1-1, to=3-3]
            \arrow["\ext{f}", dotted, from=1-3, to=3-3]
        \end{tikzcd}\]
\end{definition}
Free algebra constructions are canonically equivalent.
\begin{propositionrep}[\alink{proposition}{}{free-algebras-unique}]
    \label{lem:free-algebras-unique}
    If $\str{F}(X)$ and $\str{G}(X)$ are free $\sig$-algebras on $X$,
    then $\str{F}(X) \eqv \str{G}(X)$ in $\sigAlg$, naturally in $X$, via the canonical extensions.
\end{propositionrep}
\begin{proof}
    By extending $\eta_X$ for each free construction,
    we have maps in each direction:
    ${\ext{G\fdot\eta_{X}} : \str{F}(X) \to \str{G}(X)}$, and vice versa.
    Finally, using~\cref{def:universal-extension}, we have
    \(
    {\ext{F\fdot\eta_{X}} \comp \ext{G\fdot\eta_{X}}} =
    {\ext{(\ext{F\fdot\eta_{X}} \comp G\fdot\eta_{X})}} =
    {\ext{F\fdot\eta_{X}}} =
    {\idfunc_{\str{F}(X)}}
    \).
\end{proof}
\begin{toappendix}
    \begin{proposition}[\alink{proposition}{}{free-algebra-monad}]
        \label{prop:free-algebra-monad}
        The free algebra construction produces a monad on $\Set$.
    \end{proposition}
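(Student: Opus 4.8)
The plan is to exhibit the data of the monad explicitly and check the laws using only the identities from \cref{def:universal-extension}, which is exactly the Kleisli-triple presentation of a monad. The endofunctor is $X \mapsto F(X)$ on objects; on a morphism $f : X \to Y$ we set $F(f) \defeq \ext{\eta_Y \comp f} : \str{F}(X) \to \str{F}(Y)$, restricted to carriers. The unit is the universal map $\eta_X : X \to F(X)$. The multiplication $\mu_X : F(F(X)) \to F(X)$ is the universal extension of the identity, $\mu_X \defeq \ext{\idfunc_{F(X)}}$, obtained by applying \cref{def:universal-extension} to $\idfunc : F(X) \to F(X)$ and taking the underlying function of the resulting homomorphism $\str{F}(F(X)) \to \str{F}(X)$.

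The key steps, in order, are: (1) functoriality of $F$ --- that $F(\idfunc_X) = \idfunc_{F(X)}$ and $F(g \comp f) = F(g) \comp F(f)$ --- which follows directly from the three identities $\ext{f} \comp \eta_X = f$, $\ext{\eta_X} = \idfunc_{F(X)}$, and $\ext{(\ext{g} \comp f)} = \ext{g} \comp \ext{f}$; for instance $F(\idfunc_X) = \ext{\eta_X} = \idfunc_{F(X)}$, and the composition law is a direct application of the third identity after rewriting $\eta_Z \comp g \comp f$ appropriately. (2) Naturality of $\eta$: $F(f) \comp \eta_X = \ext{\eta_Y \comp f} \comp \eta_X = \eta_Y \comp f$, immediate from the first identity. (3) Naturality of $\mu$, which again reduces to the third identity by expressing both composites as $\ext{(\ldots)}$. (4) The monad laws $\mu_X \comp \eta_{F(X)} = \idfunc_{F(X)}$, $\mu_X \comp F(\eta_X) = \idfunc_{F(X)}$, and $\mu_X \comp \mu_{F(X)} = \mu_X \comp F(\mu_X)$: the left unit law is the first identity applied to $\idfunc_{F(X)}$; the right unit law is $\ext{\idfunc_{F(X)}} \comp \ext{(\eta_{F(X)} \comp \eta_X)} = \ext{(\ext{\idfunc_{F(X)}} \comp \eta_{F(X)} \comp \eta_X)} = \ext{\eta_X} = \idfunc$; and associativity similarly collapses, both sides being $\ext{(\ext{\idfunc_{F(X)}} \comp \ext{\idfunc_{F(X)}})}$ after applying the composition identity.

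The only genuine subtlety --- which I would flag rather than the laws themselves --- is that \cref{def:universal-extension} produces $\ext{f}$ as a \emph{homomorphism} $\str{F}(X) \to \str{Y}$, whereas the monad lives on $\Set$; so at each use one must pass to the underlying function via $U$, and conversely observe that the three $\ext{(-)}$ identities, although stated for the homomorphisms, transport to their underlying functions because $U$ is faithful (indeed, in this setting $U$ is literally the first projection, so the equations hold on the nose). Once this bookkeeping is settled, every law is a one-line rewrite using the universal-extension identities, and there is no real obstacle; the proof is entirely formal. I would present it by first listing the three pieces of data, then noting that functoriality and the naturality/monad laws each follow from a single invocation of one of the three identities, without grinding through all nine checks in full.
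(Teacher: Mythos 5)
Your proposal is correct and takes essentially the same route as the paper: the paper also equips $F$ with the action $f \mapsto \ext{(\eta_Y \comp f)}$, takes $\eta$ as unit and $\mu_X \defeq \ext{\idfunc_{F(X)}}$ as multiplication, and notes that the monad laws follow from the universal-extension identities, which you simply spell out in more detail. The only blemish is the displayed common value in the associativity check: both sides reduce to $\ext{(\ext{\idfunc_{F(X)}})} = \ext{\mu_X}$, not to $\ext{(\ext{\idfunc_{F(X)}} \comp \ext{\idfunc_{F(X)}})}$ (which is not even well-typed), but the surrounding reasoning is sound.
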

    \begin{proof}
        We have unit $\eta$ and extension $\ext{(\blank)}$ from the free algebra construction.
        $F$ becomes an endofunctor on $\Set$ where the action on functions is given by
        $f : X \to Y \mapsto \ext{(\eta_{Y} \comp f)} : F(X) \to F(Y)$.
        The monad unit is given by $\eta$,
        and multiplication given by $\mu_{X} \defeq \ext{\idfunc_{F(X)}} : F(F(X)) \to F(X)$.
        The monad laws follow from the identities of universal extension.
    \end{proof}

    \begin{proposition}[\alink{proposition}{}{free-algebra-colimits}]
        \label{prop:free-algebra-colimits}
        The following properties of free algebras on $\emptyt$, $\unitt$, and $+$ hold:
        \begin{itemize}
            \item if $\sig$ has one constant symbol, then $\str{F}(\emptyt)$ is contractible,
            \item the type of algebra structures on $\unitt$ is contractible,
            \item $\str{F}(X + Y)$ is the coproduct of $\str{F}(X)$ and $\str{F}(Y)$ in $\sigAlg$, that is:
                  \[
                      \sigAlg(\str{F}(X + Y), \str{Z}) \eqv \sigAlg(\str{F}(X), \str{Z}) \times \sigAlg(\str{F}(Y), \str{Z}) \enspace.
                  \]
        \end{itemize}
    \end{proposition}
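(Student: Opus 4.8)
The plan is to read off all three statements from the universal property of free algebras (\cref{def:free-algebras,def:universal-extension}) and never unfold the actual construction of $F$. The middle statement is immediate and I would dispatch it first: an algebra structure on $\unitt$ is by definition a map $\Sig(\unitt)\to\unitt$, and the type of maps into a contractible type is contractible (by function extensionality), so there is a canonical $\sig$-algebra with carrier $\unitt$ and no other.

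For the first statement, the universal property gives $\sigAlg(\str{F}(\emptyt),\str{Y})\eqv(\emptyt\to Y)\eqv\unitt$ for every $\str{Y}$, so $\str{F}(\emptyt)$ is the initial $\sig$-algebra. I would then check that the one-point algebra (its structure unique by the previous step) is weakly initial: a homomorphism out of it into $\str{Y}$ amounts to a point $y:Y$ together with a proof — a proposition, since $Y$ is a set — of the naturality square, which unwinds to the requirement that $y=\alpha_{Y}\bigl(o,(i\mapsto y)\bigr)$ for every operation $o:\op$; the unique constant symbol pins $y$ down as its interpretation in $\str{Y}$, and one then verifies that this $y$ does validate the square. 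Composing this homomorphism, from the one-point algebra to $\str{F}(\emptyt)$, with the unique homomorphism from $\str{F}(\emptyt)$ to the one-point algebra in both directions gives identities — one composite is an endomorphism of the initial algebra $\str{F}(\emptyt)$, the other an endomorphism of the one-point algebra — so $F(\emptyt)\eqv\unitt$ is contractible. This reuses the bookkeeping from the proof of \cref{lem:free-algebras-unique}.

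For the coproduct statement I would compose equivalences, each given by pre-composition with a fixed map:
\[
  \sigAlg(\str{F}(X+Y),\str{Z})
  \ \eqv\ (X+Y\to Z)
  \ \eqv\ (X\to Z)\times(Y\to Z)
  \ \eqv\ \sigAlg(\str{F}(X),\str{Z})\times\sigAlg(\str{F}(Y),\str{Z}),
\]
by the universal property of $\str{F}(X+Y)$, the universal property of the coproduct $X+Y$ in $\Set$, and the universal properties of $\str{F}(X)$ and $\str{F}(Y)$. Unwinding, this composite is restriction along the two homomorphisms $\str{F}(X)\to\str{F}(X+Y)$ and $\str{F}(Y)\to\str{F}(X+Y)$ induced by the coprojections, so $\str{F}(X+Y)$ together with these maps is the coproduct in $\sigAlg$; abstractly, this is just the fact that $F$, as a left adjoint to the forgetful functor (\cref{def:free-algebras}), preserves coproducts, and $X+Y$ is the coproduct in $\Set$.

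None of the three parts is genuinely difficult. The two places I would be slightly careful are, in the first part, checking that the point singled out by the constant really validates the homomorphism square for \emph{every} operation — this is exactly where the hypothesis on the signature enters — and, in the coproduct part, observing that each equivalence in the displayed chain is natural in $\str{Z}$; but since every map involved is a composition, that naturality is automatic, so I do not expect a real obstacle there.
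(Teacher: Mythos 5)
Your second and third bullets are correct and essentially the paper's own argument: an algebra structure on $\unitt$ is a map $\Sig(\unitt)\to\unitt$ into a contractible type, and the coproduct statement is just the fact that the left adjoint $F$ preserves coproducts --- your displayed chain of equivalences is this unwound, and since every map in it is (pre)composition with a fixed homomorphism, the naturality you worry about is indeed automatic.

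The genuine gap is in the first bullet, at exactly the step you flagged and then waved through. The hypothesis ``$\sig$ has one constant symbol'' only pins down the candidate point $y$ as the interpretation of that constant; it does nothing to make the homomorphism square commute for the \emph{other} operations, so the one-point algebra is in general \emph{not} weakly initial. Concretely, for $\sigMon$ (one constant $e$ and one binary $\mult$) take $\str{Y}=\str{F}_{\sigMon}(\emptyt)$ itself: a homomorphism out of the one-point algebra would force $y=\alpha_{Y}(\mult,(y,y))$, i.e.\ $e = e \mult e$ in the lawless term algebra, where these are distinct trees. Note that under this literal reading the conclusion itself fails too, since $\str{F}_{\sigMon}(\emptyt)$ contains $e$, $e\mult e$, $(e\mult e)\mult e,\dots$ and is not contractible; the bullet (and your square-checking step) only goes through when the constant is the \emph{only} operation symbol, so that there is nothing further to verify. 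This is also how the paper treats this bullet: it observes directly that $\str{F}(\emptyt)$ consists of the closed terms generated by the constants, rather than routing through weak initiality of the one-point algebra. Your remaining bookkeeping --- composing the two homomorphisms and using initiality of $\str{F}(\emptyt)$ on one side and contractibility of $\unitt$ on the other, as in \cref{lem:free-algebras-unique} --- is fine once weak initiality is actually available; the missing ingredient is that weak initiality requires every operation of $\sig$ to fix the constant's interpretation, which is an additional constraint on the signature, not a consequence of it having a single constant.
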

    \begin{proof}
        $F$ being a left adjoint, preserves coproducts.
        This makes $\str{F}(\emptyt)$ initial in $\sigAlg$.
        Note that~\(\str{F}(\emptyt)\) is inhabited by all the constant symbols in the signature,
        so if there is one constant symbol, it becomes contractible.
        $\str{F}(\unitt) \to \unitt$ is contractible because $\unitt$ is terminal in $\Set$.
    \end{proof}
\end{toappendix}
We have discussed specifications of free algebras, but not actually given a construction.
In type theory, \emph{free} constructions are often given by inductive types,
where the constructors are the pieces of data that freely generate the structure,
and the type-theoretic induction principle enforces the category-theoretic universal property.

\begin{definition}[\alink{definition}{Construction of Free $\sig$-algebras}{free-algebra-construction}]
    \label{algebra:tree}
    \label{def:free-algebra-construction}
    The free $\sig$-algebra on a type $X$ is given by the inductive type $\type{Tree}(X)$,
    generated by two constructors:
    \begin{align*}
        \term{leaf} & \colon X \to \type{Tree}(X)                      \\
        \term{node} & \colon F_\sig(\type{Tree}(X)) \to \type{Tree}(X)
    \end{align*}
\end{definition}
The constructors \inline{leaf} and \inline{node} are
the generators for the universal map, and the algebra map, respectively.
Together, they describe the type of abstract syntax trees for terms in the signature $\sig$
-- the leaves are the free variables, and the nodes are the branching operations of the tree,
marked by the operations in $\sig$.
\begin{toappendix}
    \begin{example}
        Trees for $\sigMon$ look like:
        \begin{center}
            \scalebox{0.9}{%
                \begin{tikzcd}[ampersand replacement=\&]
                    x \&\& y \&\& z \\
                    \&\&\& \mult \\
                    \&\& \mult \\
                    \&\& {x \mult (y \mult z)}
                    \arrow[no head, from=1-1, to=3-3]
                    \arrow[no head, from=1-3, to=2-4]
                    \arrow[no head, from=1-5, to=2-4]
                    \arrow[no head, from=2-4, to=3-3]
                    \arrow[no head, from=3-3, to=4-3]
                \end{tikzcd}
            }
            \qquad\qquad
            \scalebox{1}{%
                \begin{tikzcd}[ampersand replacement=\&]
                    e
                \end{tikzcd}
            }
        \end{center}
    \end{example}
\end{toappendix}
\begin{proposition}[\alink{proposition}{}{free-algebra-construction-is}]
    \label{prop:free-algebra-construction-is}
    $(\type{Tree}(X), \term{leaf})$ is the free $\sig$-algebra on~$X$. 
\end{proposition}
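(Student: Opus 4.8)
The plan is to verify the universal property of \cref{def:free-algebras} directly, using the eliminator of the inductive type $\type{Tree}(X)$. The $\sig$-structure on $\type{Tree}(X)$ is the one whose algebra map $\Sig(\type{Tree}(X)) \to \type{Tree}(X)$ is the constructor $\term{node}$ itself, and the universal map $\eta_X$ is $\term{leaf}$. First I would fix a $\sig$-algebra $\str{Y}$ with algebra map $\alpha_Y$ and construct the candidate inverse to post-composition with $\term{leaf}$: given $f : X \to Y$, define $\ext{f} : \type{Tree}(X) \to Y$ by recursion, setting $\ext{f}(\term{leaf}(x)) \defeq f(x)$ and $\ext{f}(\term{node}(s)) \defeq \alpha_Y(\Sig(\ext{f})(s))$, where $\Sig(\ext{f})$ is the functorial action from \cref{def:signature-functor} (so on a pair $(o,t)$ with $t : \type{Tree}(X)^{\ar(o)}$ this is $\alpha_Y(o, \ext{f} \comp t)$). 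This is precisely the eliminator of $\type{Tree}(X)$ into the constant family at $Y$. By construction the homomorphism square commutes --- $\ext{f} \comp \term{node} = \alpha_Y \comp \Sig(\ext{f})$ holds definitionally --- so $\ext{f}$ is a $\sig$-homomorphism, and the computation rule on $\term{leaf}$ gives $\ext{f} \comp \term{leaf} = f$, which is one of the two round-trips.

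For the other round-trip I need uniqueness: if $g : \type{Tree}(X) \to \str{Y}$ is any homomorphism with $g \comp \term{leaf} = f$, then $g = \ext{f}$. I would prove this pointwise by induction on $t : \type{Tree}(X)$. On leaves, $g(\term{leaf}(x)) = f(x) = \ext{f}(\term{leaf}(x))$ by hypothesis. On a node $\term{node}(s)$ with $s = (o,t)$, the induction hypothesis says $g$ and $\ext{f}$ agree on every subtree $t(i)$; pushing this through the functorial action (via function extensionality, since $g \comp t = \ext{f} \comp t$) yields $\Sig(g)(s) = \Sig(\ext{f})(s)$, and then the homomorphism squares for $g$ and for $\ext{f}$ give $g(\term{node}(s)) = \alpha_Y(\Sig(g)(s)) = \alpha_Y(\Sig(\ext{f})(s)) = \ext{f}(\term{node}(s))$. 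Since $Y$ is a set these pointwise equalities assemble into $g = \ext{f}$, and this upgrades to an equality of homomorphisms in $\sigAlg$ because the commutativity witness lives in a set, hence being a homomorphism is a proposition. Together, $\ext{(\blank)}$ is a two-sided inverse to $(\blank) \comp \term{leaf}$, so the latter is an equivalence.

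The main obstacle --- the one place genuine care is needed --- is the inductive step for uniqueness: the subterm $s : F_\sig(\type{Tree}(X))$ carries an $\ar(o)$-indexed family of subtrees, so the induction principle must be applied \emph{underneath} the signature functor. Concretely one needs that $\Sig$ sends a pointwise equality of maps $\type{Tree}(X) \to Y$ to an equality on $\Sig(\type{Tree}(X))$, which is exactly where function extensionality enters (the fibres $X^{\ar(o)}$ are function types). Equivalently, one works with the strengthened ``deep'' eliminator for $\type{Tree}(X)$ --- the analogue of the $W$-type eliminator --- whose motive at a node ranges over all immediate subtrees; in the accompanying formalisation this is just the generated dependent eliminator. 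A secondary bookkeeping point worth recording is that $\type{Tree}(X)$ is a set whenever $X$ is (it is a $W$-type-like construction built from sets), so that it genuinely lands in $\Set$ as \cref{def:free-algebras} requires.
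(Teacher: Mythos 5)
Your proof is correct and follows essentially the approach the paper intends: the paper states \cref{prop:free-algebra-construction-is} without a written proof (deferring to the Cubical Agda formalisation), and the formalised argument is exactly your one — define $\ext{f}$ by the eliminator of $\type{Tree}(X)$, observe the homomorphism square and $\ext{f} \comp \term{leaf} = f$ hold by the computation rules, and prove uniqueness by structural induction, using function extensionality under the signature functor and the fact that being a homomorphism into a set-valued algebra is a proposition.
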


\subsection{Equations}
\label{sec:universal-algebra:equations}

The algebraic framework described so far only captures operations, not equations.
These algebras are \emph{lawless} (or \emph{wild} or \emph{absolutely free}) --
$\SigMon$-algebras are premonoids rather than monoids,
and $\str{F}_{\sigMon}$-algebras are free premonoids, not free monoids,
since they are missing the unit and associativity laws.
\begin{toappendix}
    For example, by associativity, these two trees of $(\Nat, 0, +)$ should be identified as equal.
    \begin{center}
        \scalebox{0.7}{%
            \begin{tikzcd}[ampersand replacement=\&]
                2 \&\& 1 \&\& 1 \\
                \&\&\& {+} \\
                \&\& {+} \\
                \&\& 2 + (1 + 1)
                \arrow[no head, from=1-1, to=3-3]
                \arrow[no head, from=1-3, to=2-4]
                \arrow[no head, from=1-5, to=2-4]
                \arrow[no head, from=2-4, to=3-3]
                \arrow[no head, from=3-3, to=4-3]
            \end{tikzcd}
        }
        \qquad\qquad
        \scalebox{0.7}{%
            \begin{tikzcd}[ampersand replacement=\&]
                2 \&\& 1 \&\& 1 \\
                \& {+} \\
                \&\& {+} \\
                \&\& {(2 + 1) + 1}
                \arrow[no head, from=3-3, to=4-3]
                \arrow[no head, from=1-5, to=3-3]
                \arrow[no head, from=1-1, to=2-2]
                \arrow[no head, from=1-3, to=2-2]
                \arrow[no head, from=2-2, to=3-3]
            \end{tikzcd}
        }
    \end{center}
\end{toappendix}
To impose equations on the generated abstract syntax trees, we adopt the point of view of equational logic.
\begin{definition}[\alink{definition}{Term Arity}{term-arity}]
    \label{def:term-arity}
    A \emph{term arity}, denoted $\eqsig$, is a (dependent) pair consisting of:
    a set of names for equations, $\eqop : \Set$, and
    an arity of free variables for each equation, $\eqfv : \eqop \to \Set$.
\end{definition}
This will be used to define a \emph{system of equations} (\cref{def:system-of-equations}).
\begin{example}
    The term arity for monoids $\eqsigMon$ is:
    $(\Fin[3],\lambda \{0 \mapsto \Fin[1] ; 1 \mapsto \Fin[1] ; 2 \mapsto \Fin[3] \})$.
    The three equations are the left and right unit laws, and the associativity law --
    a 3-element set of names $\{ \term{unitl}, \term{unitr}, \term{assoc} \}$.
    The two unit laws use one free variable,
    and the associativity law uses three free variables.
\end{example}
Similar to the term signature functor in~\cref{def:signature-functor}, this produces a \emph{term arity functor} on $\Set$.
\begin{definition}[\alink{definition}{Term Arity Functor}{term-arity-functor}]
    \label{def:term-arity-functor}
    \begin{gather*}
        X \mapsto \dsum{e:\eqop}{X^{\eqfv(e)}}
        \qquad
        X \xto{f} Y \mapsto
        \dsum{e:\eqop}{X^{\eqfv(e)}}
        \xto{(e, \blank \comp f)}
        \dsum{e:\eqop}{Y^{\eqfv(e)}}
    \end{gather*}
\end{definition}
To build equations out of this,
we use the $\sig$-operations and construct trees for the left and right-hand sides of each equation using the
free variables available.
\begin{definition}[\alink{definition}{System of Equations}{system-of-equations}]
    \label{def:system-of-equations}
    A system of equations over a signature $(\sig,\eqsig)$ is a pair of natural transformations:
    \[
        \eqleft, \eqright \colon \EqSig \natto \str{F}_{\sig} \enspace.
    \]
    For any set (of variables) $V$,
    this gives a pair of functions $\eqleft_{V}, \eqright_{V}\colon \EqSig(V) \to \str{F}_{\sig}(V)$,
    and naturality ensures correctness of renaming.
\end{definition}
\begin{toappendix}
    \begin{example}
        Given $x\colon V$, $\eqleft_{V}(\term{unitl}, (x)),\,\eqright_{V}(\term{unitl}, (x))$
        are defined as:
        \begin{center}
            \begin{tikzcd}[ampersand replacement=\&,cramped]
                e \&\& x \\
                \& \mult \\
                \& {e \mult x}
                \arrow[no head, from=1-1, to=2-2]
                \arrow[no head, from=1-3, to=2-2]
                \arrow[no head, from=2-2, to=3-2]
            \end{tikzcd}
            \hspace{2em}
            \begin{tikzcd}[ampersand replacement=\&,cramped]
                x
            \end{tikzcd}
        \end{center}
        Given $x\colon V$, $\eqleft_{V}(\term{unitr}, (x)),\,\eqright_{V}(\term{unitr}, (x))$
        are defined as:
        \begin{center}
            \begin{tikzcd}[ampersand replacement=\&,cramped]
                x \&\& e \\
                \& \mult \\
                \& {x \mult e}
                \arrow[no head, from=1-1, to=2-2]
                \arrow[no head, from=1-3, to=2-2]
                \arrow[no head, from=2-2, to=3-2]
            \end{tikzcd}
            \hspace{2em}
            \begin{tikzcd}[ampersand replacement=\&,cramped]
                x
            \end{tikzcd}
        \end{center}
        Given $x, y, z : V$, $\eqleft_{V}(\term{assocr}, (x,y,z)),\,\eqright_{V}(\term{assocr}, (x,y,z))$
        are defined as:
        \begin{center}
            \scalebox{0.7}{
                \begin{tikzcd}[ampersand replacement=\&,cramped]
                    x \&\& y \&\& z \\
                    \&\&\& {\mult} \\
                    \&\& {\mult} \\
                    \&\& {x \mult (y \mult z)}
                    \arrow[no head, from=3-3, to=4-3]
                    \arrow[no head, from=2-4, to=3-3]
                    \arrow[no head, from=1-3, to=2-4]
                    \arrow[no head, from=1-5, to=2-4]
                    \arrow[no head, from=1-1, to=3-3]
                \end{tikzcd}
                \hspace{2em}
                \begin{tikzcd}[ampersand replacement=\&,cramped]
                    x \&\& y \&\& z \\
                    \& {\mult} \\
                    \&\& {\mult} \\
                    \&\& {(x \mult y) \mult z}
                    \arrow[no head, from=3-3, to=4-3]
                    \arrow[no head, from=1-5, to=3-3]
                    \arrow[no head, from=1-1, to=2-2]
                    \arrow[no head, from=1-3, to=2-2]
                    \arrow[no head, from=2-2, to=3-3]
                \end{tikzcd}
            }
        \end{center}
    \end{example}
\end{toappendix}
Finally, we say how a given $\sig$-structure $\str{X}$
\emph{satisfies} the system of equations $T_{(\sig,\eqsig)}$.
We need to assign a value to each free variable in the equation
in the carrier set, using a valuation function $\rho : V \to X$.
Given such an assignment, we evaluate the left and right trees of the equation,
by extending $\rho\,$ with~\cref{def:universal-extension},
that is by construction, a homomorphism from $\str{F}(V)$ to $\str{X}$.
To satisfy an equation, these two evaluations should agree.
\begin{definition}[\alink{definition}{Satisfaction $\str{X} \entails T$}{satisfaction}]
    A $\sig$-structure $\str{X}$ satisfies the system of equations $T_{(\sig,\eqsig)}$ if for every set $V$,
    and every assignment $\rho : V \to X$, $\ext{\rho}$ is a (co)fork:
    \[\begin{tikzcd}[ampersand replacement=\&,cramped]
            {\EqSig(V)} \&\& {\str{F}(V)} \&\& {\str{X}}
            \arrow["{\ext{\rho}}", from=1-3, to=1-5]
            \arrow["{\eqleft_{V}}", shift left=3, from=1-1, to=1-3]
            \arrow["{\eqright_{V}}"', shift right=3, from=1-1, to=1-3]
        \end{tikzcd}\]
\end{definition}
Given a signature $(\sig,\eqsig)$ with a system of equations $T_{(\sig,\eqsig)}$,
the $\sig$-algebras satisfying $T_{(\sig,\eqsig)}$, or varieties, form a full subcategory of $\sigAlg$.
While free algebras exist for any signature $\sig$ as in~\cref{def:free-algebra-construction},
constructions of free varieties for arbitrary systems of equations require non-constructive
principles~\cite[\S~7, pg.142]{blassWordsFreeAlgebras1983},
in particular, the arity sets need to be projective -- we do not pursue this matter further.
%
%
We have a rudimentary framework for universal algebra and equational logic,
which gives us enough tools to develop the next sections.

\section{Constructions of Free Monoids}
\label{sec:monoids}

In this section, we consider various constructions of free monoids in type theory,
with proofs of their universal property.
Since each construction satisfies the same categorical universal property,
by~\cref{lem:free-algebras-unique},
these are canonically equivalent (hence equal, by univalence) as types (and as monoids),
allowing us to transport proofs between them.
There are countably many such representations of free monoids, allowing us to
choose the most convenient one for a given task.
Using the universal property allows us to define and prove our programs correct in one go,
which we use in~\cref{sec:application}.

\subsection{Lists}
\label{sec:lists}

Cons-lists are simple inductive datatypes, well-known to functional programmers,
and are a common representation of free monoids in programming languages.
More generally, they correspond to
list objects~\cite{cockettListarithmeticDistributiveCategories1990,maiettiJoyalsArithmeticUniverse2010}
and algebraically-free monoids~\cite{kellyUnifiedTreatmentTransfinite1980} in category theory.
\begin{definition}[\alink{definition}{Lists}{lists}]
    \label{def:lists}
    $\type{List(A)}$ is generated by the following constructors:
    \begin{align*}
        \nil              & : \List(A)                    \\
        \blank\cons\blank & : A \to \List(A) \to \List(A)
    \end{align*}
\end{definition}
The (universal) generators map is the singleton: $\eta_A(a) \defeq [a] \jdgeq a \cons \nil$,
the identity element is the empty list~\inline{nil},
and the monoid multiplication $\mult$ is given by list concatenation.
\begin{toappendix}
    \begin{definition}[\alink{definition}{Concatenation}{list-concatenation}]
        We define the concatenation operation $\mult : \List(A) \to \List(A) \to \List(A)$,
        by recursion on the first argument:
        \begin{align*}
            \nil \mult \ys          & = \ys                     \\
            (x \cons \xs) \mult \ys & = x \cons (\xs \mult \ys)
        \end{align*}
    \end{definition}
    The proof that $\mult$ satisfies monoid laws is straightforward (see the formalisation for details).

    \begin{definition}[\alink{definition}{Universal extension}{list-universal-extension}]
        For any monoid $\str{X}$, and given a map $f : A \to X$,
        we define the extension $\ext{f} : \List(A) \to \mathfrak{X}$ by recursion on the list:
        \begin{align*}
            \ext{f}(\nil)        & = e                       \\
            \ext{f}(x \cons \xs) & = f(x) \mult \ext{f}(\xs)
        \end{align*}
    \end{definition}

    \begin{proposition}[\alink{proposition}{}{ext-lifts-homomorphism}]
        $\ext{(\blank)}$ lifts a function $f : A \to X$ to a monoid homomorphism $\ext{f} : \List(A) \to \mathfrak{X}$.
    \end{proposition}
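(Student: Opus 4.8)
The plan is to verify that the recursively-defined map $\ext{f} : \List(A) \to \str{X}$ from the preceding definition preserves the monoid structure, i.e.\ that it sends $\nil$ to the identity $e$ and commutes with the multiplication $\doubleplus$ on the source and $\mult$ on the target. Preservation of the unit is immediate: by the defining clause $\ext{f}(\nil) = e$, which is exactly the required equation $\ext{f}(\nil) = e$.

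The substantive part is multiplicativity: for all $\xs, \ys : \List(A)$, we must show
\[
  \ext{f}(\xs \doubleplus \ys) = \ext{f}(\xs) \mult \ext{f}(\ys).
\]
I would prove this by induction on $\xs$, since both $\doubleplus$ and $\ext{f}$ recurse on their first (resp.\ only) list argument.
\begin{itemize}
  \item \emph{Base case} $\xs = \nil$: the left side is $\ext{f}(\nil \doubleplus \ys) = \ext{f}(\ys)$ using the defining clause of $\doubleplus$, while the right side is $\ext{f}(\nil) \mult \ext{f}(\ys) = e \mult \ext{f}(\ys) = \ext{f}(\ys)$ by the unit law of $\str{X}$. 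These agree.
  \item \emph{Inductive step} $\xs = x \cons \xs'$, with the hypothesis $\ext{f}(\xs' \doubleplus \ys) = \ext{f}(\xs') \mult \ext{f}(\ys)$: we compute
  \[
    \ext{f}((x \cons \xs') \doubleplus \ys)
    = \ext{f}(x \cons (\xs' \doubleplus \ys))
    = f(x) \mult \ext{f}(\xs' \doubleplus \ys)
    = f(x) \mult (\ext{f}(\xs') \mult \ext{f}(\ys)),
  \]
  using the defining clause of $\doubleplus$, then the defining clause of $\ext{f}$, then the inductive hypothesis. By associativity in $\str{X}$ this equals $(f(x) \mult \ext{f}(\xs')) \mult \ext{f}(\ys) = \ext{f}(x \cons \xs') \mult \ext{f}(\ys)$, which is the right-hand side.
\end{itemize}

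The only place anything nontrivial happens is the inductive step, where we must invoke associativity of $\mult$ to reassociate $f(x) \mult (\ext{f}(\xs') \mult \ext{f}(\ys))$; everything else is unfolding of definitions and the unit law. So the ``main obstacle'' is really just bookkeeping: making sure the monoid laws of $\str{X}$ (which hold because $\str{X}$ is a monoid, not merely a premonoid) are available and applied in the right spots. Since $\List(A)$ is a set (being an inductive type built from a set $A$ via standard constructors) and $X$ is a set, the propositional equalities above suffice and there are no higher coherence conditions to check; the homomorphism condition is a mere proposition, so the induction goes through cleanly.
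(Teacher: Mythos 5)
Your proof is correct and follows essentially the same route as the paper's: induction on the first list argument, with the unit law of $\str{X}$ in the base case and associativity of $\mult$ in the cons case. The only (harmless) difference is that you use the defining clause $(x \cons \xs) \doubleplus \ys = x \cons (\xs \doubleplus \ys)$ directly, whereas the paper's write-up takes a small detour through $[x] \doubleplus \xs$ and associativity of concatenation before reaching the same point.
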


    \begin{proof}
        To show that $\ext{f}$ is a monoid homomorphism,
        we need to show $\ext{f}(\xs \mult \ys) = \ext{f}(\xs) \mult \ext{f}(\ys)$.
        We do so by induction on $\xs$.

        Case $\nil$:
        \begin{align*}
            \ext{f}(\nil \mult \ys)
             & = \ext{f}(\ys)                     & \text{by definition of concatenation}    \\
             & = e \mult \ext{f}(\ys)             & \text{by unit law}                       \\
             & = \ext{f}(\nil) \mult \ext{f}(\ys) & \text{by definition of $\ext{(\blank)}$}
        \end{align*}

        Case $x \cons \xs$:
        \begin{align*}
            \ext{f}((x \cons \xs) \mult \ys)
             & = \ext{f}(([ x ] \mult \xs) \mult \ys)         & \text{by definition of concatenation}    \\
             & = \ext{f}([ x ] \mult (\xs \mult \ys))         & \text{by associativity}                  \\
             & = \ext{f}(x \cons (\xs \mult \ ys))            & \text{by definition of concatenation}    \\
             & = f(x) \mult \ext{f}(\xs \mult \ys)            & \text{by definition of $\ext{(\blank)}$} \\
             & = f(x) \mult (\ext{f}(\xs) \mult \ext{f}(\ys)) & \text{by induction}                      \\
             & = (f(x) \mult \ext{f}(\xs)) \mult \ext{f}(\ys) & \text{by associativity}                  \\
             & = \ext{f}(x \cons \xs) \mult \ext{f}(\ys)      & \text{by definition of $\ext{(\blank)}$} \\
        \end{align*}
        Therefore, $\ext{(\blank)}$ does correctly lift a function to a monoid homomorphism.
    \end{proof}
\end{toappendix}
\begin{propositionrep}[\alink{proposition}{Universal property for List}{universal-property-for-list}]
    $(\List(A),\eta_A)$ is the free monoid on $A$.
\end{propositionrep}

\begin{proof}
    To show that $\ext{(\blank)}$ is an inverse to $\blank \comp \eta_A$,
    we first show $\ext{(\blank)}$ is the right inverse to $\blank \comp \eta_A$.
    For all $f$ and $x$, $(\ext{f} \circ \eta_A)(x) = \ext{f}(x \cons \nil) = f(x) \mult e = f(x)$,
    therefore by function extensionality, for any $f$, $\ext{f} \circ \eta_A = f$,
    and $(\blank \circ \eta_A) \comp \ext{(\blank)} = id$.

    To show $\ext{(\blank)}$ is the left inverse to $\blank \comp \eta_A$, we need to prove
    for any monoid homomorphism $f : \List(A) \to \mathfrak{X}$, $\ext{(f \comp \eta_A)}(\xs) = f(\xs)$.
    We can do so by induction on $\xs$.

    Case $\nil$:
    \begin{align*}
        \ext{(f \comp \eta_A)}(\nil)
         & = e       & \text{by definition of $\ext{(\blank)}$} \\
         & = f(\nil) & \text{by homomorphism properties of $f$}
    \end{align*}

    Case $x \cons \xs$:
    \begin{align*}
        \ext{(f \comp \eta_A)}(x \cons \xs)
         & = (f \comp \eta_A)(x) \mult \ext{(f \comp \eta_A)}(\xs) & \text{by definition of $\ext{(\blank)}$} \\
         & = (f \comp \eta_A)(x) \mult f(\xs)                      & \text{by induction}                      \\
         & = f([x]) \mult f(\xs)                                   & \text{by definition of $\eta_A$}         \\
         & = f([x] \mult \xs)                                      & \text{by homomorphism properties of $f$} \\
         & = f(x \cons \xs)                                        & \text{by definition of concatenation}
    \end{align*}
    By function extensionality, $\ext{(\blank)} \comp (\blank \circ \eta_A) = id$.
    Therefore, $\ext{(\blank)}$ and $(\blank) \circ [\_]$ are inverse of each other.

    We have now shown that $(\blank) \comp \eta_A$ is an equivalence from
    monoid homomorphisms $\List(A) \to \mathfrak{X}$
    to set functions $A \to X$, and its inverse is given by $\ext{(\blank)}$, which maps set
    functions $A \to X$ to monoid homomorphisms $\List(A) \to \mathfrak{X}$. Therefore, $\List$ is indeed
    the free monoid.
\end{proof}

\subsection{Arrays}
\label{sec:arrays}

An alternate (non-inductive) representation of the free monoid on a carrier set,
or alphabet $A$, is $A^{\ast}$,
the set of all finite strings or sequences of characters \emph{drawn} from $A$,
as in~\cite{dubucFreeMonoids1974}.
We call this an \emph{array},
which is a pair of a natural number $n$, denoting the length of the array,
and a lookup (or index) function $\Fin[n] \to A$, mapping indices to elements of $A$.
In type theory, this is also understood as an extension of a container~\cite{abbottCategoriesContainers2003},
where the type of shapes is a $n\colon\Nat$, and the type of positions for each shape is $\Fin[n]$,
where the array elements are stored.
\begin{definition}[\alink{definition}{Arrays}{arrays}]
    \label{def:arrays}
    \[
        \Array(A) \defeq \dsum{n : \Nat}{(\Fin[n] \to A)}
    \]
\end{definition}
For example, $(3, \lambda\{ 0 \mapsto 3, 1 \mapsto 1, 2 \mapsto 2 \})$
represents the same list as $[3, 1, 2]$.
The (universal) generators map is the singleton: $\eta_A(a) = (1, \lambda\{ 0 \mapsto a \})$,
the identity element is $(0, \lambda\{\})$
and the monoid operation $\mult$ is given by array concatenation.
\begin{toappendix}
    \begin{lemma}[\alink{lemma}{}{array-zero-is-id}]
        \label{array:zero-is-id}
        Zero-length arrays $(0, f)$ are contractible.
    \end{lemma}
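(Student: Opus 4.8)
The plan is to show that the type $\dsum{f : \Fin[0] \to A}{\top}$—that is, the type of zero-length arrays—is contractible by exhibiting a center and contracting every element to it. The center of contraction will be the unique function $\lambda\{\}$ out of $\Fin[0] \cong \emptyt$, which exists since $\emptyt$ is initial in $\Set$. Concretely, I would take $(0, \mathtt{rec}_{\emptyt})$ as the center, where $\mathtt{rec}_{\emptyt} : \Fin[0] \to A$ is the eliminator of the empty type.

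First I would note that since the natural-number component is fixed at $0$, a zero-length array is really just an element of $\Fin[0] \to A$, so contractibility of $(0, f)$ reduces to contractibility of the function type $\Fin[0] \to A$. Then I would invoke function extensionality: for any $f, g : \Fin[0] \to A$, to show $f = g$ it suffices to show $f(i) = g(i)$ for all $i : \Fin[0]$, and this holds vacuously by $\emptyt$-elimination since $\Fin[0]$ has no elements. In particular every $f : \Fin[0] \to A$ equals the chosen center, which establishes contractibility. A small bookkeeping point is that the identification of the $\Nat$-components is simply $\refl$ on $0$, so assembling the pair identity in $\Array(A)$ is a routine application of the characterization of paths in a $\Sigma$-type.

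I do not expect any genuine obstacle here; the only thing to be careful about is that contractibility is a statement about paths in a $\Sigma$-type, so one must combine $\refl_0$ with the funext-derived path on the second component rather than conflating the two. Everything else follows from the initiality of $\emptyt$ (equivalently $\Fin[0]$) and function extensionality.
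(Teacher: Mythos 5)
Your proposal is correct and matches the paper's own argument: both take $(0,\lambda\{\})$ as the canonical zero-length array and show any $f : \Fin[0] \to A$ equals $\lambda\{\}$ by function extensionality plus empty-type elimination. Your extra remark about assembling the identification in the $\Sigma$-type from $\refl$ on $0$ and the funext-derived path is a detail the paper leaves implicit, but it is the same proof.
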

    \begin{proof}
        We need to show $f : \Fin[0] \to A$ is equal to $\lambda\{\}$.
        By function extensionality this amounts to showing for all $x : \emptyt$, $f(x) = (\lambda\{\})(x)$,
        which holds by absurdity elimination on $x$.
        Therefore, any array $(0, f)$ is equal to $(0, \lambda\{\})$.
    \end{proof}
\end{toappendix}
\begin{definition}[\alink{definition}{Concatenation}{concatenation}]
    The concatenation operation $\mult$, 
    is defined below, where $\oplus : (\Fin[n] \to A) \to (\Fin[m] \to A) \to (\Fin[n+m] \to\nolinebreak A)$
    combines the two lookup functions.
    \begin{gather*}
        (n , f) \mult (m , g) = (n + m , f \oplus g)
        \qquad
        (f \oplus g)(k)             = \begin{cases}
            f(k)     & \text{if}\ k < n \\
            g(k - n) & \text{otherwise}
        \end{cases}
    \end{gather*}
\end{definition}
\begin{toappendix}
    \begin{proposition}[\alink{proposition}{}{arrays-monoid}]
        $(\Array(A), \mult)$ is a monoid.
    \end{proposition}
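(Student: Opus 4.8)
The plan is to verify the three monoid laws directly from the definition of $\doubleplus$ and $\oplus$, using \cref{array:zero-is-id} to discharge the unit laws cheaply. Throughout, an element of $\Array(A)$ is a pair $(n, f)$ with $f : \Fin[n] \to A$, and an equality of arrays $(n, f) = (m, g)$ is, by the characterisation of paths in a $\Sigma$-type, a pair of a path $p : n = m$ in $\Nat$ together with a path $f = g \circ (\mathsf{transport}^p)$ over it; since $\Nat$ is a set, the first component is unique, so in practice I only need to produce the arithmetic identity on lengths and then a proof that the lookup functions agree after transporting along it.

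First I would handle the \emph{left unit law}: $(0, \lambda\{\}) \doubleplus (n, f) = (0 + n, (\lambda\{\}) \oplus f)$. The length is $0 + n \jdgeq n$ definitionally, and $((\lambda\{\}) \oplus f)(k)$ falls into the ``otherwise'' branch for every $k$ (no index is $< 0$), yielding $f(k - 0) = f(k)$; so the lookup functions agree on the nose and the law holds without transport. For the \emph{right unit law}, $(n, f) \doubleplus (0, \lambda\{\}) = (n + 0, f \oplus (\lambda\{\}))$: here the length identity $n + 0 = n$ is only propositional, so I transport along it, and then for $k < n$ the combined lookup returns $f(k)$ as required; alternatively, appeal to \cref{array:zero-is-id} to note that the second argument is forced and the computation collapses. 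Either way this is routine.

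The substantive step is \emph{associativity}: $((n, f) \doubleplus (m, g)) \doubleplus (p, h) = (n, f) \doubleplus ((m, g) \doubleplus (p, h))$. On lengths this is $(n + m) + p = n + (m + p)$, the associativity of natural-number addition, which furnishes the required $\Sigma$-path in its first component. Transporting along it, I must check that $(f \oplus g) \oplus h$ and $f \oplus (g \oplus h)$ agree at every index $k : \Fin[(n+m)+p]$. I would split into the three cases $k < n$, $n \le k < n + m$, and $k \ge n + m$: in the first both sides return $f(k)$; in the second both return $g(k - n)$ (unfolding the nested conditionals and using $k - n < m$); in the third both return $h(k - (n + m)) = h((k - n) - m)$, which needs the identity $k - (n+m) = (k-n)-m$ on natural-number subtraction together with $k - n \ge m$. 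The bookkeeping with the guards of the \texttt{case} expressions and the off-by-one reasoning about $\Fin$ and truncated subtraction is where the real work lies; I expect this case analysis — and making the transports line up so that the conditionals reduce in the same branch on both sides — to be the main obstacle, though none of it is conceptually deep. A cleaner alternative I would keep in reserve is to prove associativity via the universal property instead: once the array concatenation is known to model concatenation of the corresponding lists (\cref{def:arrays} already notes this correspondence), associativity is inherited from $\List$; but since the universal property for $\Array$ has not yet been established at this point in the development, I would carry out the direct verification here.
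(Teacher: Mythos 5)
Your proof is correct and follows essentially the same route as the paper: direct verification of the unit laws (with $0+n$ definitional, $n+0$ handled up to the propositional length identity) and associativity by the same three-way case split on the index ($k<n$, $n\le k<n+m$, $k\ge n+m$), using $k-n<m$ and $k-(n+m)=(k-n)-m$ in the latter two cases. The extra care you take about $\Sigma$-type paths and transports is just a more explicit account of bookkeeping the paper leaves implicit, not a different argument.
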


    \begin{proof}
        To show $\Array$ satisfies left unit,
        we want to show $(0, \lambda\{\}) \mult (n, f) = (n, f)$.
        \begin{align*}
            (0 , \lambda\{\}) \mult (n , f) & = (0 + n , \lambda\{\} \oplus f)          \\
            (\lambda\{\} \oplus f)(k)       & = \begin{cases}
                                                    (\lambda\{\})(k) & \text{if}\ k < 0 \\
                                                    f(k - 0)         & \text{otherwise}
                                                \end{cases}
        \end{align*}
        It is trivial to see the length matches: $0 + n = n$. We also need to show $\lambda\{\} \oplus f = f$.
        Since $n < 0$ for any $n : \mathbb{N}$ is impossible, $(\lambda\{\} \oplus f)(k)$ would always reduce to
        $f(k - 0) = f(k)$, therefore $(0, \lambda\{\}) \mult (n, f) = (n, f)$.

        To show $\Array$ satisfies right unit,
        we want to show $(n, f) \mult (0, \lambda\{\}) = (n, f)$.
        \begin{align*}
            (n, f) \mult (0 , \lambda\{\}) & = (n + 0, f \oplus \lambda\{\})               \\
            (f \oplus \lambda\{\})(k)      & = \begin{cases}
                                                   f(k)                 & \text{if}\ k < n \\
                                                   (\lambda\{\})(k - 0) & \text{otherwise}
                                               \end{cases}
        \end{align*}
        It is trivial to see the length matches: $n + 0 = n$. We also need to show $f \oplus \lambda\{\} = f$.
        We note that the type of $f \oplus \lambda\{\}$ is $\Fin[n + 0] \to A$, therefore $k$ is of the type $\Fin[n + 0]$.
        Since $\Fin[n+0] \cong \Fin[n]$, it must always hold that $k < n$, therefore $(f \oplus \lambda\{\})(k)$ must
        always reduce to $f(k)$. Thus, $(n, f) \mult (0, \lambda\{\}) = (n, f)$.

        For associativity, we want to show for any array $(n, f)$, $(m, g)$, $(o, h)$,
        $((n, f) \mult (m, g)) \mult (o, h) = (n, f) \mult ((m, g) \mult (o, h))$.
        \begin{align*}
            ((n, f) \mult (m, g)) \mult (o, h) & = ((n + m) + o, (f \oplus g) \oplus h)                        \\
            ((f \oplus g) \oplus h)(k)         & = \begin{cases}
                                                       \begin{cases}
                                                           f(k)     & \text{if}\ k < n \\
                                                           g(k - n) & \text{otherwise}
                                                       \end{cases}
                                                                      & \text{if}\ k < n + m \\
                                                       h(k - (n + m)) & \text{otherwise}
                                                   \end{cases}                   \\
            (n, f) \mult ((m, g) \mult (o, h)) & = (n + (m + o), f \oplus (g \oplus h))                        \\
            (f \oplus (g \oplus h))(k)         & = \begin{cases}
                                                       f(k)                                   & \ \text{k < n} \\
                                                       \begin{cases}
                                                           g(k - n)     & \text{k - n < m} \\
                                                           h(k - n - m) & \text{otherwise} \\
                                                       \end{cases} & \text{otherwise}
                                                   \end{cases}
        \end{align*}
        We first case split on $k < n + m$ then $k < n$.

        Case $k < n + m$, $k < n$: Both $(f \oplus (g \oplus h))(k)$ and $((f \oplus g) \oplus h)(k)$ reduce to $f(k)$.

        Case $k < n + m$, $k \geq n$: $((f \oplus g) \oplus h)(k)$ reduce to $g(k - n)$ by definition.
        To show $(f \oplus (g \oplus h))(k)$ would also reduce to $g(k - n)$, we first need to show $\neg(k < n)$,
        which follows from $k \geq n$. We then need to show $k - n < m$.
        This can be done by simply subtracting $n$ from both side on $k < n + m$, which is well defined since $k \geq n$.

        Case $k \geq n + m$: $((f \oplus g) \oplus h)(k)$ reduce to $h(k - (n + m))$ by definition.
        To show $(f \oplus (g \oplus h))(k)$ would also reduce to $h(k - (n + m))$,
        we first need to show $\neg(k < n)$, which follows from $k \geq n + m$.
        We then need to show $\neg(k - n < m)$, which also follows from $k \geq n + m$.
        We now have $(f \oplus (g \oplus h))(k) = h(k - n - m)$. Since $k \geq n + m$, $h(k - n - m)$ is well defined
        and is equal to $h(k - (n + m))$, therefore $(f \oplus (g \oplus h))(k) = (f \oplus g) \oplus h)(k) = h(k - (n + m))$.
        In all cases $(f \oplus (g \oplus h))(k) = ((f \oplus g) \oplus h)(k)$, therefore associativity holds.
    \end{proof}
    For brevity, we write $S$ for the successor function on natural numbers.
\end{toappendix}
We need two auxiliary lemmas about array concatenation to do induction on arrays.
\begin{lemmarep}[\alink{lemma}{Array cons}{array-eta-suc}]
    \label{array:eta-suc}
    Any array $(S(n), f)$ is equal to $\eta_A (f(0)) \mult (n, f \comp S)$.
\end{lemmarep}

\begin{proof}
    We want to show $\eta_A (f(0)) \mult (n, f \comp S) = (S(n), f)$.
    \begin{align*}
        (1, \lambda\{ 0 \mapsto f(0) \}) \mult (n , f \comp S) &
        = (1 + n, \lambda\{ 0 \mapsto f(0) \} \oplus (f \comp S))                                            \\
        (\lambda\{ 0 \mapsto f(0) \} \oplus (f \comp S))(k)    & = \begin{cases}
                                                                       f(0)               & \text{if}\ k < 1 \\
                                                                       (f \comp S)(k - 1) & \text{otherwise}
                                                                   \end{cases}
    \end{align*}
    It is trivial to see the length matches: $1 + n = S(n)$. We need to show
    $(\lambda\{ 0 \mapsto f(0) \} \oplus (f \comp S)) = f$.
    We prove by case splitting on $k < 1$.
    On $k < 1$, $(\lambda\{ 0 \mapsto f(0) \} \oplus (f \comp S))(k)$ reduces to $f(0)$.
    Since, the only possible for $k$ when $k < 1$ is 0, $(\lambda\{ 0 \mapsto f(0) \} \oplus (f \comp S))(k) = f(k)$
    when $k < 1$.
    On $k \geq 1$, $(\lambda\{ 0 \mapsto f(0) \} \oplus (f \comp S))(k)$ reduces to $(f \comp S)(k - 1) = f(S(k - 1))$.
    Since $k \geq 1$, $S(k - 1) = k$, therefore $(\lambda\{ 0 \mapsto f(0) \} \oplus (f \comp S))(k) = f(k)$
    when $k \geq 1$.
    Thus, in both cases, $(\lambda\{ 0 \mapsto f(0) \} \oplus (f \comp S)) = f$.
\end{proof}

\begin{lemmarep}[\alink{lemma}{Array split}{array-split}]
    \label{array:split}
    For any array $(S(n), f)$ and $(m, g)$,
    \[
        (n + m, (f \oplus g) \comp S) = (n, f \comp S) \mult (m, g)
        \enspace .
    \]
\end{lemmarep}
\begin{proof}
    It is trivial to see both array have length $n + m$.
    We want to show that:
    \[
        (f \oplus g) \comp S = (f \comp S) \oplus g \enspace.
    \]
    \begin{align*}
        ((f \oplus g) \comp S)(k) & = \begin{cases}
                                          f(S(k))        & \text{if}\ S(k) < S(n) \\
                                          g(S(k) - S(n)) & \text{otherwise}
                                      \end{cases} \\
        ((f \comp S) \oplus g)(k) & = \begin{cases}
                                          (f \comp S)(k) & \text{if}\ k < n \\
                                          g(k - n)       & \text{otherwise}
                                      \end{cases}
    \end{align*}
    We prove by case splitting on $k < n$.
    On $k < n$, $((f \oplus g) \comp S)(k)$ reduces to $f(S(k))$ since $k < n$ implies $S(k) < S(n)$,
    and $((f \comp S) \oplus g)(k)$ reduces to $(f \comp S)(k)$ by definition, therefore they are equal.
    On $k \geq n$, $((f \oplus g) \comp S)(k)$ reduces to $g(S(k) - S(n)) = g(k - n)$,
    and $((f \comp S) \oplus g)(k)$ reduces to $g(k - n)$ by definition, therefore they are equal.
\end{proof}
\begin{toappendix}
    Informally, given a non-empty array $u$ and any array $v$,
    concatenating $u$ with $v$ then dropping the first element is the same as
    dropping the first element of $u$ then concatenating with $v$.
\end{toappendix}
\begin{definition}[\alink{definition}{Universal extension}{array-universal-extension}]
    Given a monoid $\mathfrak{X}$, and a map $f : A \to X$,
    we define $\ext{f} : \Array(A) \to X$, by induction on the length of the array:
    \begin{align*}
        \ext{f}(0 , g)    & = e                                    \\
        \ext{f}(S(n) , g) & = f(g(0)) \mult \ext{f}(n , g \circ S)
    \end{align*}
\end{definition}
\begin{toappendix}
    \begin{proposition}[\alink{proposition}{}{array-ext-lifts-homomorphism}]
        $\ext{(\blank)}$ lifts a function $f : A \to X$ to a monoid homomorphism $\ext{f} : \Array(A) \to \mathfrak{X}$.
    \end{proposition}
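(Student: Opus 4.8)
The plan is to mirror the \List\ case: show that $\ext{f}$ preserves the unit and the binary operation, establishing the multiplicative law by induction on the length of the first array. Preservation of the unit is immediate, since $\ext{f}(0,\lambda\{\}) = e$ holds by definition. The substantive goal is
$\ext{f}\bigl((n,g)\doubleplus(m,h)\bigr) = \ext{f}(n,g)\mult\ext{f}(m,h)$
for all arrays $(n,g)$ and $(m,h)$, which I would prove by induction on $n$.

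For the base case $n\jdgeq 0$, the left-hand side is $\ext{f}\bigl((0,g)\doubleplus(m,h)\bigr) = \ext{f}(m,\lambda\{\}\oplus h)$. By the left-unit computation already carried out for $\doubleplus$ (equivalently, \cref{array:zero-is-id} together with $\lambda\{\}\oplus h = h$), this equals $\ext{f}(m,h)$, while the right-hand side is $\ext{f}(0,g)\mult\ext{f}(m,h) = e\mult\ext{f}(m,h) = \ext{f}(m,h)$ by the unit law of $\str{X}$, so the two sides agree.

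For the inductive step $n\jdgeq S(n')$, I would unfold the definitions. Concatenation gives $(S(n'),g)\doubleplus(m,h) = (S(n'+m),\, g\oplus h)$, so by the defining clause of $\ext{f}$ the left-hand side is $f\bigl((g\oplus h)(0)\bigr)\mult\ext{f}\bigl(n'+m,\,(g\oplus h)\comp S\bigr)$. Here $(g\oplus h)(0) = g(0)$ since $0 < S(n')$, and by \cref{array:split} the tail array satisfies $(n'+m,\,(g\oplus h)\comp S) = (n',\,g\comp S)\doubleplus(m,h)$. Applying the induction hypothesis to this concatenation rewrites the left-hand side as $f(g(0))\mult\bigl(\ext{f}(n',g\comp S)\mult\ext{f}(m,h)\bigr)$; reassociating with the associativity law of $\str{X}$ and folding the definition of $\ext{f}$ back on $(S(n'),g)$ yields $\ext{f}(S(n'),g)\mult\ext{f}(m,h)$, as required. (The auxiliary \cref{array:eta-suc} gives the same rewriting of $\ext{f}$ on a successor-length array, but is not strictly needed.)

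I expect the main friction to be bookkeeping around the arithmetic of lengths and the transports it induces — matching $S(n')+m$ with $S(n'+m)$ and keeping the domains of the lookup functions aligned — rather than any conceptual difficulty; with $+$ defined by recursion on its first argument these coincide definitionally, so \cref{array:split} applies on the nose. The only genuine appeal to function extensionality is the base-case identity $\lambda\{\}\oplus h = h$, which was already established in the proof that $(\Array(A),\doubleplus)$ is a monoid.
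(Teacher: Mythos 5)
Your proposal is correct and follows essentially the same route as the paper: preservation of the unit is definitional, and the multiplication law is proved by induction on the length of the first array, using the contractibility of zero-length arrays in the base case and, in the successor case, the facts that $(g \oplus h)(0) = g(0)$ and that the shifted concatenation decomposes via \cref{array:split}, followed by the induction hypothesis and associativity. Your remark that \cref{array:eta-suc} is not needed here also matches the paper, which reserves it for the proof of the universal property.
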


    \begin{proof}
        To show that $\ext{f}$ is a monoid homomorphism,
        we need to show $\ext{f}(\xs \mult \ys) = \ext{f}(\xs) \mult \ext{f}(\ys)$.
        We can do so by induction on $\xs$.

        Case $(0, g)$:
        We have $g = \lambda\{\}$ by~\cref{array:zero-is-id}.
        \begin{align*}
             & \mathrel{\phantom{=}} \ext{f}((0, \lambda\{\}) \mult \ys)                                            \\
             & = \ext{f}(\ys)                                            & \text{by definition of concatenation}    \\
             & = e \mult \ext{f}(\ys)                                    & \text{by left unit}                      \\
             & = \ext{f}(0, \lambda\{\}) \mult \ext{f}(\ys)              & \text{by definition of $\ext{(\blank)}$} \\
        \end{align*}

        Case $(S(n), g)$: Let $\ys$ be $(m, h)$.
        \begin{align*}
             & \mathrel{\phantom{=}} \ext{f}((S(n), g) \mult (m, h))                                                              \\
             & = \ext{f}(S(n + m), g \oplus h)                                 & \text{by definition of concatenation}            \\
             & = f((g \oplus h)(0)) \mult \ext{f}(n + m, (g \oplus h) \comp S) & \text{by definition of $\ext{(\blank)}$}         \\
             & = f(g(0)) \mult \ext{f}(n + m, (g \oplus h) \comp S)            & \text{by definition of $\oplus$, and $0 < S(n)$} \\
             & = f(g(0)) \mult \ext{f}((n, g \comp S) \mult (m, h))            & \text{by~\cref{array:split}}                     \\
             & = f(g(0)) \mult (\ext{f}(n, g \comp S) \mult \ext{f}(m, h)))    & \text{by induction}                              \\
             & = (f(g(0)) \mult \ext{f}(n, g \comp S)) \mult \ext{f}(m, h))    & \text{by associativity}                          \\
             & = \ext{f}(S(n), g) \mult \ext{f}(m, h))                         & \text{by definition of $\ext{(\blank)}$}         \\
        \end{align*}
        Therefore, $\ext{(\blank)}$ correctly lifts a function to a monoid homomorphism.
    \end{proof}
\end{toappendix}
\begin{propositionrep}[\alink{proposition}{Universal property for Array}{array-univ}]
    \label{array:univ}
    \leavevmode
    $(\Array(A),\eta_A)$ is the free monoid on $A$.
\end{propositionrep}
\begin{proofsketch}
    We need to show that $\ext{(\blank)}$ is an inverse to ${(\blank) \comp \eta_A}$.
    $\ext{f} \comp \eta_A = f$ for all set functions $f : A \to X$ holds trivially.
    To show $\ext{(f \comp \eta_A)} = f$ for all homomorphisms $f : \Array(A) \to \mathfrak{X}$,
    we need to show $\forall \xs.\, \ext{(f \comp \eta_A)}(\xs) = f(\xs)$.
    We use \cref{array:eta-suc,array:split} to do induction on $\xs$,
    similar to doing induction on the inductive datatype $\List(A)$.
\end{proofsketch}
\begin{proof}
    To show that $\ext{(\blank)}$ is an inverse to $\blank \comp \eta_A$,
    we first show $\ext{(\blank)}$ is the right inverse to $\blank \comp \eta_A$.
    For all $f$ and $x$, $(\ext{f} \circ \eta_A)(x) = \ext{f}(1, \lambda\{0 \mapsto x\}) = f(x) \mult e = f(x)$,
    therefore by function extensionality, for any $f$, $\ext{f} \circ \eta_A = f$,
    and $(\blank \circ \eta_A) \comp \ext{(\blank)} = id$.

    To show $\ext{(\blank)}$ is the left inverse to $\blank \comp \eta_A$, we need to prove
    for any monoid homomorphism $f : \Array(A) \to \mathfrak{X}$, $\ext{(f \comp \eta_A)}(\xs) = f(\xs)$.
    We can do so by induction on $\xs$.

    Case $(0, g)$:
    By~\cref{array:zero-is-id} we have $g = \lambda\{\}$.
    \begin{align*}
        \ext{(f \comp \eta_A)}(0, \lambda\{\})
         & = e                 & \text{by definition of $\ext{(\blank)}$} \\
         & = f(0, \lambda\{\}) & \text{by homomorphism properties of $f$}
    \end{align*}

    Case $(S(n), g)$, we prove it in reverse:
    \begin{align*}
         & \mathrel{\phantom{=}} f(S(n), g)                                                                               \\
         & = f(\eta_A(g(0)) \mult (n, g \comp S))                              & \text{by~\cref{array:eta-suc}}           \\
         & = f(\eta_A(g(0))) \mult f(n, g \comp S)                             & \text{by homomorphism properties of $f$} \\
         & = (f \comp \eta_A)(g(0)) \mult \ext{(f \comp \eta_A)}(n, g \comp S) & \text{by induction}                      \\
         & = \ext{(f \comp \eta_A)}(S(n), g)                                   & \text{by definition of $\ext{(\blank)}$}
    \end{align*}
    By function extensionality, $\ext{(\blank)} \comp (\blank \circ \eta_A) = id$.
    Therefore, $\ext{(\blank)}$ and $(\blank) \circ [\_]$ are inverse of each other.

    We have now shown that $(\blank) \comp \eta_A$ is an equivalence from
    monoid homomorphisms $\Array(A) \to \mathfrak{X}$
    to set functions $A \to X$, and its inverse is given by $\ext{(\blank)}$, which maps set
    functions $A \to X$ to monoid homomorphisms $\Array(A) \to \mathfrak{X}$. Therefore, $\Array$ is indeed
    the free monoid.
\end{proof}
An alternative proof of the universal property for $\Array$ can be given by directly constructing an equivalence (of
types, and monoid structures) between $\Array(A)$ and $\List(A)$ (using $\term{tabulate}$ and $\term{lookup}$), and then
applying univalence and transport (see formalisation).

\section{Constructions of Free Commutative Monoids}
\label{sec:commutative-monoids}

The next step is to add commutativity to the monoid multiplication in each construction of free monoids.
Informally, adding commutativity to free monoids turns ``ordered lists'' to ``unordered lists'',
where the ordering is the one imposed by the position or index of the elements in the list.
This is crucial to our goal of studying sorting,
as we will study sorting as a function mapping back unordered lists to ordered lists,
later in~\cref{sec:sorting}.

It is known that finite multisets are (free) commutative monoids,
under the operation of multiset union: $\xs \cup \ys = \ys \cup \xs$.
The order is ``forgotten'' in the sense that it doesn't matter how two multisets are union-ed together,
such as, $\bag{a, a, b, c}$ and $\bag{b, a, c, a}$ are equal as finite multisets
(justifying the bag notation).
Formally, $\bag{x,y,\dots}$ denotes $\eta_A(x) \mult \eta_A(y) \mult \dots : \MM(A)$.

This is unlike free monoids,
where $[a, a, b, c] \neq [b, a, c, a]$ (justifying the list notation).
Formally, $[x, y, \dots]$ denotes $\eta_A(x) \mult \eta_A(y) \mult \dots : \LL(A)$,
and $x \cons xs$ denotes $\eta_A(x) \mult xs : \LL(A)$.

\subsection{Free monoids quotiented by permutation relations}
\label{sec:cmon:qfreemon}

Instead of constructing free commutative monoids directly,
the first construction we study is to take \emph{any} free monoid and quotient by \emph{symmetries}.
The constructions in~\cref{sec:cmon:plist} and~\cref{sec:cmon:bag} are specific instances of this recipe.

From the perspective of universal algebra and equational logic developed in~\cref{sec:universal-algebra},
we work in the signature of monoids, but extend the equational theory of monoids with commutativity.
If $(\str{F}(A), \eta)$ is a free monoid construction satisfying its universal property,
then we quotient $F(A)$ by an \emph{appropriate} symmetry relation $\approx$.
This is precisely a \emph{permutation relation}.

\begin{definition}[\alink{definition}{Permutation relation}{permutation-relation}]
    \label{def:permutation-relation}
    \leavevmode
    A binary relation on a free monoid $(F(A), e, \mult)$ is a permutation relation iff it is:
    \begin{itemize}
        \item reflexive, symmetric, transitive (an equivalence),
        \item a congruence w.r.t $\mult$: $\forall a, b, c, d, \, a \approx b \to c \approx d \to a \mult c \approx b \mult d$,
        \item commutative: $\forall a, b, \, a \mult b \approx b \mult a$, and
        \item respects $\ext{(\blank)}$: $\forall a, b, f, \, a \approx b \to \ext{f}(a) = \ext{f}(b)$.
    \end{itemize}
\end{definition}
Note that being a permutation relation is a proposition.
We write $\quot{F(A)}{\approx}$ for the quotient of $F(A)$ by $\approx$,
with the inclusion map $q: F(A) \twoheadrightarrow \quot{F(A)}{\approx}$.
The generators map into $\quot{F(A)}{\approx}$ is given by $q \comp \eta_A$,
the identity element is $\qlift{e} = q(e)$,
and the multiplication operation $\mult$ is lifted to $\qlift{\mult}$ in the quotient by congruence.
\begin{propositionrep}[\alink{proposition}{}{qfreemon-cmonoid}]
    $(\quot{\str{F}(A)}{\approx}, \qlift{e}, \qlift{\mult})$ is a commutative monoid.
\end{propositionrep}
\begin{proof}
    Since $\approx$ is a congruence w.r.t. $\mult$,
    we can lift $\mult : F(A) \to F(A) \to F(A)$ to the quotient to obtain
    $\qlift{\mult} : \quot{F(A)}{\approx} \to \quot{F(A)}{\approx} \to \quot{F(A)}{\approx}$.
    $\qlift{\mult}$ also satisfies unit and associativity laws on quotiented elements,
    because these laws hold on representatives in $F(A)$.
    Commutativity of $\qlift{\mult}$ follows from the commutativity requirement of $\approx$.
    This makes $(\quot{F(A)}{\approx}, \qlift{e}, \qlift{\mult})$ a commutative monoid.
    Note also that $q(a \mult b) = q(a) \qlift{\mult} q(b)$.
\end{proof}
The monoid extension operation of $F(A)$, denoted $\ext{(\blank)}$,
lifts to an extension operation for commutative monoids, for $\quot{F(A)}{\approx}$,
which we denote by $\exthat{(\blank)}$.
\begin{definition}[\alink{definition}{}{qfreemon-ext}]
    Given a commutative monoid $\str{X}$ and a map $f : A \to X$,
    we define the extension operation as follows:
    we first obtain $\ext{f} : \str{F}(A) \to \str{X}$ by the universal property of $F$,
    then lift it to $\exthat{f} : \quot{\str{F}(A)}{\approx} \; \to \str{X}$,
    by using that $\approx$ respects $\ext{(\blank)}$, as in~\cref{def:permutation-relation}.
\end{definition}

\begin{propositionrep}[\alink{proposition}{Universal property for $\quot{\str{F}(A)}{\approx}$}{qfreemon-univ}]
    \label{prop:qfreemon}
    \leavevmode
    $(\quot{\str{F}(A)}{\approx}, \eta_A : {A \xto{\eta_A} \str{F}(A) \xto{q} \quot{\str{F}(A)}{\approx}})$
    with the extension operation $\exthat{(\blank)}$ forms the free commutative monoid on $A$.
\end{propositionrep}
\begin{proof}
    We have the following situation.
    \[\begin{tikzcd}[ampersand replacement=\&,cramped]
            {\quot{F(A)}{\approx}} \&\& X \\
            \\
            {F(A)} \&\& X \\
            \\
            A
            \arrow["{\exthat{f}}", from=1-1, to=1-3]
            \arrow["q", two heads, from=3-1, to=1-1]
            \arrow["{\ext{f}}", from=3-1, to=3-3]
            \arrow[equals, from=3-3, to=1-3]
            \arrow["{\eta_{A}}", from=5-1, to=3-1]
            \arrow["f"', from=5-1, to=3-3]
        \end{tikzcd}\]
    By definition of $\exthat{(\blank)}$, the upper square commutes, and so the whole diagram commutes.

    To see that $\exthat{(\blank)}$ is the right inverse to $(\blank) \comp q \comp \eta_A$,
    observe that $\exthat{f} \comp q \comp \eta_A = \ext{f} \comp \eta_A = f$.
    And,
    $\exthat{f}(q(x) \qlift{\mult} q(y))
        = \exthat{f}{(q(x \mult y))}
        = \ext{f}(x \mult y) = \ext{f}(x) \mult \ext{f}(y)
        = \exthat{f}(q(x)) \mult \exthat{f}(q(y))
    $,
    making $\exthat{f}$ a (commutative) monoid homomorphism.

    To show $\exthat{(\blank)}$ is the left inverse to $(\blank) \comp q \comp \eta_A$,
    we need to show for any commutative monoid homomorphism
    $f : \quot{\str{F}(A)}{\approx} \to \str{X}$, that $\exthat{(f \comp q \comp \eta_A)} = f$.
    It is enough to check on generators, so for any $x : F(A)$,
    we note that $\exthat{(f \comp q \comp \eta_A)}(q(x)) = \ext{(f \comp q \comp \eta_A)}(x) = f(q(x))$
    since $f \comp q$ is a monoid homomorphism, and we are done.

    We have now shown that $(\blank) \comp q \comp \eta_A$ produces an equivalence from
    commutative monoid homomorphisms $\quot{\str{F}(A)}{\approx} \to \str{X}$
    to set functions $A \to X$, and its inverse is given by $\exthat{(\blank)}$, which maps set
    functions $A \to X$ to commutative monoid homomorphisms $\quot{\str{F}(A)}{\approx} \to \str{X}$.
    Therefore, $\quot{\str{F}(A)}{\approx}$ is indeed the free commutative monoid on $A$
    with generators map $q \comp \eta_A$ and extension operation $\exthat{(\blank)}$.
\end{proof}

\subsection{Lists quotiented by permutation relations}
\label{sec:cmon:plist}

A specific instance of this construction is $\List$ quotiented by a concrete permutation relation to get commutativity.
Of course, there are many permutation relations in the literature,
we consider a simple one which swaps any two adjacent elements anywhere in the middle of the list.
This $\PList$ construction has also been considered in~\cite{joramConstructiveFinalSemantics2023},
who prove that $\PList$ is equivalent to the free commutative monoid (constructed as a HIT).
We give a direct proof of its universal property using our generalisation.
\begin{definition}[\alink{definition}{Perm}{perm}]
    The inductive family $\Perm$ is generated by two constructors
    and defines a permutation relation on $\List$:
    \begin{align*}
        \term{perm-refl} & : \forall \xs,\, \Perm\,\xs\,\xs \\
        \term{perm-swap} & : \forall x, y, \xs, \ys, \zs,\,
        \Perm\,(\xs \mult x \cons y \cons \ys)\,\zs
        \to \Perm\,(\xs \mult y \cons x \cons \ys)\,\zs
    \end{align*}
    where $\mult$ is the usual list concatenation operation.
    $\PList(A)$ is the quotient of $\List(A)$ by $\Perm$:
    \[
        \PList(A) \defeq \quot{\List(A)}{\Perm}
        \enspace.
    \]
\end{definition}
By~\cref{sec:cmon:qfreemon}, to obtain that $\PList(A)$ is the free commutative monoid on $A$, it suffices to prove the property of $\Perm(A)$ being a permutation relation.

\begin{propositionrep}[\alink{proposition}{Perm is a permutation relation}{perm-is-perm-relation}]
    \label{prop:perm-is-perm-relation}
    The relation $\Perm$ is a permutation relation on $\List(A)$.
\end{propositionrep}
\begin{proof}
    We must show that $\Perm$ is:
    \begin{enumerate}
        \item an equivalence relation (reflexive, symmetric, transitive): reflexivity is by construction ($\term{perm-refl}$). Symmetry and transitivity are straightforward by induction on the relation.
        \item a congruence w.r.t $\mult$: we want to show for any lists $\xs, \ys, \zs, \ws$, if $\xs \approx \ys$ and $\zs \approx \ws$, then $\xs \mult \zs \approx \ys \mult \ws$, which is proven by induction.
        \item commutative: $\xs \mult \ys \approx \ys \mult \xs$, proven by induction on the lists.
        \item respects $\ext{(\blank)}$: shown in~\cref{plist:sharp-sat}.
    \end{enumerate}
    Thus, $\Perm$ is a permutation relation.
\end{proof}

\begin{propositionrep}[\alink{proposition}{}{plist-resp-ext}]
    \label{plist:sharp-sat}
    Let $\str{X}$ be a commutative monoid, and $f : A \to X$.
    For $x,y : A$ and $\xs, \ys : \List(A)$,
    $\ext{f}(\xs\,\mult\,x \cons y \cons \ys) \id \ext{f}(\xs\,\mult\,y \cons x \cons \ys)$.
    Hence, $\Perm$ respects $\ext{(\blank)}$.
\end{propositionrep}
\begin{proof}
    We can prove this by induction on $\xs$. For $\xs = []$, by homomorphism properties of $\ext{f}$,
    we can prove $\ext{f}(x \cons y \cons \ys) = \ext{f}([ x ]) \mult \ext{f}([ y ]) \mult \ext{f}(\ys)$.
    Since the image of $\ext{f}$ is a commutative monoid, we have
    $\ext{f}([ x ]) \mult \ext{f}([ y ]) = \ext{f}([ y ]) \mult \ext{f}([ x ])$, therefore proving
    $\ext{f}(x \cons y \cons \ys) = \ext{f}(y \cons x \cons \ys)$. For $\xs = z \cons \zs$, we can prove
    $\ext{f}((z \cons \zs)\,\mult\,x \cons y \cons \ys) = \ext{f}([ z ]) \mult \ext{f}(\zs\,\mult\,x \cons y \cons \ys)$.
    We can then complete the proof by induction to obtain
    $\ext{f}(\zs\,\mult\,x \cons y \cons \ys) = \ext{f}(\zs\,\mult\,y \cons x \cons \ys)$ and reassembling
    back to $\ext{f}((z \cons \zs)\,\mult\,y \cons x \cons \ys)$ by homomorphism properties of $\ext{f}$.
\end{proof}
$\PList$ being a quotient of $\List$ makes it easy to
lift functions and properties defined on $\List$ along the quotient map.
The inductive nature of $\Perm$ makes it easy to define algorithms that inductively rearrange elements in a list,
such as insertion sort.
This property also makes it possible to normalize inductively constructed list elements in $\PList$
to their simplest forms, e.g. $q([ x ]) \mult q([y, z])$ normalizes to $q([x,y,z])$ by definition,
saving the efforts of defining auxiliary lemmas to prove their equality.

This inductive nature, however, makes it difficult to define efficient operations on $\PList$. Consider a
divide-and-conquer algorithm such as merge sort, which involves partitioning a $\PList$ of length $n+m$ into
two smaller $\PList$ of length $n$ and length $m$.
We must iterate all $n$ elements before we can make such a partition, thus making $\PList$ unintuitive
to work with when we want to reason with operations that involve arbitrary partitioning.
\begin{toappendix}
    Whenever we define a function on $\PList$ by pattern matching we would also need to show
    the function respects $\Perm$, i.e. $\Perm \as\,\bs \to f(\as) = f(\bs)$. This can be annoying because
    of the many auxiliary variables in the constructor $\term{perm-swap}$, namely $\xs$, $\ys$, $\zs$.
    We need to show $f$ would respect a swap in the list anywhere between $\xs$ and $\ys$, which can
    unnecessarily complicate the proof. For example in the inductive step of~\cref{plist:sharp-sat},
    $\ext{f}((z \cons \zs)\,\mult\,x \cons y \cons \ys) = \ext{f}([ z ]) \mult \ext{f}(\zs\,\mult\,x \cons y \cons \ys)$,
    to actually prove this in Cubical Agda would involve first applying associativity to prove
    $(z \cons \zs)\,\mult\,x \cons y \cons \ys = z \cons (\zs\,\mult\,x \cons y \cons \ys)$, before we can actually
    apply homomorphism properties of $f$. In the final reassembling step, similarly,
    we also need to re-apply associativity to go from $z \cons (\zs\,\mult\,y \cons x \cons \ys)$
    to $(z \cons \zs)\,\mult\,y \cons x \cons \ys$. Also since we are working with an equivalence relation we
    defined ($\Perm$) and not the equality type directly, we cannot exploit the many combinators defined
    in the standard library for the equality type and often needing to re-define combinators ourselves.
\end{toappendix}

\subsection{Swap Lists}
\label{sec:cmon:slist}

Informally, quotients are defined by generating all the points, then quotienting out into equivalence classes by the
congruence relation.
Alternately, HITs use generators (points) and higher generators (paths) (and higher higher generators and so on\ldots).
We can also define free commutative monoids using HITs where adjacent swaps generate all symmetries,
for example swap-lists taken from \cite{choudhuryFreeCommutativeMonoids2023},
and in the Cubical library~\cite{theagdacommunityCubicalAgdaLibrary2025}.
\begin{definition}[\alink{definition}{$\SList$}{slist}]
    \label{def:slist}
    The higher inductive type $\SList(A)$ is generated by:
    \begin{align*}
        \nil              & : \SList(A)                                                   \\
        \blank\cons\blank & : A \to \SList(A) \to \SList(A)                               \\
        \term{swap}       & : \forall x, y, xs,\, x \cons y \cons xs = y \cons x \cons xs \\
        \term{trunc}      & : \forall x, y,\, (p, q : x = y) \to p = q
    \end{align*}
\end{definition}
\begin{toappendix}
    \begin{definition}[Concatenation]
        We define the concatenation operation $\mult : \SList(A) \to \SList(A) \to \SList(A)$
        recursively, where we also have to consider the (functorial) action on the $\term{swap}$ path using $\term{ap}$.
        \begin{align*}
            [] \mult \ys                                  & \defeq \ys                         \\
            (x \cons \xs) \mult \ys                       & \defeq x \cons (\xs \mult \ys)     \\
            \term{ap}_{\mult \ys}(\term{swap}(x, y, \xs)) & = \term{swap}(x, y, \ys \mult \xs)
        \end{align*}
    \end{definition}
    A proof of $(\SList(A), \mult, [])$ satisfying commutative monoid laws
    is given in~\cite{choudhuryFreeCommutativeMonoids2023} and the Cubical Agda library~\cite{theagdacommunityCubicalAgdaLibrary2025}.
    We explain the proof of $\mult$ satisfying commutativity here.

    \begin{lemma}[Head rearrange]\label{slist:cons}
        For all $x : A$, $\xs : \SList(A)$, $x \cons \xs = \xs \mult [ x ]$.
    \end{lemma}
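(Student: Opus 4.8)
The plan is to proceed by induction on $\xs : \SList(A)$. Since $\SList(A)$ is a set by the $\term{trunc}$ constructor, the type $x \cons \xs = \xs \concat [ x ]$ is a proposition for every $x$ and $\xs$, so the cases for the higher constructors $\term{swap}$ and $\term{trunc}$ of the induction are discharged automatically (the motive is a family of propositions). It therefore suffices to treat the two point constructors $\nil$ and $\cons$.

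For the base case $\xs = \nil$, both sides compute to $x \cons \nil$: by the definition of $\concat$ and the singleton notation, $\nil \concat [ x ] \jdgeq [ x ] \jdgeq x \cons \nil$, so reflexivity suffices. For the inductive step $\xs = y \cons \ys$, assume the induction hypothesis $p : x \cons \ys = \ys \concat [ x ]$. By the definition of $\concat$ on a cons, the right-hand side satisfies $(y \cons \ys) \concat [ x ] \jdgeq y \cons (\ys \concat [ x ])$. I then build the required path as the composite
\[
  x \cons y \cons \ys \;=\; y \cons x \cons \ys \;=\; y \cons (\ys \concat [ x ]) \;\jdgeq\; (y \cons \ys) \concat [ x ]\,,
\]
where the first equality is $\term{swap}(x,y,\ys)$, the second is $\term{ap}_{(y \cons \blank)}(p)$, and the last step is definitional.

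I expect no serious obstacle here: the only point that needs care is noting that the $\term{swap}$ and $\term{trunc}$ cases are trivial because the motive lands in a family of equalities in a set, hence of propositions; everything else is elementary path algebra together with the computation rules for $\concat$ on the point constructors. This lemma will in turn feed the proof that $\concat$ on $\SList$ is commutative.
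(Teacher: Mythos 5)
Your proof is correct and follows essentially the same route as the paper's: induction on $\xs$, a trivial $\nil$ case, and in the cons case applying $y \cons (\blank)$ to the induction hypothesis and composing with a $\term{swap}$. The only difference is that you spell out why the $\term{swap}$ and $\term{trunc}$ cases of the induction are discharged (the motive is an equality in a set, hence propositional), which the paper leaves implicit.
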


    \begin{proof}
        We can prove this by induction on $\xs$.
        For $\xs \jdgeq []$ this is trivial. For $\xs \jdgeq y \cons \ys$, we have the induction hypothesis $x \cons \ys = \ys \mult [ x ]$.
        By applying $y \cons (\blank)$ on both sides and then applying a $\term{swap}$, we complete the proof.
    \end{proof}

    \begin{theorem}[Commutativity]\label{slist:comm}
        For all $\xs,\,\ys : \SList(A)$, $\xs \mult \ys = \ys \mult \xs$.
    \end{theorem}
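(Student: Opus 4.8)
The plan is to induct on $\xs$ with $\ys$ held fixed. The key preliminary observation is that the motive $P(\xs) :\defeq (\xs \concat \ys = \ys \concat \xs)$ is a mere proposition for every $\xs$, because $\SList(A)$ is a set by $\term{trunc}$. Consequently, when we apply the eliminator of $\SList(A)$ into the family $P$, the higher constructors $\term{swap}$ and $\term{trunc}$ impose no obligations, and it suffices to handle the two point constructors $\nil$ and $\cons$.

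For the base case $\xs \jdgeq \nil$, I would note that $\nil \concat \ys \jdgeq \ys$ holds definitionally, while $\ys \concat \nil = \ys$ is the right unit law for $\concat$ (part of the monoid structure on $\SList(A)$ established in~\cite{choudhuryFreeCommutativeMonoids2023} and the Cubical library); so both sides are equal to $\ys$.

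For the inductive step $\xs \jdgeq x \cons \xs'$, with induction hypothesis $\xs' \concat \ys = \ys \concat \xs'$, I would chain the following equalities:
\[
  (x \cons \xs') \concat \ys
  \;\jdgeq\; x \cons (\xs' \concat \ys)
  \;=\; x \cons (\ys \concat \xs')
  \;=\; (\ys \concat \xs') \concat [x]
  \;=\; \ys \concat (\xs' \concat [x])
  \;=\; \ys \concat (x \cons \xs'),
\]
where the first step is the definition of $\concat$, the second applies $x \cons (\blank)$ to the induction hypothesis, the third is the head-rearrange lemma~\cref{slist:cons}, the fourth is associativity of $\concat$, and the fifth is again~\cref{slist:cons} read in reverse (applied under $\ys \concat (\blank)$). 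This exhibits $(x \cons \xs') \concat \ys = \ys \concat (x \cons \xs')$, completing the induction.

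There is no real obstacle here: once associativity, the right unit law, and~\cref{slist:cons} are in hand, the argument is a short rewrite. The only point that needs care — and the thing I would state explicitly up front — is that the motive is propositional, since that is precisely what allows us to discharge the $\term{swap}$ and $\term{trunc}$ cases for free; the remaining bookkeeping is just tracking which of the two functorial actions, $x \cons (\blank)$ or $\ys \concat (\blank)$, is applied at each step of the chain.
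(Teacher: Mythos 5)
Your proof is correct and takes essentially the same route as the paper: induction on $\xs$, using the head-rearrange lemma (\cref{slist:cons}) together with associativity to move the head of $\xs$ past $\ys$, with the $\term{swap}$ and $\term{trunc}$ cases discharged because the goal is propositional. Your write-up just makes explicit the chain of equalities that the paper's one-line argument leaves implicit.
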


    \begin{proof}
        By induction on $\xs$ we can iteratively apply~\cref{slist:cons} to move all elements of $\xs$
        to after $\ys$. This moves $\ys$ to the head and $\xs$ to the end, thereby proving
        $\xs \mult \ys = \ys \mult \xs$.
    \end{proof}
\end{toappendix}
We refer the reader to~\cite{choudhuryFreeCommutativeMonoids2023} or the Cubical Agda library~\cite{theagdacommunityCubicalAgdaLibrary2025}
for further details.
Much like $\List$, $\SList$ is inductively defined, therefore making it intuitive to reason
with when defining inductive operations or proofs on $\SList$, but difficult to reason with
when defining operations that involve arbitrary partitioning, for reasons similar to those given
in~\cref{sec:cmon:plist}.

\begin{toappendix}
    Unlike $\PList$ which is defined as a set quotient, $\SList$ is defined as a HIT, therefore handling equalities
    between $\SList$ is much simpler than $\PList$. We would still need to prove a function $f$ respects
    the path constructor of $\SList$ when pattern matching, i.e. $f(x \cons y \cons \xs) = f(y \cons x \cons \xs)$.
    Unlike $\PList$ we do not need to worry about as many auxiliary variables since swap
    only happens at the head of the list, whereas with $\PList$ we would need to prove
    $f(\xs\,\mult\,x \cons y \cons \ys) = f(\xs\,\mult\,y \cons x \cons \ys)$. One may be tempted to just remove $\xs$
    from the $\term{perm-swap}$ constructor such that it becomes
    $\term{perm-swap} : \Perm\,(x \cons y \cons \ys)\,\zs \to \Perm\,(y \cons x \cons \ys)\,\zs$.
    However this would break $\Perm$'s congruence w.r.t. $\mult$, therefore violating the axioms of
    permutation relations. Also, since we are working with the identity type directly, properties we would
    expect from $\term{swap}$, such as reflexivity, transitivity, symmetry, congruence and such all comes directly by
    construction, whereas with $\Perm$ we would have to prove these properties manually.
    We can also use the many combinatorics defined in the standard library for equational reasoning,
    making the handling of $\SList$ equalities a lot simpler.
\end{toappendix}

\subsection{Bag}
\label{sec:cmon:bag}

Alternatively, we can also quotient the $\Array$ type by symmetries to get commutativity.
This construction has been considered in~\cite{altenkirchDefinableQuotientsType2011}
and~\cite{liQuotientTypesType2015},
partially considered in~\cite{choudhuryFreeCommutativeMonoids2023},
and in~\cite{joramConstructiveFinalSemantics2023},
who give a similar construction, where only the index function is quotiented, instead of
the entire array.
\cite{danielssonBagEquivalenceProofRelevant2012} also considered $\Bag$ as a setoid relation
on $\List$ in an intensional MLTT setting.
\cite{joramConstructiveFinalSemantics2023} prove that their version of $\Bag$
is the free commutative monoid by equivalence to the other HIT constructions.
We give a direct proof of its universal property instead, using our general recipe.

\begin{definition}[\alink{definition}{Bag}{bag}]
    \label{def:bag}
    Bags are defined as arrays quotiented by bag equivalence $\approx$:
    \begin{align*}
        (n , f) \approx (m , g) & \defeq \dsum{\sigma : \Fin[n] \xto{\sim} \Fin[m]} (f = g \comp \sigma)
        \\
        \Bag(A)                 & \defeq \quot{\Array(A)}{\approx}
    \end{align*}
\end{definition}
Note that by a pigeonhole argument,
$\sigma$ may only be constructed when $n = m$.
In other words, we are quotienting by an automorphism on the indices,
and its action on the elements.
We have already shown $\Array$ to be the free monoid in~\cref{sec:arrays}.
By~\cref{prop:qfreemon}, to obtain that $\Bag(A)$ is the free commutative monoid on $A$, it suffices to prove the property of $\approx$ being a permutation relation.

\begin{propositionrep}[\alink{proposition}{Bag equivalence is a permutation relation}{bag-is-perm-relation}]
    \label{prop:bag-is-perm-relation}
    The relation $\approx$ is a permutation relation on $\Array(A)$.
\end{propositionrep}
\begin{proof}
    By definition of a permutation relation (\cref{def:permutation-relation}), we must show that $\approx$ is:
    an equivalence relation (proven in~\cref{bag-equiv}),
    a congruence w.r.t $\mult$ (proven in~\cref{bag:cong}),
    commutative (proven in~\cref{bag:comm}), and
    respects $\ext{(\blank)}$ (proven in~\cref{bag:perm-sat}).
    Thus, $\approx$ is a permutation relation.
\end{proof}
\begin{toappendix}
    \label{bag-equiv}
    \begin{proposition}[\alink{proposition}{}{bag-equiv}]
        $\approx$ is a equivalence relation.
    \end{proposition}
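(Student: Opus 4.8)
The plan is to transport the witnessing data through the groupoid operations on finite-set automorphisms. Recall that a proof of $(n,f)\approx(m,g)$ is a pair $(\sigma,p)$ with $\sigma : \Fin[n]\xto{\sim}\Fin[m]$ and $p : f = g\comp\sigma$; so to show $\approx$ is an equivalence relation I must produce such a pair from, respectively, the identity equivalence, the inverse equivalence, and the composite equivalence. \textbf{Reflexivity} is immediate: for $(n,f)$ take $\sigma \defeq \idfunc_{\Fin[n]}$ with the path $f = f\comp\idfunc_{\Fin[n]}$ (definitional, or the right-unit law of composition). \textbf{Symmetry}: from $(\sigma,p) : (n,f)\approx(m,g)$ I take the inverse equivalence $\sigma^{-1}:\Fin[m]\xto{\sim}\Fin[n]$ together with the path
\[
  g = g\comp(\sigma\comp\sigma^{-1}) = (g\comp\sigma)\comp\sigma^{-1} = f\comp\sigma^{-1},
\]
where the first equality is the section law $\sigma\comp\sigma^{-1} = \idfunc$, the second is associativity of composition, and the third is $p^{-1}$ precomposed with $\sigma^{-1}$, i.e.\ $\term{ap}_{(\blank)\comp\sigma^{-1}}(p^{-1})$.

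\textbf{Transitivity}: from $(\sigma,p):(n,f)\approx(m,g)$ and $(\tau,q):(m,g)\approx(o,h)$ I take the composite equivalence $\tau\comp\sigma : \Fin[n]\xto{\sim}\Fin[o]$ together with the path
\[
  f = g\comp\sigma = (h\comp\tau)\comp\sigma = h\comp(\tau\comp\sigma),
\]
obtained by pasting $p$, then $\term{ap}_{(\blank)\comp\sigma}(q)$, then associativity of composition. In each case, packaging the chosen equivalence with the corresponding path yields the required inhabitant of $\approx$.

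There is no real obstacle here; the only points needing a little care are the variance when transporting the path component (one precomposes the new equivalence with the existing path data, rather than postcomposing) and the bookkeeping of the $\Sigma$-type, which forces the equivalence and the path to be supplied together. One could instead organise the whole argument abstractly, noting that $\dsum{\sigma : \Fin[n]\xto{\sim}\Fin[m]}{(f = g\comp\sigma)}$ is precisely the hom-type from $(n,f)$ to $(m,g)$ in the action groupoid of the symmetric groups acting on index functions, so that reflexivity, symmetry, and transitivity are just identities, inverses, and composites in that groupoid; but the direct construction above is the shorter route to a formalisation. The substantive content of the $\Bag$ development lies in the remaining permutation-relation axioms (congruence with respect to $\doubleplus$, commutativity, and respecting $\ext{(\blank)}$), not in this lemma.
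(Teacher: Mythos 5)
Your proof is correct and follows essentially the same route as the paper's: reflexivity via the identity equivalence, symmetry via the inverse, and transitivity via the composite, with the path component transported accordingly (the paper merely writes the composite in the other order of notation). The extra detail you give on the equality component is just an elaboration of what the paper leaves implicit.
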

    \begin{proof}
        We can show any array $\xs$ is related to itself by the identity isomorphism, therefore $\approx$ is reflexive.
        If $\xs \approx \ys$ by $\sigma$, we can show $\ys \approx \xs$ by $\sigma^{-1}$, therefore $\approx$ is symmetric.
        If $\xs \approx \ys$ by $\sigma$ and $\ys \approx \zs$ by $\phi$, we can show $\xs \approx \zs$ by $\sigma \comp \phi$,
        therefore $\approx$ is transitive.
    \end{proof}
\end{toappendix}
\begin{proposition}[\alink{proposition}{}{bag-cong}]
    \label{bag:cong}
    $\approx$ is congruent with respect to $\mult$.
\end{proposition}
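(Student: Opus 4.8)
The plan is to show that bag equivalence $\approx$ is a congruence with respect to array concatenation $\doubleplus$, i.e.\ that $(n_1,f_1) \approx (n_2,f_2)$ and $(m_1,g_1) \approx (m_2,g_2)$ together imply $(n_1,f_1)\doubleplus(m_1,g_1) \approx (n_2,f_2)\doubleplus(m_2,g_2)$. Since $\approx$ is only inhabited when the lengths agree, write $n = n_1 = n_2$ and $m = m_1 = m_2$; we are given permutations $\sigma : \Fin[n] \xto{\sim} \Fin[n]$ with $f_1 = f_2 \comp \sigma$ and $\tau : \Fin[m] \xto{\sim} \Fin[m]$ with $g_1 = g_2 \comp \tau$. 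The concatenated arrays both have length $n+m$, so we must build a permutation $\rho : \Fin[n+m] \xto{\sim} \Fin[n+m]$ together with a proof that $f_1 \oplus g_1 = (f_2 \oplus g_2) \comp \rho$.

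First I would construct $\rho$ as the ``parallel sum'' of $\sigma$ and $\tau$: on indices $k < n$ it acts as $\sigma(k)$ (viewed in $\Fin[n+m]$ via the left inclusion), and on indices $k \geq n$ it acts as $n + \tau(k - n)$. This is the canonical map $\Fin[n] + \Fin[m] \xto{\sim} \Fin[n] + \Fin[m]$ induced by $\sigma + \tau$, transported along the standard equivalence $\Fin[n+m] \eqv \Fin[n] + \Fin[m]$; its inverse is the parallel sum of $\sigma^{-1}$ and $\tau^{-1}$, so it is indeed an equivalence. Then I would verify the equation $f_1 \oplus g_1 = (f_2 \oplus g_2) \comp \rho$ by function extensionality, splitting on whether the input index $k$ satisfies $k < n$. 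In the case $k < n$: the left side is $f_1(k) = f_2(\sigma(k))$, and the right side is $(f_2 \oplus g_2)(\rho(k)) = (f_2 \oplus g_2)(\sigma(k))$, which reduces to $f_2(\sigma(k))$ precisely because $\sigma(k) < n$ (the key fact that $\rho$ preserves the left block). In the case $k \geq n$: the left side is $g_1(k - n) = g_2(\tau(k-n))$, and the right side is $(f_2 \oplus g_2)(\rho(k)) = (f_2 \oplus g_2)(n + \tau(k-n))$, which reduces to $g_2((n + \tau(k-n)) - n) = g_2(\tau(k-n))$ since $n + \tau(k-n) \geq n$.

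The main obstacle will be the bookkeeping around the finite-set arithmetic: making precise the coercions between $\Fin[n+m]$ and $\Fin[n] + \Fin[m]$, checking that $\sigma(k) < n$ and $n + \tau(k-n) \geq n$ at the right types, and ensuring that subtraction and the bound comparisons used in the definition of $\oplus$ behave as expected (these are exactly the kind of lemmas already exercised in~\cref{array:split} and the associativity proof for $\Array$). None of this is conceptually deep, but in the formalisation it requires care with the truncated subtraction and the decidable comparison $k < n$. I would lean on the same auxiliary reasoning used for~\cref{array:eta-suc,array:split} to discharge these side conditions, and observe that at the level of the informal argument the statement is essentially that concatenation of arrays corresponds to the coproduct of index sets, under which a pair of permutations acts componentwise.
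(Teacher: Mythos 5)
Your proposal is correct and matches the paper's own argument: the paper likewise builds the required permutation as the coproduct of the two given permutations, conjugated by the canonical equivalences $\Fin[n+m] \simeq \Fin[n] + \Fin[m]$, acting blockwise on the index sets. The only difference is cosmetic --- the paper keeps the permutations heterogeneous (between possibly differently-named lengths) and leaves the pointwise verification of the lookup equation implicit, whereas you first identify the lengths and then spell out the case split on $k < n$, which is exactly the bookkeeping the formalisation has to do anyway.
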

\begin{proof}
    Given $(n, f) \approx (m, g)$ by $\sigma$ and $(u, p) \approx (v, q)$ by $\phi$,
    we want to show $(n, f) \mult (u, p) \approx (m, g) \mult (v, q)$ by some $\tau$.
    We construct $\tau$ as follows:
    \[
        \tau \defeq \Fin[n+u] \xto{\sim} \Fin[n] + \Fin[u] \xto{\sigma + \phi} \Fin[m] + \Fin[v] \xto{\sim} \Fin[m+v] \\
    \]
    which operationally performs:
    \[
        \begin{tikzcd}[ampersand replacement=\&,cramped]
            {\{\color{red}0,1,\dots,n-1, \color{blue} n,n+1,\dots,n+u-1 \color{black}\}} \\
            { \{\color{red}\sigma(0),\sigma(1)\dots,\sigma(n-1), \color{blue}\phi(0),\phi(1),\dots,\phi(u-1) \color{black}\}}
            \arrow["{\sigma+\phi}", maps to, from=1-1, to=2-1]
        \end{tikzcd}
        \enspace.
    \]
\end{proof}
\begin{proposition}[\alink{proposition}{}{bag-comm}]
    \label{bag:comm}
    $\approx$ respects commutativity.
\end{proposition}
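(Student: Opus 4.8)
The plan is to produce, for arbitrary arrays $(n,f)$ and $(m,g)$, a bag equivalence
$(n,f) \concat (m,g) \approx (m,g) \concat (n,f)$, which by \cref{def:bag} amounts to a permutation
$\tau : \Fin[n+m] \xto{\sim} \Fin[m+n]$ together with a proof that $f \oplus g = (g \oplus f) \comp \tau$.
The natural candidate is the \emph{block swap}: using the standard equivalence $\Fin[a+b] \eqv \Fin[a] + \Fin[b]$,
define
\[
  \tau \;\defeq\; \Fin[n+m] \xto{\sim} \Fin[n] + \Fin[m] \xto{\sim} \Fin[m] + \Fin[n] \xto{\sim} \Fin[m+n],
\]
where the middle map is the symmetry of the coproduct. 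Operationally $\tau$ sends an index $k < n$ to $k + m$
and an index $k$ with $n \leq k < n+m$ to $k - n$.

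Next I would verify the defining equation of $\approx$ pointwise, using the same case split that appears in the
definition of $\oplus$. For $k < n$ we have $(f \oplus g)(k) = f(k)$, while $\tau(k) = k + m \geq m$, so
$((g \oplus f) \comp \tau)(k) = f((k+m) - m) = f(k)$. For $n \leq k < n+m$ we have $(f \oplus g)(k) = g(k-n)$,
while $\tau(k) = k - n < m$, so $((g \oplus f) \comp \tau)(k) = g(k-n)$. Hence the two lookup functions agree on
every index, and by function extensionality $f \oplus g = (g \oplus f) \comp \tau$. Combined with \cref{bag:cong},
the equivalence-relation property of $\approx$, and the fact (immediate from $q(\xs) \qlift{\mult} q(\ys) = q(\xs \concat \ys)$)
that $\approx$ respects $\ext{(\blank)}$, this completes the verification that $\approx$ is a permutation relation,
so by \cref{prop:qfreemon} $\Bag(A)$ is the free commutative monoid on $A$.

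The main obstacle is entirely bureaucratic rather than conceptual: since $\tau$ has type $\Fin[n+m] \xto{\sim} \Fin[m+n]$,
the verification must carry the coercions along $n+m = m+n$ and along $\Fin[a+b] \eqv \Fin[a] + \Fin[b]$ through every
step, and each of the two cases needs its side condition ($k + m \geq m$ in the first, $k - n < m$ in the second) to
make the truncated subtraction in $\oplus$ compute to the expected value — precisely the sort of $\Fin$/$\Nat$ arithmetic
that is routine on paper but fiddly in Cubical Agda. Conceptually there is nothing subtle here: this is the array-level
analogue of the inductive commutativity argument for $\SList$ (\cref{slist:comm}), repackaged as a single explicit
permutation instead of being assembled swap by swap.
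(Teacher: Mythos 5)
Your block-swap permutation $\Fin[n+m] \xto{\sim} \Fin[n]+\Fin[m] \xto{\term{swap}_{+}} \Fin[m]+\Fin[n] \xto{\sim} \Fin[m+n]$, together with the two-case pointwise check that $f \oplus g = (g \oplus f) \comp \tau$, is exactly the paper's proof (the paper only draws the operational picture where you spell out the verification), so this is correct and essentially identical. One caveat on your closing aside: the claim that $\approx$ respecting $\ext{(\blank)}$ is ``immediate'' is not right --- that is the hardest axiom to check and occupies \cref{bag:tau,bag:punch,bag:perm-sat} in the paper --- but it lies outside the statement you were asked to prove.
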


\begin{proof}
    We want to show for any arrays $(n, f)$ and $(m, g)$, $(n, f) \mult (m, g) \approx (m, g) \mult (n, f)$
    by some $\phi$.
    We use formal combinators (see~\cite{choudhurySymmetriesReversibleProgramming2022}) to define $\phi$:
    \[
        \phi \defeq \Fin[n+m] \xto{\sim} \Fin[n] + \Fin[m] \xto{\term{swap}_{+}} \Fin[m] + \Fin[n] \xto{\sim} \Fin[m+n] \\
    \]
    which operationally performs:
    \[
        \begin{tikzcd}[ampersand replacement=\&,cramped]
            {\{\color{red}0,1,\dots,n-1, \color{blue} n,n+1,\dots,n+m-1 \color{black}\}} \\
            { \{\color{blue}n,n+1\dots,n+m-1, \color{red}0,1,\dots,n-1 \color{black}\}}
            \arrow["\phi", maps to, from=1-1, to=2-1]
        \end{tikzcd}
        \enspace.
    \]
\end{proof}
To show that $\ext{f}$ is invariant under permutation: for all $\phi\colon \Fin[n]\xto{\sim}\Fin[n]$,
$\ext{f}(n, i) \id \ext{f}(n, i \circ \phi)$, we need some formal combinators for
\emph{punching in} and \emph{punching out} indices. These operations are
borrowed from~\cite{mozlerCubicalAgdaSimple2021} and
developed further
in~\cite{choudhurySymmetriesReversibleProgramming2022} for studying permutation codes.
\begin{lemma}[\alink{lemma}{}{bag-tau}]
    \label{bag:tau}
    Given $\phi\colon \Fin[S(n)]\xto{\sim}\Fin[S(n)]$, there is a permutation $\tau\colon \Fin[S(n)]\xto{\sim}\Fin[S(n)]$
    such that $\tau(0) = 0$, and $\ext{f}(S(n), i \circ \phi) = \ext{f}(S(n), i \circ \tau)$.
\end{lemma}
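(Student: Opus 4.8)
The plan is to obtain $\tau$ by post-composing $\phi$ with the single transposition that ``corrects'' position $0$, and then to reduce the claimed equation to a swap-invariance property of $\ext{f}$ on lookup functions that is powered by commutativity alone. Concretely, set $j \defeq \phi^{-1}(0)$ and let $\sigma \colon \Fin[S(n)] \xto{\sim} \Fin[S(n)]$ be the transposition exchanging $0$ and $j$ (the identity when $j = 0$). Define $\tau \defeq \phi \comp \sigma$; then $\sigma(0) = j$, so $\tau(0) = \phi(j) = 0$ by construction. Since $i \comp \tau = (i \comp \phi) \comp \sigma$, writing $g \defeq i \comp \phi$ it remains to prove $\ext{f}(S(n), g) = \ext{f}(S(n), g \comp \sigma)$, i.e.\ that exchanging two entries of a lookup function does not change its $\ext{f}$.

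The key ingredient is a ``pull to the front'' lemma: for every $h \colon \Fin[S(m)] \to A$ and every $k \colon \Fin[S(m)]$,
\[
  \ext{f}(S(m), h) = f(h(k)) \mult \ext{f}(m, h \comp \pi_k),
\]
where $\pi_k \colon \Fin[m] \to \Fin[S(m)]$ is the order-preserving injection that skips $k$ (the punch-in map). I would prove this by induction on $k$: the base case $k = 0$ is definitional, since $\pi_0 = S$ and $\ext{f}(S(m),h) = f(h(0)) \mult \ext{f}(m, h \comp S)$ by the definition of $\ext{(\blank)}$ on arrays; for $k = S(k')$ one unfolds $\ext{f}$ once, applies the induction hypothesis to $h \comp S$ and $k'$, then uses commutativity of $\str{X}$ to commute $f(h(0))$ past $f(h(k))$ and the punch-in identities $\pi_{S(k')}(0) = 0$ and $\pi_{S(k')} \comp S = S \comp \pi_{k'}$ to reassemble the tail.

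With this lemma in hand, the swap invariance is a short computation. If $j = 0$ then $\sigma = \idfunc$ and there is nothing to do. If $j = S(j')$ (so $n = S(n')$ for some $n'$), apply the pull lemma at index $0$ to both $\ext{f}(S(n),g)$ and $\ext{f}(S(n), g \comp \sigma)$, and then again at index $j'$ inside the two resulting tails. The left-hand side becomes $f(g(0)) \mult f(g(S(j'))) \mult \ext{f}(n', t)$ and the right-hand side $f(g(S(j'))) \mult f(g(0)) \mult \ext{f}(n', t)$ with the \emph{same} tail $t = (g \comp S) \comp \pi_{j'}$: the two tail lookup functions agree because $\pi_{j'}$ avoids the unique index ($j'$) at which $g \comp S$ and $(g \comp \sigma) \comp S$ differ. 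A single use of commutativity then identifies the two sides, giving $\ext{f}(S(n), g) = \ext{f}(S(n), g \comp \sigma) = \ext{f}(S(n), i \comp \tau)$, as required.

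The conceptual content here is nothing more than commutativity of $\str{X}$; the real effort lies in the finite-set bookkeeping -- the punch-in/punch-out combinator identities, the repeated successor-versus-zero case splits, and keeping the length indices ($n$, $S(n)$, $n'$, \ldots) aligned through the nested inductions. I expect that to be the main obstacle, and it is presumably exactly why the development isolates the punch combinators of~\cite{mozlerCubicalAgdaSimple2021,choudhurySymmetriesReversibleProgramming2022} as reusable infrastructure before attacking this lemma.
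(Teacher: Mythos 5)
Your proof is correct, but it takes a genuinely different route from the paper. You build $\tau$ as $\phi$ precomposed with the single transposition exchanging $0$ and $\phi^{-1}(0)$, and then establish invariance of $\ext{f}$ under that transposition by an elementwise induction (your ``pull to the front'' lemma, $\ext{f}(S(m),h) = f(h(k)) \mult \ext{f}(m, h \comp \pi_k)$, using the punch-in identities $\pi_{S(k')}(0)=0$ and $\pi_{S(k')}\comp S = S\comp \pi_{k'}$); the final swap then needs only one use of commutativity, and your observation that the two tails coincide because $\pi_{j'}$ skips the unique index where $g\comp S$ and $g\comp\sigma\comp S$ differ is exactly the point that makes this go through. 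The paper instead takes $\tau$ to be $\phi$ composed with a \emph{block} swap $\Fin[k]+\Fin[j] \xto{\term{swap}_{+}} \Fin[j]+\Fin[k]$ at $k = \phi^{-1}(0)$, decomposes the array as $(S(n), i\comp\phi) = (k,g)\doubleplus(j,h)$, and concludes from the already-proved fact that $\ext{f}$ is a monoid homomorphism together with commutativity of the target, so it needs no new induction at all. The trade-off: the paper's argument reuses the concatenation/split machinery and is essentially a one-liner given that infrastructure (though its write-up is terse about checking that the swapped decomposition is literally $i\comp\tau$), while yours is more self-contained and more explicit at the level of indices, at the cost of proving the auxiliary pull-to-front lemma and doing the punch-in bookkeeping by hand. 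Since the lemma only asserts the existence of \emph{some} $\tau$ with $\tau(0)=0$ and the same $\ext{f}$-value, the fact that your $\tau$ (a transposition composed with $\phi$) differs from the paper's (a block rotation composed with $\phi$) is harmless, and either choice feeds correctly into the subsequent punch-out step and the induction in the permutation-invariance theorem.
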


\begin{proof}
    Let $k$ be $\phi^{-1}(0)$, and $k + j = S(n)$, we construct $\tau$:
    \[
        \tau \defeq \Fin[S(n)] \xto{\phi} \Fin[S(n)] \xto{\sim} \Fin[k+j] \xto{\sim} \Fin[k] + \Fin[j]
        \xto{\term{swap}_{+}} \Fin[j] + \Fin[k] \xto{\sim} \Fin[j+k] \xto{\sim} \Fin[S(n)]
    \]
    \[
        \begin{tikzcd}[ampersand replacement=\&,cramped]
            {\{\color{blue}0, 1, 2, \dots, \color{red}k, k+1, k+2, \dots \color{black}\}} \\
            {\{\color{blue}x, y, z, \dots, \color{red}0, u, v, \dots \color{black}\}}
            \arrow["\phi", maps to, from=1-1, to=2-1]
        \end{tikzcd}
        \\
        \begin{tikzcd}[ampersand replacement=\&,cramped]
            {\{\color{blue}0, 1, 2, \dots, \color{red}k, k+1, k+2, \dots \color{black}\}} \\
            {\{\color{red}0, u, v, \dots, \color{blue}x, y, z, \dots \color{black}\}}
            \arrow["\tau", maps to, from=1-1, to=2-1]
        \end{tikzcd}
    \]
    It is trivial to show $\ext{f}(S(n), i \circ \phi) = \ext{f}(S(n), i \circ \tau)$, since the only
    operation on indices in $\tau$ is $\term{swap}_{+}$. It suffices to show $(S(n), i \circ \phi)$
    can be decomposed into two arrays such that $(S(n), i \circ \phi) = (k, g) \mult (j, h)$
    for some $g$ and $h$. Since the image of $\ext{f}$ is a commutative monoid, and $\ext{f}$ is a homomorphism,
    $\ext{f}((k, g) \mult (j, h)) = \ext{f}(k, g) \mult \ext{f}(j, h) = \ext{f}(j, h) \mult \ext{f}(k, g) =
        \ext{f}((j, h) \mult (k, g))$, thereby proving $\ext{f}(S(n), i \circ \phi) = \ext{f}(S(n), i \circ \tau)$.

\end{proof}

\begin{lemma}[\alink{lemma}{}{bag-punch}]
    \label{bag:punch}
    Given $\tau\colon \Fin[S(n)]\xto{\sim}\Fin[S(n)]$ where $\tau(0) = 0$,
    there is a $\psi : \Fin[n] \xto{\sim} \Fin[n]$ such that $\tau \circ S = S \circ \psi$.
\end{lemma}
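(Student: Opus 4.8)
The plan is to exploit the fact that a permutation $\tau$ of $\Fin[S(n)]$ which fixes $0$ must restrict to a permutation of the ``tail'' $\{1,\dots,S(n)-1\}$, and that this tail is in bijection with $\Fin[n]$ via the successor/inclusion map $S$. So I would build $\psi$ directly as this restriction and then check it is an equivalence.

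First I would record the auxiliary fact that $\tau^{-1}(0) = 0$, which is immediate from $\tau(0) = 0$ together with $\tau$ being an equivalence. Next, for each $x : \Fin[n]$, since $S(x) \neq 0$ and $\tau$ is injective, $\tau(S(x)) \neq 0$; hence $\tau(S(x))$ lies in the image of $S$, so there is a unique $y : \Fin[n]$ with $S(y) = \tau(S(x))$. Extracting this $y$ is exactly the \emph{punch-out} operation on the nonzero elements of $\Fin[S(n)]$, borrowed from~\cite{mozlerCubicalAgdaSimple2021,choudhurySymmetriesReversibleProgramming2022}, which is the partial inverse to $S$. Setting $\psi(x) \defeq y$, the defining property of punch-out gives $S(\psi(x)) = \tau(S(x))$ for all $x$, i.e.\ $S \comp \psi = \tau \comp S$, which is precisely the equation demanded by the statement.

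It then remains to check that $\psi$ is an equivalence. I would define $\psi' : \Fin[n] \to \Fin[n]$ by the same recipe applied to $\tau^{-1}$ — legitimate since $\tau^{-1}(0) = 0$ — so that $S \comp \psi' = \tau^{-1} \comp S$. For any $x$ we then compute $S(\psi'(\psi(x))) = \tau^{-1}(S(\psi(x))) = \tau^{-1}(\tau(S(x))) = S(x)$, and since $S$ is injective this forces $\psi'(\psi(x)) = x$; the symmetric computation gives $\psi(\psi'(x)) = x$. Hence $\psi$ is an equivalence with inverse $\psi'$, completing the proof. The only delicate point — and the step I expect to be the main obstacle — is the constructive bookkeeping around the partial predecessor: one must consistently carry the proofs that $\tau(S(x)) \neq 0$ and $\tau^{-1}(S(x)) \neq 0$ and invoke the defining identity of punch-out (that $S$ of a punch-out is the original nonzero index). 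Once the punch-in/punch-out combinators of the cited work are available, everything else reduces to the short injectivity-of-$S$ argument above.
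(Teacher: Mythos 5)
Your construction is exactly the paper's: the paper defines $\psi(x) \defeq \tau(S(x)) - 1$, notes that injectivity of $\tau$ together with $\tau(0)=0$ makes the predecessor well defined, and appeals to the punch-in/punch-out equivalence of~\cite{choudhurySymmetriesReversibleProgramming2022} --- which is precisely your punch-out description of $\psi$. Your additional explicit verification that $\psi$ is an equivalence (via $\psi'$ built from $\tau^{-1}$ and injectivity of $S$) is correct and simply spells out what the paper delegates to the cited combinators.
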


\begin{proof}
    We construct $\psi$ as $\psi(x) = \tau(S(x)) - 1$.
    Since $\tau$ maps only 0 to 0 by assumption, $\forall x. \, \tau(S(x)) > 0$, therefore
    the $(- 1)$ is well defined. This corresponds to the special case for $k = 0$ of the standard
    punch-in and punch-out operations from the Agda standard library~\cite{Daggitt_The_Agda_standard_2025}, as used for Lehmer codes
    in~\cite{choudhurySymmetriesReversibleProgramming2022}.
    \[
        \begin{tikzcd}[ampersand replacement=\&,cramped]
            {\{\color{blue}0, \color{red}1, 2, 3, \dots \color{black}\}} \\
            { \{\color{blue}0, \color{red} x, y, z \dots \color{black}\}}
            \arrow["\tau", maps to, from=1-1, to=2-1]
        \end{tikzcd}
        \\
        \begin{tikzcd}[ampersand replacement=\&,cramped]
            {\{\color{red}0, 1, 2, \dots \color{black}\}} \\
            { \{\color{red} x-1, y-1, z-1 \dots \color{black}\}}
            \arrow["\psi", maps to, from=1-1, to=2-1]
        \end{tikzcd}
    \]
\end{proof}

\begin{theorem}[\alink{theorem}{Permutation invariance}{bag-perm-sat}]
    \label{bag:perm-sat}
    For all $\phi\colon \Fin[n]\xto{\sim}\Fin[n]$, $\ext{f}(n, i) \id \ext{f}(n, i \circ \phi)$.
\end{theorem}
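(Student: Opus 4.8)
The plan is to induct on $n$, using \cref{bag:tau} and \cref{bag:punch} to peel off the head of the array in the successor case. For the base case $n = 0$, both $\ext{f}(0, i)$ and $\ext{f}(0, i \circ \phi)$ reduce to $e$ by definition of $\ext{(\blank)}$ on zero-length arrays, so there is nothing to prove (in particular we need not know that $\phi$ is the identity on $\Fin[0]$).

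For the successor case, suppose $n = S(m)$ and we are given $\phi \colon \Fin[S(m)] \xto{\sim} \Fin[S(m)]$. First I would apply \cref{bag:tau} to obtain a permutation $\tau$ with $\tau(0) = 0$ and $\ext{f}(S(m), i \circ \phi) = \ext{f}(S(m), i \circ \tau)$; this reduces the problem to permutations fixing $0$. Then \cref{bag:punch} gives a permutation $\psi \colon \Fin[m] \xto{\sim} \Fin[m]$ with $\tau \circ S = S \circ \psi$, i.e.\ $\tau$ restricted away from its fixed point $0$ is (a reindexing of) $\psi$. Now unfold the definition of $\ext{(\blank)}$:
\[
  \ext{f}(S(m), i \circ \tau) = f((i \circ \tau)(0)) \mult \ext{f}(m, (i \circ \tau) \circ S).
\]
Since $\tau(0) = 0$ we have $(i \circ \tau)(0) = i(0)$, and since $\tau \circ S = S \circ \psi$ we have $(i \circ \tau) \circ S = (i \circ S) \circ \psi$. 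Hence the second summand is $\ext{f}(m, (i \circ S) \circ \psi)$, to which the induction hypothesis applies (with the array $(m, i \circ S)$ and the permutation $\psi$), yielding $\ext{f}(m, (i \circ S) \circ \psi) = \ext{f}(m, i \circ S)$. Therefore
\[
  \ext{f}(S(m), i \circ \tau) = f(i(0)) \mult \ext{f}(m, i \circ S) = \ext{f}(S(m), i)
\]
by definition of $\ext{(\blank)}$ again, and chaining with the equation from \cref{bag:tau} gives $\ext{f}(S(m), i \circ \phi) = \ext{f}(S(m), i)$, completing the induction.

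The genuine content of the argument has already been isolated into the two preceding lemmas, so this final step is essentially bookkeeping; the only place to be careful is keeping the composition order straight (writing $(i \circ \tau) \circ S = i \circ (\tau \circ S) = i \circ (S \circ \psi) = (i \circ S) \circ \psi$) and making sure the induction hypothesis is invoked for the tail array $(m, i \circ S)$ rather than for $(m, i)$. I expect the main obstacle, if any, to be purely the combinatorial normalisation underlying \cref{bag:tau} — reducing an arbitrary permutation to one fixing $0$ via a single $\term{swap}_{+}$ and then using commutativity of the target monoid — but that is assumed here.
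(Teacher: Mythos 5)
Your proposal is correct and follows essentially the same route as the paper's proof: induction on $n$, with the successor case handled by first normalising $\phi$ to a $0$-fixing permutation $\tau$ via \cref{bag:tau}, then using \cref{bag:punch} to obtain $\psi$ with $\tau \circ S = S \circ \psi$, unfolding $\ext{(\blank)}$, and applying the induction hypothesis to the tail array $(m, i \circ S)$ with $\psi$. The only difference is that you spell out the associativity bookkeeping for the compositions slightly more explicitly, which the paper leaves implicit.
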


\begin{proof}
    By induction on $n$.
    \begin{itemize}
        \item At $n = 0$, $\ext{f}(0, i) \id \ext{f}(0, i \circ \phi) = e$.
        \item At $n = S(m)$,
              \begin{align*}
                   & \mathrel{\phantom{=}} \ext{f}(S(m), i \circ \phi)                                                 \\
                   & = \ext{f}(S(m), i \circ \tau)                          & \text{by~\cref{bag:tau}}                 \\
                   & = f(i(\tau(0))) \mult \ext{f}(m, i \circ \tau \circ S) & \text{by definition of $\ext{(\blank)}$} \\
                   & = f(i(0)) \mult \ext{f}(m, i \circ \tau \circ S)       & \text{by construction of $\tau$}         \\
                   & = f(i(0)) \mult \ext{f}(m, i \circ S \circ \psi)       & \text{by~\cref{bag:punch}}               \\
                   & = f(i(0)) \mult \ext{f}(m, i \circ S)                  & \text{induction}                         \\
                   & = \ext{f}(S(m), i)                                     & \text{by definition of $\ext{(\blank)}$}
              \end{align*}
    \end{itemize}
\end{proof}
Unlike $\PList$ and $\SList$, $\Bag$ and its underlying construction $\Array$ are not inductively defined,
making it difficult to do induction on them. For example,
in the proof of~\cref{array:univ}, both~\cref{array:eta-suc,array:split} are needed to do
induction on $\Array$, as opposed to $\List$ and its quotients, where we can do induction simply by
pattern matching. Much like $\PList$, when defining functions on $\Bag$, we need to show they respect
$\approx$, i.e. $\as \approx \bs \to f(\as) = f(\bs)$. Notably, this is much more difficult than
$\PList$ or $\SList$ -- with $\PList$ and $\SList$ we only need to consider swapping adjacent elements,
while with $\Bag$ we need to consider all possible permutations. For example,
in the proof of~\cref{bag:perm-sat}, we need to first construct a $\tau$ which satisfies $\tau(0) = 0$ and prove
$\ext{f}(n, i \comp \sigma) = \ext{f}(n, i \comp \tau)$ before we can apply induction.

\begin{toappendix}
    Since $\Array$ and $\Bag$ are not simple data types, the definition of
    the monoid operation on them $\mult$ are necessarily more complicated, and unlike $\List$, $\PList$
    and $\SList$, constructions of $\Array$ and $\Bag$ via $\mult$ often would not normalize into a
    very simple form, but would instead expand into giant trees of terms. This makes it such that when working
    with $\Array$ and $\Bag$ we need to be very careful or otherwise Agda would be stuck trying to display
    the normalized form of $\Array$ and $\Bag$ in the goal and context menu. Type-checking also becomes a lengthy
    process that tests if the user possesses the virtue of patience.

    However, performing arbitrary partitioning with $\Array$ and $\Bag$ is much easier than
    $\List$, $\SList$, $\PList$. For example,
    one can simply use the combinator ${\Fin[n+m] \xto{\sim} \Fin[n] + \Fin[m]}$ to partition the array,
    then perform operations on the partitions, such as swapping in~\cref{bag:comm},
    or perform operations on the partitions individually, such as two individual permutations in~\cref{bag:cong}.
    This makes it so that when defining divide-and-conquer algorithms like merge sort,
    $\Bag$ and $\Array$ are more natural representations to work with than $\List$, $\SList$, or $\PList$.
\end{toappendix}

\section{Sorting Functions}%
\label{sec:application}

We will now put to work the universal properties of our types of (ordered) lists and unordered lists,
to define operations on them systematically, which are mathematically sound, and then reason about them.
First, we explore definitions of various operations on both free monoids and free commutative monoids.
By univalence (and the structure identity principle~\cite{aczelVoevodskysUnivalenceAxiom2011,univalentfoundationsprogramHomotopyTypeTheory2013}),
everything henceforth holds for any presentation of free monoids and free commutative monoids.
We use $\FF(A)$ to denote the free monoid or free commutative monoid on $A$,
$\LL(A)$ to exclusively denote the free monoid construction,
and $\MM(A)$ to exclusively denote the free commutative monoid construction.

For example $\term{length}$ is a common operation defined inductively for $\List$,
but usually, properties about $\term{length}$, such as,
$\term{length}(\xs \mult \ys) = \term{length}(\xs) + \term{length}(\ys)$,
are proven separately after defining it.
In our framework of free algebras, where the $\ext{(\blank)}$ operation is a correct-by-construction homomorphism,
we can define operations like $\term{length}$ directly by universal extension,
which also gives us a proof that they are homomorphisms for free.
A further application of the universal property is to prove that two different types (such as lists and arrays) are equal,
by showing they both satisfy the same universal property as in~\cref{lem:free-algebras-unique},
which is desirable especially when proving a direct equivalence between the two types turns out
to be a difficult exercise in combinatorics.


\subsection{Prelude}%
\label{sec:prelude}

Any presentation of free monoids or free commutative monoids has a $\term{length} : \FF(A) \to \Nat$ function,
where $\Nat$ carries the additive monoid structure $(0,+)$,
which is also a commutative monoid structure since addition is commutative.
\begin{definition}[\alink{definition}{length}{length}]
      \label{def:length}
      The length homomorphism is defined as
      \(
      \ext{(\lambda x.\, 1)} : \FF(A) \to \Nat
      \).
\end{definition}
Further, any presentation of free monoids or free commutative monoids has an
element membership predicate ${\blank\in\blank} : A \to \FF(A) \to \hProp$,
for any set $A$.
Here, we use the fact that $\hProp$ forms a (commutative) monoid under
disjunction and falsehood $(\bot, \vee)$.
\begin{definition}[\alink{definition}{Membership~$\in$}{membership}]
      \label{def:membership}
      The membership predicate on a set $A$ for any element $x:A$ is
      \(
      {x\in\blank} \defeq \ext{\yo_{A}(x)} : {\FF(A) \to \hProp}
      \),
      where we define
      \(
      \yo_A(x) \defeq {\lambda y.\, {x \id y}} : {A \to \hProp}
      \).
\end{definition}
$\yo$ is formally the Yoneda map under the ``types are groupoids'' correspondence,
where $x:A$ is being sent to its representable in the Hom-groupoid (formed by the identity type), of type $\hProp$.
Note that the proofs of (commutative) monoid laws for $\hProp$ use equality,
which requires the use of univalence (or at least, propositional extensionality).
By construction, this membership predicate satisfies its homomorphic properties,
which are colloquially the properties of inductively defined de Bruijn indices.

We note that $\hProp$ is actually one type level higher than $A$.
To make the type levels explicit, $A$ is of type level $\ell$, and since $\hProp_\ell$
is the type of all types $X : \Set_\ell$ that are mere propositions, $\hProp_\ell$ has
type level $\ell + 1$. We do not assume any propositional resizing axioms~\cite{voevodskyResizingRulesTheir2011},
and instead use Agda's universe polymorphism in our formalisation to accommodate this.

Any presentation of free (commutative) monoids $\FF(A)$ also supports the
$\term{Any}$ and $\term{All}$ predicates, which allow lifting a predicate $A \to \hProp$ (on $A$),
to \emph{any} or \emph{all} elements of $\xs : \FF(A)$, respectively.
We note that $\hProp$ forms a (commutative) monoid in two different ways:
$(\bot,\vee)$ and $(\top,\wedge)$ (disjunction and conjunction),
which are the two different ways of getting $\term{Any}$ and $\term{All}$,
respectively, by extension.
\begin{definition}[$\term{Any}$ and $\term{All}$]
      \label{def:any-all}
      \begin{gather*}
            \type{Any}(P) \defeq \ext{P} : \FF(A) \to (\hProp, \bot, \vee)
            \qquad
            \type{All}(P) \defeq \ext{P} : \FF(A) \to (\hProp, \top, \wedge)
      \end{gather*}
\end{definition}
Note that Cubical Agda has problems with indexing over HITs~\cite[\S~8]{ProperSupportInductive,alexandruIntrinsicallyCorrectSorting2023},
hence it is preferable to program with our universal properties, such as when defining $\term{Any}$ and $\term{All}$,
because the indexed-inductive definitions of these predicates get stuck on $\term{transp}$ terms.

There is a $\term{head}$ function on lists, which is a function that returns the first element of a non-empty list.
Formally, this is a monoid homomorphism from $\LL(A)$ to $1 + A$.
\begin{definition}[\alink{definition}{$\term{head}$}{head}]
      \label{def:head-free-monoid}
      The head homomorphism is defined as
      \(
      \term{head} \defeq \ext{\inr} : \LL(A) \to 1 + A
      \),
      where the monoid structure on $1 + A$ has unit
      \(
      e \defeq \inl(\ttt) : 1 + A
      \),
      and multiplication picks the leftmost element that is defined.
      \[
            \begin{array}{rclcl}
                  \inl(\ttt) & \oplus & b & \defeq & b       \\
                  \inr(a)    & \oplus & b & \defeq & \inr(a) \\
            \end{array}
      \]
\end{definition}
This monoid operation $\oplus$ is not commutative,
since swapping the input arguments to $\oplus$ would return the leftmost or rightmost element.
To make it commutative would require a canonical way to pick between a choice of two elements --
this leads us to the next section.

\subsection{Total orders}
\label{sec:total-orders}

First, we recall the axioms of a total order or linear order $\leq$ on a set $A$.
\begin{definition}[\alink{definition}{Total order}{total-order}]
    \label{def:total-order}
    A total order on a set $A$ is a relation $\leq : A \to A \to \hProp$ that satisfies:
    \begin{itemize}
        \item reflexivity: $x \leq x$,
        \item transitivity: if $x \leq y$ and $y \leq z$, then $x \leq z$,
        \item antisymmetry: if $x \leq y$ and $y \leq x$, then $x = y$,
        \item totality: $\forall x, y$, either $x \leq y$ or $y \leq x$.
    \end{itemize}
    A \emph{decidable} total order requires the $\leq$ relation to be decidable:
    \begin{itemize}
        \item decidable totality: $\forall x, y$, we have $x \leq y + \neg(x \leq y)$.
    \end{itemize}
\end{definition}
Note that \emph{either-or} means a (truncated) logical disjunction.
In the context of this paper, we want to make a distinction between
``decidable total order'' and ``total order''.
The decidability axiom strengthens the totality axiom,
where we have either $x \leq y$ or $y \leq x$ merely as a proposition,
but decidability allows us to produce a witness if $x \leq y$ is true.
\begin{proposition}[\alink{proposition}{}{decidable-total-order}]
    \label{prop:decidable-total-order}
    In a decidable total order, it holds that ${\forall x, y}, \ps{x \leq y} + \ps{y \leq x}$.
    Further, this makes $A$ discrete, that is ${\forall x, y}, \ps{x \id y} + \ps{x \neq y}$.
\end{proposition}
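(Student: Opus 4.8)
The plan is to first upgrade the (truncated) totality axiom to an honest coproduct using the decidability hypothesis, and then read off discreteness from that coproduct together with antisymmetry and reflexivity.

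For the first claim, fix $x, y : A$ and case on the witness of $(x \leq y) + \neg(x \leq y)$ supplied by decidable totality. The left case gives the result via $\inl$ immediately. In the right case, where $\neg(x \leq y)$ holds, I need to produce an element of $y \leq x$; since $\leq$ is valued in $\hProp$, the type $y \leq x$ is a proposition, so I may eliminate the truncated disjunction provided by plain totality into it --- its $x \leq y$ branch is discharged by ex falso against $\neg(x \leq y)$, and its $y \leq x$ branch is exactly the goal. Hence $y \leq x$ holds, and $\inr$ finishes. Note $(x \leq y) + (y \leq x)$ need not itself be a proposition --- both summands hold when $x \id y$ --- so the point is only that we exhibit an inhabitant, not a canonical one.

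For discreteness, fix $x, y : A$, case on the coproduct $(x \leq y) + (y \leq x)$ just obtained, and, the two cases being symmetric, assume $x \leq y$. Applying decidable totality to the pair $(y, x)$ gives $(y \leq x) + \neg(y \leq x)$. If $y \leq x$, antisymmetry applied to $x \leq y$ and $y \leq x$ yields $x \id y$, so return $\inl$. If $\neg(y \leq x)$, then $x \neq y$: from a hypothetical $p : x \id y$ one transports the reflexivity proof $y \leq y$ along $p^{-1}$ to obtain $y \leq x$, contradicting $\neg(y \leq x)$; so return $\inr$. These cases are exhaustive, giving $(x \id y) + (x \neq y)$.

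I do not anticipate a real obstacle: the statement is essentially routine. The only point worth care is the elimination out of the truncated disjunction in the first claim, which is licensed precisely because $\leq$ lands in $\hProp$. That is the single place the totality axiom is used, and it is what lets decidability of $\leq$ propagate to a genuine coproduct; everything else is bookkeeping with transport of reflexivity and ex falso.
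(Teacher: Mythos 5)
Your proposal is correct and uses essentially the same ingredients as the paper's proof: decidability of $\leq$ in both directions, antisymmetry to produce $x \id y$, reflexivity (transported along a hypothetical equality) to refute it, and elimination of the merely-truncated totality disjunction into a proposition when both decisions come out negative. The only difference is organisational --- you derive the untruncated coproduct first and then reuse it, whereas the paper does a single four-way case split --- which does not change the argument.
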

\begin{proof}
    We decide if $x \leq y$ and $y \leq x$, and then by cases:
    \begin{itemize}
        \item
              if $x \leq y$ and $y \leq x$: by antisymmetry, $x = y$.
        \item
              if $\neg(x \leq y)$ and $y \leq x$: assuming $x = y$ leads to a contradiction, hence $x \neq y$.
        \item
              if $x \leq y$ and $\neg(y \leq x)$: similar to the previous case.
        \item
              if $\neg(x \leq y)$ and $\neg(y \leq x)$: by totality, either
              $x \leq y$ or $y \leq x$, which leads to a contradiction.
    \end{itemize}
\end{proof}
We also recall the axioms of a \emph{strict} total order $<$ on $A$.
\begin{definition}[\alink{definition}{Strict total order}{strict-total-order}]
    \label{def:strict-total-order}
    A strict total order on a set $A$ is a relation $< : A \to A \to \hProp$ that satisfies:
    \begin{itemize}
        \item irreflexivity: $\neg(x < x)$,
        \item transitivity: if $x < y$ and $y < z$, then $x < z$,
        \item asymmetry: if $x < y$, then $\neg(y < x)$,
        \item cotransitivity: $\forall x, y, z$, if $x < z$, then either $x < y$ or $y < z$.
        \item connectedness: $\forall x, y$, if $\neg(x < y)$ and $\neg(y < x)$, then $x = y$.
    \end{itemize}
    A \emph{decidable} strict total order requires the $<$ relation to be decidable:
    \begin{itemize}
        \item decidability: $\forall x, y$, we have $x < y + \neg(x < y)$.
    \end{itemize}
\end{definition}
\begin{proposition}[\alink{proposition}{}{decidable-strict-total-order}]
    \label{prop:decidable-strict-total-order}
    In a decidable strict total order, it holds that ${\forall x, y}, \ps{x < y} + \ps{y < x}$.
    Further, this makes $A$ discrete, that is ${\forall x, y}, \ps{x \id y} + \ps{x \neq y}$.
\end{proposition}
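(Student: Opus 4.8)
The plan is to reuse the strategy of~\cref{prop:decidable-total-order}, with connectedness from~\cref{def:strict-total-order} taking over the role played there by antisymmetry together with totality. Fix $x, y : A$ and apply decidability of $<$ to decide whether $x < y$: if $x < y$ we land in the left summand. If $\neg(x < y)$, decide whether $y < x$; if $y < x$ we land in the right summand. If both $\neg(x < y)$ and $\neg(y < x)$, then connectedness gives $x \id y$. Since this last case genuinely occurs (whenever $x = y$), the statement actually established this way is the trichotomy $(x < y) + (x \id y) + (y < x)$, which is the intended reading of the first assertion; its three branches are pairwise incompatible, since asymmetry excludes $x < y$ together with $y < x$, and transporting a strict inequality along $x \id y$ and invoking irreflexivity excludes $x \id y$ from coexisting with either strict inequality.

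For discreteness, I would run the same three-way decision and extract an inhabitant of $(x \id y) + (x \neq y)$ in each branch. If $x < y$ then $x \neq y$: from $x \id y$ we could transport $x < y$ to $x < x$, contradicting irreflexivity; the branch $y < x$ is symmetric. If neither $x < y$ nor $y < x$ holds, connectedness supplies $x \id y$ directly. These three branches are exhaustive, so the required element of $(x \id y) + (x \neq y)$ is produced in every case.

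I do not expect a genuine obstacle here: the whole argument is a bounded case analysis driven entirely by the decidability hypothesis, structurally identical to the proof of~\cref{prop:decidable-total-order}. The two points deserving a little attention are (i) noticing that the decision ``$x < y$ or $y < x$'' must be widened to also record the possibility $x \id y$, i.e. that the correct shape is a trichotomy rather than a bare dichotomy; and (ii) the one-line transport argument deriving $x < y \to x \neq y$ from irreflexivity, which is precisely the ingredient that lets a decidable strict order witness discreteness of $A$.
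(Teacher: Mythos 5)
Your proof is correct and takes essentially the same route as the paper's: the same case analysis driven by deciding $x < y$ and $y < x$, with connectedness handling the case where both fail and irreflexivity (via transport along $x \id y$) giving $x \neq y$ in the strict-inequality cases. Your remark that the first assertion must really be read as the trichotomy $(x < y) + (x \id y) + (y < x)$ is a fair observation; the paper's own proof likewise only performs this case split and in substance establishes discreteness.
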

\begin{proof}
    We decide if $x < y$ and $y < x$, and by cases:
    \begin{itemize}
        \item if $\neg(x < y)$ and $\neg(y < x)$: by connectedness, $x = y$.
        \item if $x < y$ or $y < x$: by irreflexivity, $x \neq y$.
    \end{itemize}
\end{proof}
We can show that giving a decidable \emph{strict} total order structure on $A$
is equivalent to giving a decidable total order structure on $A$.
In practice, however, this equivalence is thoroughly awful to work with in a
proof assistant, leading to horrible goals when formalising proofs that shift
between these representations.
\begin{proposition}[\alink{proposition}{}{decidable-strict-total-order-equiv}]
    \label{prop:decidable-strict-total-order-equiv}
    The set of decidable strict total orders on $A$ and the set of decidable total orders on $A$ are equivalent.
\end{proposition}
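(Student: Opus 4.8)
The plan is to build the two maps explicitly, check that each lands in the right type, and then show they are mutually inverse. Given a decidable total order $\leq$, set $x < y \defeq (x \leq y) \wedge (x \neq y)$; given a decidable strict total order $<$, set $x \leq y \defeq (x < y) \vee (x = y)$. The key preliminary observation is that both sides of the claimed equivalence are \emph{subtypes} of $A \to A \to \hProp$: every order axiom is proposition-valued, and decidability $\forall x\, y,\ (x \mathrel{R} y) + \neg(x \mathrel{R} y)$ is itself a proposition, since $P + \neg P$ is a proposition whenever $P$ is. Consequently the two structure types are sets, a quasi-inverse pair between them is automatically an equivalence, and it suffices to produce functions back and forth whose composites fix the underlying relation; by function extensionality and propositional extensionality (consequences of univalence) this in turn reduces to checking a pointwise logical equivalence of relations.

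Next I would discharge the well-definedness obligations. For $x < y \defeq (x \leq y) \wedge (x \neq y)$, irreflexivity, transitivity, asymmetry and decidability fall out of the corresponding properties of $\leq$ together with discreteness of $A$ (\cref{prop:decidable-total-order}); the two slightly delicate axioms are cotransitivity, proved by deciding $y \leq x$ and using totality and antisymmetry in the two branches, and connectedness, proved by using totality to obtain $x \leq y$ or $y \leq x$ and then discharging the resulting double negation of $x = y$ via discreteness. Dually, for $x \leq y \defeq (x < y) \vee (x = y)$, reflexivity is the right disjunct, transitivity and antisymmetry follow by case analysis on the (truncated) disjunctions using transitivity, irreflexivity and asymmetry of $<$ — the truncation may be eliminated since the goals are propositions — totality is obtained by deciding $x < y$ and then $y < x$ and invoking connectedness in the final branch, and decidability holds because a disjunction of decidable propositions is decidable.

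Finally I would verify the two round trips, each amounting to a pointwise logical equivalence. Starting from $\leq$ and forming $((x \leq y) \wedge (x \neq y)) \vee (x = y)$: one direction is immediate (reflexivity handles the $x = y$ case), and the other splits on $x = y$ versus $x \neq y$ using discreteness. Starting from $<$ and forming $((x < y) \vee (x = y)) \wedge (x \neq y)$: the $x \neq y$ conjunct eliminates the $x = y$ disjunct, leaving $x < y$, while conversely $x < y$ yields the left disjunct and, by irreflexivity, $x \neq y$. I do not expect a real obstacle; the points to be careful about are tracking which goals are propositions (so that truncated disjunctions and double negations can be eliminated using discreteness of $A$), and establishing \emph{at the outset} that the decidability axiom is proposition-valued, so that the entire argument can be run at the level of relations rather than full structures.
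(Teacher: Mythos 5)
Your proposal is correct and follows essentially the same route as the paper: the same two maps $x < y \defeq (x \leq y) \wedge (x \neq y)$ and $x \leq y \defeq (x < y) \vee (x = y)$, checked to be mutually inverse, with the paper's sketch leaving implicit the propositionality and round-trip details you spell out. Your additional observations (that the axioms and decidability are proposition-valued, so the equivalence reduces to a pointwise logical equivalence of relations) are exactly the bookkeeping the paper's ``checked to be inverses of each other'' elides.
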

\begin{proofsketch}
    Given a decidable total order $\leq$ on $A$, by~\cref{prop:decidable-total-order},
    we map it to a decidable strict total order $<$ on $A$ by
    $x < y \defeq (x \leq y) \times (x \neq y)$.
    Vice versa, by~\cref{prop:decidable-strict-total-order},
    given a decidable strict total order $<$ on $A$,
    we map it to a decidable total order $\leq$ on $A$ by
    $x \leq y \defeq (x < y) + (x = y)$.
    These maps are checked to be inverses of each other.
\end{proofsketch}
An equivalent way of defining a total order is using a binary meet operation, without starting from an ordering relation.
\begin{definition}[\alink{definition}{Meet semi-lattice}{meet-semi-lattice}]
    \label{def:meet-semi-lattice}
    A meet semi-lattice is a set $A$ with a binary operation $\blank\meet\blank : A \to A \to A$ that is:
    \begin{itemize}
        \item idempotent: $x \meet x \id x$,
        \item associative: $(x \meet y) \meet z \id x \meet (y \meet z)$,
        \item commutative: $x \meet y \id y \meet x$.
    \end{itemize}
    A \emph{total} meet semi-lattice further satisfies:
    \begin{itemize}
        \item total: $\forall x, y$, either $x \meet y \id x$ or $x \meet y \id y$.
    \end{itemize}
    A \emph{decidable} total meet semi-lattice strengthens this to:
    \begin{itemize}
        \item decidable totality: ${\forall x, y}, \ps{x \meet y \id x} + \ps{x \meet y \id y}$.
    \end{itemize}
\end{definition}

\begin{proposition}[\alink{proposition}{}{discrete-meet-semi-lattice}]
    \label{prop:discrete-meet-semi-lattice}
    If $A$ has decidable equality, all total meet semi-lattice structures on $A$ are decidable.
\end{proposition}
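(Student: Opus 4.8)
The plan is to reduce decidability of the totality axiom to decidability of equality on $A$. Fix $x, y : A$; we must construct an element of $\ps{x \meet y \id x} + \ps{x \meet y \id y}$. Since $A$ has decidable equality, we may decide whether the two elements $x \meet y$ and $x$ of $A$ are equal, and we branch on this decision.

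If $x \meet y \id x$, then $\inl$ of this proof is the witness we want. If instead $\neg(x \meet y \id x)$, we appeal to the (non-decidable) totality axiom of the meet semi-lattice, which tells us that \emph{merely} $x \meet y \id x$ or $x \meet y \id y$ holds. We eliminate this propositional truncation into the type $x \meet y \id y$, which is legitimate because $A$ is a set and hence $x \meet y \id y$ is a proposition: in the first disjunct we would obtain $x \meet y \id x$, contradicting our case hypothesis, so that branch is vacuous; in the second disjunct we obtain $x \meet y \id y$ on the nose. Hence $\inr$ of this equality is the desired witness, and we have produced the required element of $\ps{x \meet y \id x} + \ps{x \meet y \id y}$ in both cases. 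The idempotence, associativity, and commutativity axioms are untouched, so this upgrades any total meet semi-lattice structure on a type with decidable equality to a decidable one.

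I do not expect a real obstacle here; the only point requiring care is the appeal to the universal property of propositional truncation in the second case, which relies on noticing that the target $x \meet y \id y$ is a proposition because $A$ is assumed to be a set. With that observation, the mere disjunction coming from totality can be turned into an actual proof of $x \meet y \id y$ once the $x \meet y \id x$ disjunct has been ruled out. One could equivalently decide both $x \meet y \id x$ and $x \meet y \id y$ separately and check that the case where both fail contradicts totality, but routing through a single decision followed by truncation elimination is the cleaner route.
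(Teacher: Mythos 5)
Your proposal is correct and follows essentially the same route as the paper: both reduce decidability of totality to decidable equality on $A$ and then invoke the mere totality axiom to rule out the remaining case (the paper decides both equalities $x \meet y \id x$ and $x \meet y \id y$ at once, while you decide one and eliminate the truncated disjunction into the proposition $x \meet y \id y$ — a cosmetic difference). Your observation that the truncation elimination is justified because $A$ is a set is exactly the point that makes the argument go through.
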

\begin{proof}
    We decide if $x \meet y \id x$ and $x \meet y \id y$ to compute whether
    $x \meet y \id x$ or $x \meet y \id y$. By totality, one of them must hold.
\end{proof}

\begin{proposition}[\alink{proposition}{}{total-order-meet-semi-lattice}]
    \label{prop:total-order-meet-semi-lattice}
    A total order structure on a set $A$ is equivalent to a total meet semi-lattice structure on $A$.
    Further, a decidable total order on $A$ induces a decidable total meet semi-lattice structure on $A$.
\end{proposition}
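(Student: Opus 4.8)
The plan is to exhibit maps in both directions between total orders on $A$ and total meet semi-lattice structures on $A$, check they are mutually inverse, and then read off the decidable refinement from the same constructions together with \cref{prop:decidable-total-order}. From a total order $\leq$ I would define the meet as the binary minimum: given $x,y:A$, totality supplies an element of the propositional truncation $\lVert (x \leq y) + (y \leq x)\rVert$, and I send the left case to $x$ and the right case to $y$. This descends to a genuine element of $A$: since $A$ is a set it suffices that the map $(x \leq y) + (y \leq x) \to A$ be weakly constant, and the only nontrivial overlap is $x \leq y$ together with $y \leq x$, which forces $x = y$ by antisymmetry. Unfolding, this yields the rewriting facts $x \leq y \Rightarrow x \meet y = x$ and $y \leq x \Rightarrow x \meet y = y$, which are all that is used afterwards. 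Idempotence is immediate from reflexivity, commutativity is the symmetry of the two clauses, and totality of $\meet$ is obtained by pushing the totality truncation along the two rewriting facts into $\lVert (x \meet y = x) + (x \meet y = y)\rVert$. For associativity I would note that an equality in the set $A$ is a proposition, so the relevant truncations may be eliminated freely; a case split on $x$-versus-$y$ and $y$-versus-$z$ (four cases) then shows $(x \meet y)\meet z$ and $x \meet (y \meet z)$ collapse to the same one of $x,y,z$ — using transitivity in the two ``aligned'' cases, while in the other two both sides are literally $x \meet z$.

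Conversely, from a total meet semi-lattice I would set $x \leq y \defeq (x \meet y = x)$, which is $\hProp$-valued since $A$ is a set. Reflexivity is idempotence; antisymmetry is commutativity, via $x = x \meet y = y \meet x = y$; transitivity is the computation $x \meet z = (x \meet y)\meet z = x \meet (y \meet z) = x \meet y = x$ using associativity and the two hypotheses; and totality of $\leq$ follows because meet-totality gives $x \meet y = x$ or $x \meet y = y$, the latter being $y \leq x$ by commutativity.

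To see the two constructions are inverse, start from $\leq$ and return to $\leq'$, where $x \leq' y$ unfolds to $x \meet y = x$. If $x \leq y$ then $x \meet y = x$ by the rewriting fact; if $x \meet y = x$, eliminate the totality truncation into the proposition $x \leq y$ — the case $x \leq y$ is immediate, and the case $y \leq x$ gives $x \meet y = y$, hence $x = y$ and thus $x \leq y$ by reflexivity. So $\leq' = \leq$ by propositional extensionality, and the axiom witnesses agree automatically as they are propositions. Starting instead from $\meet$ and returning to $\meet'$: the goal $x \meet' y = x \meet y$ is a proposition (an equality in a set), so eliminate the truncation and check the two cases directly, using commutativity in the case $x \meet y = y$; this gives $\meet' = \meet$ by function extensionality. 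Hence the two structure types are equivalent. For the decidable statement, \cref{prop:decidable-total-order} upgrades the totality truncation of a decidable total order to the untruncated $(x \leq y) + (y \leq x)$, and feeding this through the two rewriting facts yields $(x \meet y = x) + (x \meet y = y)$, i.e.\ decidable totality of the induced meet semi-lattice (with $A$ discrete, again by \cref{prop:decidable-total-order}).

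The main obstacle is the associativity proof for the induced meet: the mathematics is a short case analysis, but one must be disciplined about \emph{when} the totality truncations may be eliminated — only into propositions, in particular into equalities of the set $A$ — and about invoking antisymmetry on the overlap so that the minimum stays well defined. In a formalisation this bookkeeping, rather than any conceptual difficulty, is where essentially all of the effort lies.
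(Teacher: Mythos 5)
Your proposal is correct and takes essentially the same route as the paper: both directions are built identically, defining $x \meet y$ as the weakly/locally constant ``$\term{if } x \leq y \term{ then } x \term{ else } y$'' eliminated from the truncated totality into the set $A$, recovering the order as $x \leq y \defeq (x \meet y \id x)$, and transferring decidability via \cref{prop:decidable-total-order}. The only difference is cosmetic: the paper verifies the semilattice laws through the universal property ${c \leq a \meet b} \Leftrightarrow {c \leq a} \land {c \leq b}$ and $\yo$-style arguments, whereas you check them (notably associativity) by direct case analysis on the totality witnesses, which works just as well.
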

\begin{proofsketch}
    Given a (mere) total order $\leq$ on a set $A$,
    we define ${x \meet y} \defeq \term{if} x \leq y \term{then} x \term{else} y$.
    Crucially, this map is \emph{locally-constant}, allowing us to eliminate from an $\hProp$ to an $\hSet$.
    Meets satisfy the universal property of products, that is,
    ${c \leq a \meet b} \Leftrightarrow {c \leq a} \land {c \leq b}$,
    and the axioms follow by calculation using $\yo$-arguments.
    Conversely, given a meet semi-lattice, we define $x \leq y \defeq x \meet y \id x$,
    which defines an $\hProp$-valued total ordering relation.
    If the total order is decidable, by~\cref{prop:decidable-total-order} and~\cref{prop:discrete-meet-semi-lattice},
    the mapped meet operation is decidable.
    The converse does not hold however,
    as a decidable total meet semi-lattice structure does not imply discreteness of $A$.
\end{proofsketch}
Finally, tying this back to~\cref{def:head-free-monoid}, we have the following for free commutative monoids.
\begin{definition}[\alink{definition}{$\term{head}$}{head-free-commutative-monoid}]
    \label{def:head-free-commutative-monoid}
    Assume a total order $\leq$ on a set $A$.
    We define a commutative monoid structure on $1 + A$,
    with unit \(e \defeq \inl(\ttt) : 1 + A\), and multiplication defined as:
    \[
        \begin{array}{rclcl}
            \inl(\ttt) & \oplus & b          & \defeq & b                         \\
            \inr(a)    & \oplus & \inl(\ttt) & \defeq & \inr(a)                   \\
            \inr(a)    & \oplus & \inr(b)    & \defeq & \inr(a \meet b) \enspace.
        \end{array}
    \]
    This gives a homomorphism \({\term{head} \defeq \ext{\inr}} : {\MM(A) \to 1 + A}\),
    which picks out the \emph{least} element of the free commutative monoid.
\end{definition}

\subsection{Sorting functions}
\label{sec:sorting}

The free commutative monoid is also a monoid, hence, there is a canonical monoid homomorphism
$\quotient : \LL(A) \to \MM(A)$, which is given by $\ext{\eta_A}$, the extension of the unit $\eta_A : A \to \MM(A)$.
Since $\MM(A)$ is (upto equivalence), a quotient of $\LL(A)$ by symmetries (or a permutation relation),
it is a surjection (in particular, a regular epimorphism in $\Set$ as constructed in type theory),
the canonical inclusion into the quotient.
Concretely, $\quotient$ simply includes the elements of $\LL(A)$ into equivalence classes of lists in $\MM(A)$,
which ``forgets'' the order that was imposed by the indexing of the list.

Classically, assuming the Axiom of Choice would allow us to construct a section ({right-inverse}, in $\Set$) to the surjection $\quotient$,
that is,
a function $\ssection : \MM(A) \to \LL(A)$ such that $\forall x.\, \quotient(\ssection(x)) \id x$.
Or in informal terms, given the surjective inclusion into the quotient,
a section (uniformly) picks out a canonical representative for each equivalence class.
The core question we want to study is the existence of $\ssection$ in a constructive
setting, or equivalently, whether the order factored out by the symmetry quotient
can be constructively recovered.
\begin{figure}
    \centering
    \scalebox{1.0}{
        \begin{tikzcd}[ampersand replacement=\&,cramped]
            {\LL(A)} \&\&\& {\MM(A)}
            \arrow["\ssection", curve={height=-10pt}, from=1-4, to=1-1]
            \arrow["\quotient", two heads, from=1-1, to=1-4]
        \end{tikzcd}
    }
    \caption{Relationship of $\LL(A)$ and $\MM(A)$}
    \label{fig:enter-label}
\end{figure}
Viewing the quotienting relation as a permutation relation (from~\cref{sec:cmon:qfreemon}), a section $\ssection$ to $\quotient$ has to pick out
canonical representatives of equivalence classes generated by permutations.
Using $\SList$ as an example,
$\ssection(x \cons y \cons \xs) \id \ssection(y \cons x \cons \xs)$ for any $x, y : A$ and $\xs : \SList(A)$,
by $\term{swap}$.
Since $\forall \xs.\,\quotient(\ssection(\xs)) \id \xs$, $\ssection$ must preserve all the elements of $\xs$.
It cannot be a trivial function such as $\lambda\,\xs. []$ -- it must produce a permutation of the elements of $\xs$.
But to place these elements side-by-side in the list, $\ssection$ must somehow impose an order on $A$
(invariant under permutation), turning unordered lists of $A$ into ordered lists of $A$.
Axiom of Choice (AC) giving us a section $\ssection$ to $\quotient$ ``for free'' is analagous to how
AC implies the well-ordering principle, which states every set can be well-ordered.
Thus, if AC was assumed, we could easily recover an order on $A$ from the section $\ssection$.
Instead we study how to constructively define such a section, and in fact,
that is exactly the extensional view of a sorting algorithm,
and the implications of its existence is that $A$ can be ordered, or sorted.

\subsubsection{Section from Order}
\label{sec:sort-section-from-order}

\begin{proposition}[\alink{proposition}{}{sort-from-order}]
    Assume a decidable total order on $A$. There is a sort function $\ssection: \MM(A) \to \LL(A)$
    which constructs a section to $\quotient : \LL(A) \twoheadrightarrow \MM(A)$
\end{proposition}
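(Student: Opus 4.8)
The plan is to construct $\ssection$ by insertion sort, but phrased entirely through the universal property of $\MM(A)$, so that the section equation comes out as a formal consequence rather than a separate induction. First I would unpack what the hypothesis actually buys us: by \cref{prop:decidable-total-order} the carrier $A$ is discrete, and by \cref{prop:total-order-meet-semi-lattice} we get a (decidable) meet $\meet$; concretely all that is used below is the decidable comparison $(x \leq y) + (y \leq x)$.

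The next, and only substantial, step is to exhibit a commutative monoid that maps into $\LL(A)$. Let $\term{insert} : A \to \LL(A) \to \LL(A)$ place an element into its $\leq$-correct position (comparing against each head with decidable totality), let ``sorted'' be the evident propositional predicate on $\LL(A)$, and let $\type{SortedList}(A)$ be its total space inside $\LL(A)$. On $\type{SortedList}(A)$ define $\term{merge}$ by the usual two-pointer recursion. I would then verify: (i) merging two sorted lists is sorted; (ii) $\nil$ is a two-sided unit; (iii) $\term{merge}$ is associative; (iv) $\term{merge}$ is commutative; and, separately, the bookkeeping lemma that the underlying list of $\term{merge}(s,t)$ is a permutation of the concatenation of the underlying lists of $s$ and $t$. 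Items (i)--(iii) and the bookkeeping lemma are routine inductions on the merge recursion; item (iv) is the crux (see below). This makes $(\type{SortedList}(A), \nil, \term{merge})$ a commutative monoid, with the singleton map $\iota : A \to \type{SortedList}(A)$, $a \mapsto [a]$.

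Now I would apply the universal property of $\MM(A)$ --- \cref{prop:qfreemon} for $\MM = \quot{\LL}{\approx}$, transported to an arbitrary presentation by univalence as discussed at the start of \cref{sec:application} --- to extend $\iota$ to a commutative monoid homomorphism $\exthat{\iota} : \MM(A) \to \type{SortedList}(A)$, and set $\ssection \defeq \pi_1 \comp \exthat{\iota} : \MM(A) \to \LL(A)$; concretely this is the insertion sort of any representative. For the section equation $\quotient \comp \ssection = \idfunc_{\MM(A)}$, observe that $\quotient : \LL(A) \to \MM(A)$, being a monoid homomorphism into a commutative monoid, sends permutation-related lists to equal elements; combined with the bookkeeping lemma this shows that $\quotient$ restricted along $\type{SortedList}(A) \hookrightarrow \LL(A)$ is again a commutative monoid homomorphism into $\MM(A)$. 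Hence $\quotient \comp \ssection = \quotient|_{\type{SortedList}(A)} \comp \exthat{\iota}$ is a composite of commutative monoid homomorphisms, so by the uniqueness half of the universal property it is pinned down by its values on generators, where $\quotient(\ssection(\eta_A(a))) = \quotient([a]) = \eta_A(a)$; therefore it equals the identity.

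The main obstacle is step (iv): commutativity of $\term{merge}$ is the one place where totality and antisymmetry of $\leq$ are genuinely needed, and the two-pointer induction is slightly delicate because the two decisions $(x\leq y)+(y\leq x)$ and $(y\leq x)+(x\leq y)$ must be reconciled on the diagonal $x=y$ (antisymmetry is what saves the head comparison, and the tails are handled by the induction hypothesis together with a symmetry lemma for $\term{merge}$). Everything downstream is formal manipulation of homomorphisms out of $\MM(A)$. A self-contained alternative is selection sort --- repeatedly extracting $\term{head}$ from \cref{def:head-free-commutative-monoid} and recursing on $\term{length}$ --- but this trades the merge lemmas for reasoning about removing a single occurrence of the minimum from an abstract element of $\MM(A)$ and for a well-founded recursion, so I would keep the universal-property route as the primary argument.
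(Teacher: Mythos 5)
Your proposal is correct, but it takes a genuinely different route from the paper. The paper's proof is the direct one: implement insertion sort by structural recursion on the inductive presentations ($\SList(A)$ and $\List(A)$), checking that $\term{insert}$ respects the $\term{swap}$ generator, and then prove $\quotient \comp \ssection = \idfunc$ by an inductive argument that sorting a representative merely permutes its elements before $\quotient$ forgets the order again. You instead package sortedness into an algebra: the set of $\leq$-sorted lists with $\nil$ and $\term{merge}$ is a commutative monoid, the singleton map extends by the universal property of $\MM(A)$ to $\exthat{\iota}$, and the section equation falls out formally because $\quotient$ restricted to sorted lists is again a commutative monoid homomorphism (via your permutation bookkeeping lemma), so $\quotient \comp \ssection$ and $\idfunc$ are homomorphisms out of $\MM(A)$ agreeing on generators. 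What your route buys is that the correctness proof needs no induction on $\MM(A)$ beyond the universal property, which is very much in the spirit of the paper's ``correct-by-construction by extension'' philosophy; what it costs is the commutative-monoid laws for $\term{merge}$, where associativity has the same diagonal subtleties as commutativity --- the cleanest discharge of both is exactly the uniqueness-of-sorted-permutations fact that the paper records later as the lemma behind $\isSorted$, i.e.\ that two sorted lists with equal image under $\quotient$ are equal, rather than a bare two-pointer induction. The paper's choice trades this algebraic overhead for simpler component lemmas about $\term{insert}$ plus a direct induction for the section equation; both arguments are sound, and your identification of merge commutativity (with antisymmetry reconciling the tie case) as the crux is accurate.
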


\begin{proofsketch}
    We may construct such a sort function by implementing any sorting algorithm.
    In our formalisiation we chose insertion sort,
    because it can be defined easily using the inductive structure of $\SList(A)$ and $\List(A)$.
    To implement other sorting algorithms like mergesort,
    other representations such as $\Bag$ and $\Array$ would be preferable,
    as explained in~\cref{sec:cmon:bag}.
    To see how this proposition holds: $\quotient(\ssection(\xs))$ first orders an unordered list $\xs$ by $\ssection$,
    then forgets the order again by $\quotient$ --
    imposing and then forgetting an order on $\xs$ simply \emph{permutes} its elements,
    which proves $\quotient \comp \ssection = \idfunc$.
\end{proofsketch}

\subsubsection{Order from Section}
\label{sec:sort-order-from-section}

The previous section allowed us to construct a section, but an arbitrary section may not be a sorting function.
To show a section is indeed a sort function, we need to show the section imposes some total order on $A$ which it sorts by.
Indeed, just by the virtue of $\ssection$ being a section,
we can \emph{almost} construct a total-order on the carrier set.

\begin{definition}[\alink{definition}{$\term{least}$}{least}]
    \label{def:least}
    Given a section $\ssection$, we define a relation $\leqs$ parametrised by $\ssection$:
    \[
        x \leqs y \defeq \term{head}(\ssection(\bag{x, y})) = \inr(x) \enspace.
    \]
\end{definition}
Recall that the head homomorphism $\term{head} \defeq \ext{\inr} : \LL(A) \to 1 + A$ (as in~\cref{def:head-free-monoid}) is defined by equipping $1 + A$ with a monoid structure with unit $e \defeq \inl(\ttt)$ and multiplication defined as $\inl(\ttt) \oplus b \defeq b$ and $\inr(a) \oplus b \defeq \inr(a)$, which picks out the first (leftmost) element of a list.
Here, we take the two-element bag $\bag{x, y}$,
``sort'' it by $\ssection$, and test if the $\term{head}$ element is $x$.
Note, this is equivalent to $x \leqs y \defeq s\bag{x, y} = [x,y]$,
because $\ssection$ preserves length, and the second element is forced to be $y$.
%

\begin{proposition}[\alink{proposition}{}{sort-almost-order}]
    \label{sort:almost-order}
    $\leqs$ is reflexive, antisymmetric, and total.
\end{proposition}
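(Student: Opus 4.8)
The plan is to derive all three properties from two structural facts about the section $\ssection$. \textbf{(a) Length preservation:} since $\term{length}$ is in each case the extension $\ext{(\lambda x.\,1)}$, the identities of universal extension give $\term{length}_{\LL} \id \term{length}_{\MM} \comp \quotient$, hence $\term{length}(\ssection(\ys)) \id \term{length}(\quotient(\ssection(\ys))) \id \term{length}(\ys)$ using that $\ssection$ is a section. \textbf{(b) Membership reflection:} by the same uniqueness argument, $(a \in \blank)$ on $\LL(A)$ equals $(a \in \blank)$ on $\MM(A)$ precomposed with $\quotient$ (both are the extension of $\yo_A(a)$), so $a \in \ssection(\ys) \id a \in \quotient(\ssection(\ys)) \id a \in \ys$. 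Combining these with the elementary fact that a list of positive length is nonempty and its $\term{head}$ is then $\inr$ of some element belonging to the list (a one-line computation on the homomorphism $\ext{\inr}$), I obtain the key observation: for any $x,y : A$ there is $a : A$ with $\term{head}(\ssection(\bag{x,y})) \id \inr(a)$ and $a \in \bag{x,y}$, and unfolding membership in the free commutative monoid shows $a \in \bag{x,y}$ is exactly the proposition $(a \id x) \vee (a \id y)$.

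\textbf{Reflexivity.} Instantiating the observation with $y \defeq x$ gives $a$ with $\term{head}(\ssection(\bag{x,x})) \id \inr(a)$ and $(a \id x) \vee (a \id x)$, which collapses to $a \id x$. Hence $\term{head}(\ssection(\bag{x,x})) \id \inr(x)$, i.e.\ $x \leqs x$.

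\textbf{Totality.} For arbitrary $x,y$, take $a$ as in the observation, so $(a \id x) \vee (a \id y)$. Since the goal of totality is itself the mere disjunction $(x \leqs y) \vee (y \leqs x)$, I may case on this truncation. If $a \id x$ then $\term{head}(\ssection(\bag{x,y})) \id \inr(x)$, so $x \leqs y$. If $a \id y$, note $\bag{x,y} \id \bag{y,x}$ by commutativity of $\MM(A)$, hence $\ssection(\bag{y,x}) \id \ssection(\bag{x,y})$ and therefore $\term{head}(\ssection(\bag{y,x})) \id \inr(a) \id \inr(y)$, i.e.\ $y \leqs x$.

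\textbf{Antisymmetry.} This step uses only that $\ssection$ is a function and that $\MM(A)$ is commutative. From $x \leqs y$ and $y \leqs x$ we have $\term{head}(\ssection(\bag{x,y})) \id \inr(x)$ and $\term{head}(\ssection(\bag{y,x})) \id \inr(y)$; since $\bag{x,y} \id \bag{y,x}$, the two left-hand sides coincide, so $\inr(x) \id \inr(y)$, and injectivity of the coproduct inclusion $\inr$ yields $x \id y$. The only mildly technical point is fact~(b) together with the ``head is a member'' lemma, which is precisely where the section hypothesis on $\ssection$ is genuinely used; both are short once the universal property of free monoids is available. The handling of the propositional truncation in totality is routine because the conclusion is a proposition. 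Note that transitivity (equivalently, connectedness) is conspicuously absent: it is presumably exactly the property that the section axioms alone do not force, which is why the statement stops at reflexivity, antisymmetry, and totality.
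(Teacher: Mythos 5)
Your proof is correct and takes essentially the same route as the paper's: the paper likewise argues that $\term{head}(\ssection(\bag{x,y}))$ must be $\inr(x)$ or $\inr(y)$, obtaining reflexivity by instantiating $y$ with $x$, totality by cases on that disjunction, and antisymmetry from $\bag{x,y} \id \bag{y,x}$ together with injectivity of $\inr$. The only difference is that you make explicit the supporting facts (length preservation and membership reflection along the section, via uniqueness of universal extensions) that the paper's proof leaves implicit.
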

\begin{proof}
    For all $x$, $\term{least}(\bag{x, x})$ must be $\inr(x)$, therefore $x \leqs x$, giving reflexivity.
    For all $x$ and $y$, given $x \leqs y$ and $y \leqs x$,
    we have $\term{least}(\bag{x, y}) = \inr(x)$ and $\term{least}(\bag{y, x}) = \inr(y)$.
    Since $\bag{x, y} = \bag{y, x}$, $\term{least}(\bag{x, y}) = \term{least}(\bag{y, x})$,
    therefore we have $x = y$, giving antisymmetry.
    For all $x$ and $y$, $\term{least}(\bag{x, y})$ is merely either $\inr(x)$ or $\inr(y)$,
    therefore we have merely either $x \leqs y$ or $y \leqs x$, giving totality.
\end{proof}
A crucial observation is that $\ssection$ correctly orders 2-element bags,
but it does not necessarily sort bags with 3 or more elements.
\begin{proposition}[\alink{proposition}{}{counterexample-transitivity}]
    \label{prop:counterexample-transitivity}
    $\leqs$ is not necessarily transitive.
\end{proposition}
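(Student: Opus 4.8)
The plan is to exhibit a concrete section $\ssection$ whose induced relation $\leqs$ contains a $3$-cycle, and is therefore not transitive. Take the carrier to be $A \defeq \Fin[3]$, with elements written $0, 1, 2$; it is discrete and carries the standard decidable total order. By the construction of \cref{sec:sort-section-from-order}, that order yields a section $s_0 : \MM(A) \to \LL(A)$ of $\quotient$ (insertion sort, say), sending each multiset to its unique sorted list; in particular $s_0(\bag{0,1}) = [0,1]$, $s_0(\bag{1,2}) = [1,2]$, and $s_0(\bag{0,2}) = [0,2]$.

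Next I would perturb $s_0$ on a single two-element class. Since $A$ is discrete, $\LL(A)$ has decidable equality, so there is a well-defined, length-preserving map $g : \LL(A) \to \LL(A)$ with $g([0,2]) \defeq [2,0]$ and $g(\xs) \defeq \xs$ for every other $\xs$. The only list $g$ alters it replaces by one of its own permutations, so $\quotient(g(\xs)) \id \quotient(\xs)$ for all $\xs$. Setting $\ssection \defeq g \comp s_0$, we get $\quotient(\ssection(m)) = \quotient(g(s_0(m))) = \quotient(s_0(m)) = m$ for every $m : \MM(A)$, so $\ssection$ is again a section of $\quotient$ (it is of course no longer a correct sorting function, but nothing in \cref{sort:almost-order} or its hypotheses requires that).

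Finally I would read off $\leqs$ on pairs. We have $\ssection(\bag{0,1}) = g([0,1]) = [0,1]$ and $\ssection(\bag{1,2}) = g([1,2]) = [1,2]$, so $\term{head}(\ssection(\bag{0,1})) = \inr(0)$ and $\term{head}(\ssection(\bag{1,2})) = \inr(1)$, i.e.\ $0 \leqs 1$ and $1 \leqs 2$; whereas $\ssection(\bag{0,2}) = g([0,2]) = [2,0]$, so $\term{head}(\ssection(\bag{0,2})) = \inr(2) \neq \inr(0)$, i.e.\ $\neg(0 \leqs 2)$. Hence $\leqs$ fails transitivity. The only point that needs care is the claim that $\ssection$ remains a section after the edit, which is immediate since $g$ never changes the underlying multiset of its argument; the rest is a finite computation on bags of size at most two, where $s_0$ (and hence $\ssection$) is forced up to the single edit we made. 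The same argument adapts to any $A$ with at least three distinct elements $a, b, c$, by perturbing $s_0$ only on the class $\bag{a,c}$.
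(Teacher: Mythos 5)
Your proof is correct and follows essentially the same route as the paper: the paper also takes a genuine sorting section and perturbs it on a single two-element bag (there, $\ssection(\bag{1,3}) \defeq [3,1]$ over $\Nat$, otherwise insertion sort), then reads off $1 \leqs 2$, $2 \leqs 3$ but $\neg(1 \leqs 3)$. Your only deviations are cosmetic -- working over $\Fin[3]$ and realising the perturbation as post-composition with a permuting map $g$, with a slightly more explicit check that the result is still a section.
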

\begin{proof}
    A sorting section $\ssection$ would indeed give a $\leqs$
    that satisfies transitivity. However, we can turn a sorting section
    into a non-sorting misbehaving section that gives a non-transitive $\leqs$.
    Take the sorting section
    $\term{sort} : \SList(\Nat) \to \List(\Nat)$ which sorts $\SList(\Nat)$ ascendingly,
    we can use $\term{sort}$ to construct a counterexample $\ssection$ as follows:
    \begin{align*}
        \ssection(\xs) & \defeq
        \begin{cases}
            [3, 1]           & \text{if $\xs = \bag{1, 3}$} \\
            \term{sort}(\xs) & \text{otherwise}             \\
        \end{cases}
    \end{align*}
    Indeed, we have $1 \leqs 2$ and $2 \leqs 3$ as expected,
    but $\term{least}(\bag{1, 3}) = \inr(3)$, therefore $1 \not\leqs 3$,
    thus violating transitivity.
\end{proof}
To make sure that $\ssection$ is a well-behaved section that recovers a total order on $A$,
we will enforce additional constraints on the \emph{image} of $\ssection$.
\begin{definition}[\alink{definition}{$\blank\in\im{\ssection}$}{in-image}]
    \label{def:in-image}
    The fiber of $\ssection$ at a point in the codomain~$\xs : \LL(A)$
    is given by $\fib_{\ssection}(\xs) \defeq \dsum{ys : \MM(A)}{(\ssection(ys) = \xs)}$.
    The image of $\ssection$ is given by $\im{\ssection} \defeq \dsum{\xs : \LL(A)}{\Trunc[-1]{\fib_{\ssection}(\xs)}}$.
    Simplifying, we say that $\xs:\LL(A)$ is ``in the image of $\ssection$'', or, $\xs \in \im{\ssection}$,
    if there merely exists a $\ys:\MM(A)$ such that $\ssection(\ys) = \xs$.
\end{definition}
If $\ssection$ \emph{were} a sort function, the image of $\ssection$ would be the set of $\ssection$-``sorted'' lists,
therefore $\inimage{\xs}$ would imply $\xs$ is a correctly $\ssection$-``sorted'' list.
First, we note that the 2-element case is correct.
\begin{proposition}[\alink{proposition}{}{sort-to-order}]
    \label{sort:sort-to-order}
    $x \leqs y$ \; iff \; $\inimage{[x, y]}$.
\end{proposition}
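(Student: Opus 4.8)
The plan is to reduce both directions to the single equation $\ssection(\bag{x,y}) = [x,y]$ and then chase the section law. Since $A$ is a set, $1 + A$ is a set, so $x \leqs y$ is a mere proposition; and $\inimage{[x,y]}$ is a proposition by construction. It therefore suffices to exhibit maps in both directions, and whenever the hypothesis is $\inimage{[x,y]}$ we may eliminate the propositional truncation and argue with an actual element of the fibre.

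First I would observe that any $\ys : \MM(A)$ in the fibre of $\ssection$ over $[x,y]$ is forced to equal $\bag{x,y}$: applying $\quotient$ to $\ssection(\ys) = [x,y]$ and using that $\ssection$ is a section gives $\ys = \quotient(\ssection(\ys)) = \quotient([x,y]) = \eta_A(x) \mult \eta_A(y) = \bag{x,y}$, the third step being that $\quotient = \ext{\eta_A}$ is a monoid homomorphism. Hence $\inimage{[x,y]}$ is equivalent to the bare statement $\ssection(\bag{x,y}) = [x,y]$ (both are propositions, $\LL(A)$ being a set). Granting this, the implication $\inimage{[x,y]} \Rightarrow x \leqs y$ is immediate: apply $\term{head}$ to $\ssection(\bag{x,y}) = [x,y]$ to get $\term{head}(\ssection(\bag{x,y})) = \term{head}([x,y]) = \inr(x)$, which is exactly $x \leqs y$.

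The converse, $x \leqs y \Rightarrow \ssection(\bag{x,y}) = [x,y]$, is the substantive step and is the observation recorded after \cref{def:least}. I would argue it as follows. Because $\quotient$ preserves length (by uniqueness of universal extensions, $\term{length} \comp \quotient = \term{length}$) and $\ssection$ is a section, $\ssection(\bag{x,y})$ has length $2$, so it is $a \cons b \cons \nil$ for some $a, b : A$; its head is $\inr(a)$, and injectivity of $\inr$ together with the hypothesis gives $a = x$. It remains to show $b = y$. Applying $\quotient$ gives $\bag{x,b} = \bag{x,y}$ in $\MM(A)$; running this through the membership homomorphism of \cref{def:membership} (which satisfies $z \in \eta_A(w) = (z = w)$ and $z \in (u \mult v) = (z \in u) \vee (z \in v)$) together with the fact that $b \in \bag{x,b}$ holds yields $(b = x) \vee (b = y)$. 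In the case $b = y$ we are done; in the case $b = x$ the same membership argument applied at $y$ to $\bag{x,x} = \bag{x,y}$ forces $y = x$, hence again $b = y$. Both truncation eliminations land in $b = y$, a proposition, so they are legitimate.

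I expect the only genuine obstacle to be this last cancellation-style step — that a two-element list is pinned down by its head together with its image multiset — which in the formalisation is presumably handled either by the membership computation above or by a small lemma asserting that free commutative monoids are cancellative. Everything else is bookkeeping with the section equation and the homomorphism laws of $\quotient$, $\term{length}$, $\term{head}$ and $\in$ that the universal properties already supply.
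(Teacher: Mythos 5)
Your proof is correct and takes essentially the same route as the paper, which gives no written proof here but relies on the remark after the definition of $\leqs$ — that $x \leqs y$ is equivalent to $\ssection(\bag{x,y}) = [x,y]$ because $\ssection$ preserves length and the second element is forced — together with the formalisation. Your argument simply fills in that sketch (length preservation by uniqueness of universal extensions, head comparison, and the membership computation pinning down the second element), so nothing further is needed.
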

\noindent Then, we state the first axiom on $\ssection$.
\begin{definition}[\alink{definition}{$\isheadleast$}{head-least}]
    \label{sort:head-least}
    A section $\ssection$ satisfies $\isheadleast$ iff for all $x, y, \xs$:
    \[
        y \in x \cons \xs \; \land \; \inimage{x \cons \xs} \; \to \; \inimage{[x, y]}
        \enspace.
    \]
\end{definition}
\noindent
There are two different membership symbols in this axiom,
where the first membership is list membership from~\cref{def:membership},
and the second membership is image membership from~\cref{def:in-image}.
The $\in$ symbol is intentionally overloaded to make the axiom look like a logical ``cut'' rule,
that pushes the (least) $x$ element to the head of the list.
Informally, the head of an $\ssection$-``sorted'' list is always the least element of the list.

\begin{proposition}[\alink{proposition}{}{order-to-sort-head-least}]
    \label{prop:order-to-sort-head-least}
    If $A$ has a total order $\leq$,
    the insertion sort function defined using $\leq$ satisfies $\isheadleast$.
\end{proposition}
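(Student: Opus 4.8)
The plan is to reduce the axiom to the statement that the head of any list in the image of insertion sort is the $\leq$-least element occurring in it, and then finish with the two-element case \cref{sort:sort-to-order}. Write $\ssection$ for the insertion-sort section (as in \cref{sec:sort-section-from-order}), given by $\ssection(\nil) \defeq \nil$ and $\ssection(x \cons \xs) \defeq \term{insert}(x, \ssection(\xs))$, where $\term{insert}(a, \nil) \defeq [a]$ and $\term{insert}(a, b \cons \zs)$ is $a \cons b \cons \zs$ when $a \leq b$ and $b \cons \term{insert}(a, \zs)$ otherwise; recall this $\ssection$ is a section, so $\quotient \comp \ssection = \idfunc$. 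Since $\inimage{[x,y]}$ is a proposition, it suffices, given the hypotheses $y \in x \cons \xs$ and $\inimage{x \cons \xs}$, to work with an actual witness $\ws : \MM(A)$ with $\ssection(\ws) = x \cons \xs$ and to exhibit some $\vs : \MM(A)$ with $\ssection(\vs) = [x,y]$; the obvious candidate is $\vs = \bag{x,y}$, for which $\ssection(\bag{x,y}) = \term{insert}(x,[y])$ equals $[x,y]$ exactly when $x \leq y$ (and equals $[y,x]$ otherwise). So the entire proof reduces to deriving $x \leq y$.

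The key lemma I would establish is that $\term{head} \comp \ssection = \term{head}_{\MM}$ as maps $\MM(A) \to 1 + A$, where $\term{head}$ on the left is the list head of \cref{def:head-free-monoid} and $\term{head}_{\MM} \defeq \ext{\inr}$ is the free--commutative--monoid head of \cref{def:head-free-commutative-monoid}, built from $\meet$. This follows by uniqueness of extensions once one checks, by induction on the list $\zs$, that $\term{head}(\term{insert}(a, \zs)) = \inr(a) \oplus \term{head}(\zs)$ with $\oplus$ the meet-based operation of \cref{def:head-free-commutative-monoid}: for $\zs = \nil$ both sides are $\inr(a)$, and for $\zs = b \cons \zs'$ the two branches of $\term{insert}$ match $\inr(a \meet b)$ using totality of $\leq$. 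Composing this with the recursion defining $\ssection$ shows that $\term{head}(\ssection(\blank))$ satisfies exactly the defining equations of $\ext{\inr}$ (including compatibility with the $\term{swap}$ path when $\MM(A)$ is taken to be $\SList(A)$), hence equals $\term{head}_{\MM}$. In particular $\term{head}_{\MM}(\ws)$ is the iterated $\meet$ of the elements of $\ws$, so $\term{head}_{\MM}(\ws) = \inr(z)$ forces $z$ to be a lower bound for every element occurring in $\ws$ — which is exactly the sense in which $\term{head}_{\MM}$ "picks out the least element" in \cref{def:head-free-commutative-monoid}.

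Now instantiate with the witness $\ws$: we get $\term{head}_{\MM}(\ws) = \term{head}(\ssection(\ws)) = \term{head}(x \cons \xs) = \inr(x)$, so $x \leq z$ for every $z \in \ws$. Because $\ssection$ is a section, $\ws = \quotient(\ssection(\ws)) = \quotient(x \cons \xs)$, and list membership transports along $\quotient$ — both $\in$'s are $\ext{\yo_A(y)}$ and $\quotient$ is a monoid homomorphism, so ${(y \in \blank)_{\MM}} \comp \quotient = (y \in \blank)_{\LL}$ by uniqueness of extensions — hence the hypothesis $y \in x \cons \xs$ gives $y \in \ws$, and therefore $x \leq y$. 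As observed in the first paragraph, this yields $\ssection(\bag{x,y}) = [x,y]$, i.e.\ $\inimage{[x,y]}$, equivalently $x \leqs y$ via \cref{sort:sort-to-order}, which is what the $\isheadleast$ axiom demands.

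The main obstacle is the key lemma of the second paragraph: identifying $\term{head} \comp \ssection$ with the commutative-monoid head homomorphism, which requires the $\term{insert}$/$\meet$ compatibility computation and the bookkeeping needed to push it through the higher-inductive presentation of $\MM(A)$ (respecting $\term{swap}$). The remaining ingredients — that insertion sort is a section, that membership commutes with $\quotient$, and the one-line computation of $\ssection(\bag{x,y})$ — are routine consequences of the earlier development.
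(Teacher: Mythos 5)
The paper states this proposition without a written proof (it is discharged only in the accompanying formalisation), so there is no textual argument to compare against; judged on its own terms, your argument is correct, and its route is a reasonable, somewhat more conceptual alternative to the direct induction one would otherwise expect (showing by induction that the head of an insertion-sorted list is a lower bound for its members). Your reduction is right: since the goal is a proposition you may untruncate a witness $b : \MM(A)$ with $\ssection(b) = x \cons \xs$, derive $x \leq y$, and conclude with $\ssection(\bag{x,y}) = \term{insert}(x,[y]) = [x,y]$. The identification $\term{head} \comp \ssection = \ext{\inr}$ (into the meet-based monoid of \cref{def:head-free-commutative-monoid}) also goes through: the computation $\term{head}(\term{insert}(a,\zs)) = \inr(a) \oplus \term{head}(\zs)$ is a list induction, and equality of the two maps out of $\SList(A)$ then follows by $\SList$-induction with a propositional motive, since $1+A$ is a set -- so the ``compatibility with the swap path'' you worry about is automatic rather than an extra obligation. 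Your transport of membership along $\quotient$ by uniqueness of extensions is likewise exactly right.

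The one step you should not cite off \cref{def:head-free-commutative-monoid} is that $\ext{\inr}(b) = \inr(x)$ together with $y \in b$ implies $x \leq y$: the phrase ``picks out the least element'' there is an informal gloss, not a proven lemma, so this needs its own short induction -- for instance, prove $y \in b \to \ext{\inr}(b) \oplus \inr(y) = \ext{\inr}(b)$ by $\SList$-induction using associativity, commutativity and idempotence of $\meet$, whence $\inr(x) \oplus \inr(y) = \inr(x)$, i.e.\ $x \meet y = x$, i.e.\ $x \leq y$ via \cref{prop:total-order-meet-semi-lattice}. This is routine and does not endanger the proof. A minor final point: your case-split definition of $\term{insert}$ presupposes decidability of $\leq$; this matches the sort function of \cref{sec:sort-section-from-order} and the formalisation, but is formally stronger than the bare ``total order'' in the statement -- a mismatch inherited from the paper rather than a defect of your argument.
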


\begin{proposition}[\alink{proposition}{}{trans}]
    \label{sort:trans}
    If $\ssection$ satisfies $\isheadleast$, $\leqs$ is transitive.
\end{proposition}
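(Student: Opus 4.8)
The idea is to probe $\ssection$ at the three-element bag $\bag{x,y,z}$ and read off its head using $\isheadleast$. I will use \cref{sort:sort-to-order} to pass freely between $u \leqs v$ and $\inimage{[u,v]}$, \cref{sort:almost-order} for antisymmetry of $\leqs$, and three standard facts about a section $\ssection$: for any $ys : \MM(A)$, the list $\ssection(ys)$ lies in $\im{\ssection}$ by construction, has length $\term{length}(ys)$ (since $\term{length}$ factors through $\quotient$ and $\quotient \comp \ssection = \idfunc$), and has the same elements as $ys$ (since ${\blank\in\blank}$ likewise factors through $\quotient$).

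Assume $x \leqs y$ and $y \leqs z$; since $x \leqs z$ is a proposition we will be free to reason by truncated case analysis below. Put $\xs_0 \defeq \ssection(\bag{x,y,z})$. Its length is $3$, so $\xs_0 = a \cons \xs_1$ for some $a : A$ and $\xs_1 : \LL(A)$. By construction $\inimage{a \cons \xs_1}$, and since $\xs_0$ has the same elements as $\bag{x,y,z}$, each of $x$, $y$, $z$ occurs in $a \cons \xs_1$. Feeding these into $\isheadleast$ (with the list $a \cons \xs_1$ and the distinguished element taken to be $x$, then $y$, then $z$) yields $\inimage{[a,x]}$, $\inimage{[a,y]}$, $\inimage{[a,z]}$, i.e.\ $a \leqs x$, $a \leqs y$, and $a \leqs z$: the head $a$ is $\leqs$-below every element of the bag.

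It remains to identify $a$. Since $a$ occurs in $\xs_0$, and $\xs_0$ has the same elements as $\bag{x,y,z}$, we have merely $a = x$, $a = y$, or $a = z$, and we may case split as $x \leqs z$ is $\hProp$-valued. If $a = x$, then $a \leqs z$ is exactly $x \leqs z$. If $a = y$, then $a \leqs x$ reads $y \leqs x$, so with the hypothesis $x \leqs y$ antisymmetry of $\leqs$ gives $x = y$, and then $y \leqs z$ transports to $x \leqs z$. If $a = z$, then $a \leqs y$ reads $z \leqs y$, so with $y \leqs z$ antisymmetry gives $y = z$, and $x \leqs y$ transports to $x \leqs z$. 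In every case $x \leqs z$, as required.

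There is no real combinatorial difficulty here: the substantive move is the choice to evaluate $\ssection$ at $\bag{x,y,z}$ and extract via $\isheadleast$ that its head is a common lower bound for $x$, $y$, $z$. The only points needing a little care are the bookkeeping facts that $\ssection(\bag{x,y,z})$ is a three-element list whose underlying elements are $x$, $y$, $z$ (where length- and membership-preservation of $\quotient$ are invoked), and the observation that $\leqs$ is $\hProp$-valued, which licenses eliminating the truncated disjunction ``$a = x$ or $a = y$ or $a = z$'' while proving $x \leqs z$.
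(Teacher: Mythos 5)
Your proposal is correct and follows essentially the same route as the paper: evaluate $\ssection$ at the three-element bag $\bag{x,y,z}$, use $\isheadleast$ together with \cref{sort:sort-to-order} to show the head is a $\leqs$-lower bound for $x$, $y$, $z$, and then case split on which element the head is, invoking antisymmetry in the two non-trivial cases. Your version merely spells out the bookkeeping (length and membership preservation by the section, and propositionality of $x \leqs z$ to eliminate the truncated disjunction) that the paper leaves implicit.
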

\begin{proof}
    Given $x \leqs y$ and $y \leqs z$, we want to show $x \leqs z$.
    Consider the 3-element bag $\bag{x,y,z} : \MM(A)$.
    Let $u$ be $\term{least}(\bag{x,y,z})$,
    by~\cref{sort:head-least} and~\cref{sort:sort-to-order},
    we have $u \leqs x \land u \leqs y \land u \leqs z$.
    Since $u \in \bag{x,y,z}$, $u$ must be one of the elements.
    If $u = x$ we have $x \leqs z$.
    If $u = y$ we have $y \leqs x$,
    and since $x \leqs y$ and $y \leqs z$ by assumption,
    we have $x = y$ by antisymmetry, and then we have $x \leqs z$ by substitution.
    Finally, if $u = z$, we have $z \leqs y$, and since $y \leqs z$ and $x \leqs y$ by assumption,
    we have $z = y$ by antisymmetry, and then we have $x \leqs z$ by substitution.
\end{proof}

\subsubsection{Embedding orders into sections}
\label{sec:sort-embedding}

Following from \cref{sort:almost-order,sort:trans},
and \cref{prop:order-to-sort-head-least},
we have shown that a section $\ssection$ that satisfies $\isheadleast$ produces a total order
$x \leqs y \defeq \term{least}(\bag{x, y}) \id \inr(x)$,
and a total order $\leq$ on the carrier set produces a section satisfying $\isheadleast$,
constructed by sorting with $\leq$.
This constitutes an embedding of decidable total orders into sections satisfying $\isheadleast$.

\begin{proposition}[\alink{proposition}{}{o2s2o}]
    \label{sort:o2s2o}
    Assume $A$ has a decidable total order $\leq$, we can construct a section $\ssection$ that
    satisfies $\isheadleast$, such that $\leqs$ constructed from $\ssection$ is equivalent
    to $\leq$.
\end{proposition}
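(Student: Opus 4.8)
The plan is to take for $\ssection$ the insertion sort function determined by $\leq$, exactly the one used in \cref{sec:sort-section-from-order} and \cref{prop:order-to-sort-head-least}. Two of the three requirements are then already discharged: the construction in \cref{sec:sort-section-from-order} shows this $\ssection$ is a section of $\quotient$, and \cref{prop:order-to-sort-head-least} shows it satisfies $\isheadleast$. What remains is to identify the recovered relation $\leqs$, defined by $x \leqs y \defeq \term{head}(\ssection(\bag{x,y})) \id \inr(x)$, with the original relation $\leq$.

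For this I would compute $\ssection$ on two-element bags directly. Representing $\bag{x,y}$ as $x \cons y \cons \nil$ in $\SList(A)$, the recursion for insertion sort gives $\ssection(\bag{x,y}) = \term{insert}(x, \term{insert}(y, \nil)) = \term{insert}(x, [y])$, and the defining clause of $\term{insert}$ case-splits on the decidable proposition $x \leq y$, returning $[x,y]$ when $x \leq y$ and $[y,x]$ otherwise; applying $\term{head}$ then yields $\inr(x)$ in the first case and $\inr(y)$ in the second. Hence $x \leqs y \Leftrightarrow x \leq y$ for all $x,y$, the degenerate case $x \id y$ being harmless since then $[x,y]$ and $[y,x]$ coincide and both sides of the equivalence hold by reflexivity. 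Alternatively, one can avoid unfolding insertion sort and argue via \cref{sort:sort-to-order}: $x \leqs y$ iff $\inimage{[x,y]}$, and since $\ssection$ sorts with $\leq$, the arrangement $[x,y]$ of the bag $\bag{x,y}$ lies in $\im{\ssection}$ exactly when $x \leq y$.

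Since $\leqs$ and $\leq$ are both $\hProp$-valued, this pointwise logical equivalence upgrades to an equality of relations by propositional extensionality, hence to an equality of the decidable-total-order structures by the structure identity principle; decidability of $\leqs$ comes for free, either by transport along the equivalence or directly, since $\term{head}(\ssection(\bag{x,y})) \id \inr(x)$ is a decidable proposition because $A$ is discrete (\cref{prop:decidable-total-order}), and that $\leqs$ is a genuine total order is \cref{sort:almost-order} together with \cref{sort:trans} and \cref{prop:order-to-sort-head-least}. Thus the two decidable total orders coincide, and the assignment $\leq \mapsto \ssection$ is a splitting of $\ssection \mapsto {\leqs}$, i.e. the promised embedding of decidable total orders into sections satisfying $\isheadleast$.

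The hard part is not mathematical depth but getting the bookkeeping right: one must make sure the computation rule $\ssection(x \cons y \cons \nil) = \term{insert}(x,\term{insert}(y,\nil))$ is read off the actual definition of $\ssection$ on whichever presentation of $\MM(A)$ is in play, recalling that $\ssection$ is already known to respect the $\term{swap}$/permutation identities so that $\bag{x,y}$ is well defined as an argument, and one should be explicit that \emph{equivalent} here means equality of the underlying relations — so that this proposition is precisely the statement that the round trip ``order $\to$ section $\to$ order'' is the identity.
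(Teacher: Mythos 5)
Your proposal is correct and follows essentially the same route as the paper: take insertion sort parameterized by $\leq$ (already known to be a section satisfying $\isheadleast$), observe that $\inimage{[x,y]}$ holds exactly when $x \leq y$, and conclude $x \leqs y \Leftrightarrow x \leq y$ via \cref{sort:sort-to-order}. Your explicit two-element computation of $\term{insert}$ and the remarks on propositional extensionality and decidability are just a more detailed spelling-out of the same argument.
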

\begin{proof}
    By the insertion sort algorithm parameterized by $\leq$,
    it holds that $\inimage{[x, y]}$ iff $x \leq y$.
    By~\cref{sort:sort-to-order}, we have $x \leqs y$ iff $x \leq y$.
    We now have a total order $x \leqs y$ that is equivalent to $x \leq y$.
\end{proof}

\subsubsection{Equivalence of order and sections}

We want to improve the embedding to an isomorphism, and it
remains to show that we can turn a section satisfying $\isheadleast$ to a total order $\leqs$,
then reconstruct the \emph{same} section from $\leqs$.
Unfortunately, $\isheadleast$ is not enough to guarantee this.
\begin{proposition}[\alink{proposition}{}{counterexample-equivalence}]
    \label{prop:counterexample-equivalence}
    There is no equivalence between sections satisfying the $\isheadleast$ axiom and total orders on $A$.
\end{proposition}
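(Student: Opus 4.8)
The plan is to refute the claimed equivalence for a concrete carrier set by a cardinality argument. Conceptually, the axiom $\isheadleast$ only constrains the \emph{head} of each ``sorted'' list --- it forces the head to be the $\leqs$-least element --- but says nothing about how the remaining elements are arranged; this leaves enough slack that there are far more sections satisfying $\isheadleast$ than there are total orders.

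Concretely, I would take $A \defeq \Fin[2]$ with its standard decidable total order. The type of total orders on $\Fin[2]$ is a finite set (it has exactly two elements, the standard order and its opposite). On the other side, let $\sigma_0 : \MM(A) \to \LL(A)$ be the insertion-sort section for the standard order; it satisfies $\isheadleast$ by \cref{prop:order-to-sort-head-least}. For each $n \geq 2$, let $\mu_n : \MM(A)$ be the multiset $\bag{0,\dots,0,1}$ containing $n$ copies of $0$, and define $\sigma_n$ to agree with $\sigma_0$ everywhere except at $\mu_n$, where we set $\sigma_n(\mu_n) \defeq [0,1,0,\dots,0]$, the list with $n$ zeros in total and the single $1$ in second position. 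I would then verify three things: (i) $\sigma_n$ is still a section, since $[0,1,0,\dots,0]$ is a permutation of the elements of $\mu_n$; (ii) $\sigma_n$ still satisfies $\isheadleast$ --- the perturbation only touches a multiset of size $n+1 \geq 3$, so the length-$2$ lists in the image are still exactly $[0,0]$, $[0,1]$, $[1,1]$, and since every nonempty list in the image of $\sigma_n$ either begins with $0$ or consists only of $1$'s, every instance of the $\isheadleast$ implication reduces to membership of one of these three lists; (iii) the $\sigma_n$ are pairwise distinct, since $\sigma_n$ and $\sigma_m$ with $n < m$ already disagree at $\mu_n$. Hence $n \mapsto \sigma_{n+2}$ is an injection of $\Nat$ into the type of sections satisfying $\isheadleast$, so this type is infinite.

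To conclude: both the type of total orders on $A$ and the type of sections satisfying $\isheadleast$ are sets (they are subtypes of function types into sets cut out by propositions), so any equivalence between them would be a bijection of sets, which would make the infinite one finite --- a contradiction. Thus no such equivalence exists. I expect the only delicate point to be verification (ii): one must ensure that perturbing $\sigma_0$ at a single multiset does not create a list in the image that violates the axiom, which is precisely why the perturbation is designed to keep $0$ at the head and to touch only multisets of size at least $3$, leaving all behaviour on two-element bags --- and hence $\leqs$ itself --- untouched.
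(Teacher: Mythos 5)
Your proof is correct, but it takes a genuinely different route from the paper's. The paper also refutes the equivalence by a counterexample, but its counterexample is structured differently: working over $\Nat$, it takes insertion sort and post-composes with a \emph{reverse-the-tail} operation, obtaining two distinct sections (sort and its tail-reversed variant) that both satisfy $\isheadleast$ and, via \cref{sort:o2s2o}, induce the \emph{same} total order $\leqs$ --- so the canonical section-to-order map is not injective and the canonical pair of maps cannot be an equivalence. You instead argue by cardinality over $A = \Fin[2]$: total orders form a two-element set, while your family of perturbations of insertion sort at the bags $\bag{0,\dots,0,1}$ gives an injection of $\Nat$ into the set of $\isheadleast$-sections, so no equivalence of any kind can exist between the two sets. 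Your underlying insight is the same as the paper's --- $\isheadleast$ only constrains heads and two-element behaviour, so one can scramble tails of long lists with impunity (note your $\sigma_n$ indeed violates $\istailsort$, since the tail $[1,0,\dots,0]$ is not in the image, which is exactly the failure the second axiom is designed to exclude) --- but the packaging differs. Your version buys a literally stronger conclusion (no abstract equivalence whatsoever, not merely failure of the canonical correspondence), at the cost of a concrete counting argument and the verification (ii), which you carry out correctly; the paper's version is lighter and more directly motivates introducing $\istailsort$, since its two sections explicitly map to the same order under the derivation of \cref{def:least}. One minor point worth a remark in a formal development: defining $\sigma_n$ by case distinction on whether the input equals $\mu_n$ uses decidable equality of $\MM(\Fin[2])$, which holds (multisets over a discrete set are discrete); the paper's own counterexample makes the same implicit use over $\MM(\Nat)$.
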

\begin{proof}
    We consider the set $\Nat$ of natural numbers,
    and construct a $\ssection$ that satisfies $\isheadleast$ but is not a sort function.
    Let the correct insertion sort function be $\term{sort} : \MM(\Nat) \to \LL(\Nat)$.
    We define $\term{reverseTail}$ which reverses only the tail of a list,
    and a section $\ssection$ that sorts correctly using $\term{sort}$ then reverses the tail:
    \begin{align*}
        \term{reverseTail}([])          & \defeq []                                   \\
        \term{reverseTail}(x \cons \xs) & \defeq x \cons \term{reverse}(\xs)          \\
        \ssection(\xs)                  & \defeq \term{reverseTail}(\term{sort}(\xs))
    \end{align*}
    For example:
    \begin{align*}
        \ssection(\bag{2,3,1,4}) & = [1,4,3,2] \\
        \ssection(\bag{2,3,1})   & = [1, 3, 2] \\
        \ssection(\bag{2,3})     & = [2, 3]
    \end{align*}
    Note that both $\term{sort}$ and $\ssection$ satisfy $\isheadleast$,
    but $\ssection$ only sorts 2-element lists correctly, and $\ssection \neq \term{sort}$.
    By~\cref{sort:o2s2o} we can use both $\term{sort}$ and $\ssection$
    to reconstruct the same $\leq$ on $\Nat$.
    Hence, two different sections satisfying $\isheadleast$ map to the same total order.
\end{proof}
Therefore, we need to introduce a second axiom of sorting.

\begin{definition}[\alink{definition}{$\istailsort$}{tail-sort}]
    \label{def:tail-sort}
    A section $\ssection$ satisfies $\istailsort$ iff
    for all $x, \xs$,
    \[
        \inimage{x \cons \xs} \to \inimage{\xs}
    \]
\end{definition}
This says that $\ssection$-``sorted'' lists are downwards-closed under cons-ing, that is,
the tail of an $\ssection$-``sorted'' list is also $\ssection$-``sorted''.
To prove the correctness of our axioms,
first we need to show that a section $\ssection$ satisfying
$\isheadleast$ and $\istailsort$ is equal to insertion sort parameterized by
the $\leqs$ constructed from $\ssection$.
In fact, the axioms we have introduced are equivalent to the standard inductive characterisation of sorted lists,
found in textbooks, such as in~\cite{appelVerifiedFunctionalAlgorithms2023}.

\begin{definition}[\alink{definition}{$\isSorted$}{is-sorted}]
    \label{def:is-sorted}
    Given a total order $\leq$ on $A$,
    the predicate $\isSorted_{\leq}$ on $\LL(A)$ is generated by the following constructors:
    \begin{align*}
        \term{sorted-nil}  & : \isSorted_{\leq}([])              \\
        \term{sorted-\eta} & : \forall x.\,\isSorted_{\leq}([x]) \\
        \term{sorted-cons} & : \forall x, y, \zs.\, x \leq y
        \to \isSorted_{\leq}(y \cons \zs)
        \to \isSorted_{\leq}(x \cons y \cons \zs)
    \end{align*}
\end{definition}
Note that $\isSorted_{\leq}(\xs)$ is a proposition for every $\xs$,
and forces the list $\xs$ to be permuted in a unique way.
This predicate is equivalent to the standard $\type{Sorted}$ relation in the Agda standard library~\cite{Daggitt_The_Agda_standard_2025}, which is defined pairwise using the more general $\type{Linked}$ relation (i.e., $\type{Sorted}(\xs) \defeq \type{Linked}(\blank\leq\blank, \xs)$).
Indeed, many of these definitions are traditional constructions that are completely independent of univalence or cubical features.
\begin{lemma}[\alink{lemma}{}{is-sorted-unique}]
    Given a total order $\leq$, for any lists $\xs, \ys : \LL(A)$,
    if $\quotient(\xs) = \quotient(\ys)$, $\isSorted_{\leq}(\xs)$, and $\isSorted_{\leq}(\ys)$,
    then $\xs = \ys$.
\end{lemma}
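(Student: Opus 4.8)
The plan is to prove the statement by structural induction on $\xs$, using two facts about the canonical quotient $\quotient = \ext{\eta_A} : \LL(A) \to \MM(A)$ that follow from uniqueness of universal extensions: since $\term{length}$ and $x \in \blank$ are defined as $\ext{(\blank)}$ into the commutative monoids $(\Nat,0,+)$ and $(\hProp,\bot,\vee)$, the composites $\term{length}_{\MM} \comp \quotient$ and $(x \in_{\MM} \blank) \comp \quotient$ are homomorphisms out of $\LL(A)$ sending $\eta_A(a)$ to $1$ and to $(x \id a)$ respectively, so by uniqueness they equal $\term{length}_{\LL}$ and $(x \in_{\LL} \blank)$; hence $\quotient$ preserves length and membership. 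I also use $\quotient(a \cons \as) = \eta_A(a) \mult \quotient(\as)$, immediate from $\quotient = \ext{\eta_A}$, and the observation that the tail of a sorted list is again sorted, read off the constructors of $\isSorted_{\leq}$.

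The key auxiliary lemma is that the head of a $\leq$-sorted list is a lower bound for its elements: if $\isSorted_{\leq}(a \cons \as)$, then $a \leq z$ for every $z \in a \cons \as$. This is proved by induction on the $\isSorted_{\leq}$ derivation: reflexivity handles $\term{sorted-\eta}$, and in the $\term{sorted-cons}$ case $\isSorted_{\leq}(a \cons b \cons \bs)$ with $a \leq b$, an element $z$ of $a \cons b \cons \bs$ is either $a$ (reflexivity) or lies in $b \cons \bs$, where the inductive hypothesis gives $b \leq z$ and transitivity gives $a \leq z$. The propositional disjunction in $z \in a \cons \as$ may be eliminated since the goal $a \leq z$ is a proposition.

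Now for the induction on $\xs$. If $\xs = []$, then $\term{length}(\ys) = \term{length}(\quotient(\ys)) = \term{length}(\quotient(\xs)) = 0$, so $\ys = []$. If $\xs = x \cons \xs'$, the same computation forces $\ys = y \cons \ys'$; from $x \in \xs$ we get $x \in \quotient(\xs) = \quotient(\ys)$, hence $x \in \ys$, so the auxiliary lemma applied to $\isSorted_{\leq}(\ys)$ gives $y \leq x$, and symmetrically $x \leq y$, whence $x = y$ by antisymmetry. It remains to prove $\xs' = \ys'$ via the inductive hypothesis: $\xs'$ and $\ys'$ are sorted, and $\quotient(\xs) = \quotient(\ys)$ combined with $x = y$ rewrites to $\eta_A(x) \mult \quotient(\xs') = \eta_A(x) \mult \quotient(\ys')$ in $\MM(A)$, so the remaining obligation is exactly left-cancellativity of the free commutative monoid; then $\xs = x \cons \xs' = y \cons \ys' = \ys$.

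That cancellativity step is the main obstacle, especially because $\leq$ is only a (possibly non-decidable) total order, so we cannot literally ``delete one copy of $x$''. I would prove that $\eta_A(x) \mult (\blank) : \MM(A) \to \MM(A)$ is injective in the $\Bag$ presentation $\MM(A) = \quot{\Array(A)}{\approx}$: from $(S(n),g) \approx (S(m),g')$ with $g(0) = g'(0) = x$, a pigeonhole argument gives $n = m$, and precomposing a witnessing $\sigma$ with the transposition $(0\ \sigma(0))$ yields $\sigma'$ with $\sigma'(0) = 0$ and still $g = g' \comp \sigma'$, since the transposition is absorbed by $g'$ (because $g'(\sigma(0)) = g(0) = x = g'(0)$); then \cref{bag:punch} produces $\psi : \Fin[n] \xto{\sim} \Fin[n]$ with $\sigma' \comp S = S \comp \psi$, and $g \comp S = g' \comp S \comp \psi$ exhibits $(n, g \comp S) \approx (m, g' \comp S)$, i.e. equality of the tails in $\MM(A)$. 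If instead a decidable total order is available, one can avoid cancellativity entirely by showing $\ssection(\quotient(\xs)) = \xs$ for sorted $\xs$ by induction on $\isSorted_{\leq}$ — using $\ssection(x \cons \ms) = \term{insert}(x,\ssection(\ms))$ and $\term{insert}(x, y \cons \zs) = x \cons y \cons \zs$ when $x \leq y$ — and then concluding $\xs = \ssection(\quotient(\xs)) = \ssection(\quotient(\ys)) = \ys$.
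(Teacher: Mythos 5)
Your proof is correct in substance. Note that the paper states this lemma without an in-text proof (it is discharged in the formalisation), so there is no written argument to compare against; judged on its own, your route works and uses only ingredients the paper already provides: preservation of $\term{length}$ and $\in$ by $\quotient$ via uniqueness of universal extensions, the head-is-a-lower-bound lemma for $\isSorted_{\leq}$, antisymmetry to identify the heads, and then left-cancellativity of $\MM(A)$ to recurse on the tails. You correctly identify cancellativity as the real content: since $\leq$ is not assumed decidable here, one cannot argue by ``deleting one copy of $x$'', and your proof of it in the $\Bag$ presentation --- normalise a witnessing permutation so that it fixes index $0$ by composing with the transposition of $0$ and $\sigma(0)$, which is absorbed because both positions carry the value $x$, then apply \cref{bag:punch} to peel off the index $0$ --- is constructively sound, needing decidable equality only on $\Fin[n]$, never on $A$. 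Two small points. First, the transposition must be composed on the codomain side (postcompose $\sigma$ with the swap of $0$ and $\sigma(0)$, equivalently precompose $g'$ with it), or on the domain side with the swap of $0$ and $\sigma^{-1}(0)$; as literally written, precomposing $\sigma$ with $(0\ \sigma(0))$ gives $\sigma(\sigma(0))$ at $0$ and need not fix $0$ --- your absorption justification $g'(\sigma(0)) = g(0) = x = g'(0)$ shows you meant the correct variant, so this is a wording slip, not a gap. Second, since $\approx$ of \cref{def:bag} is a $\Sigma$-type and not proposition-valued, effectivity of the set quotient only yields a truncated witness $\sigma$; this is harmless because the goal is an equality in a set, but it deserves a sentence. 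Your alternative ending via $\ssection(\quotient(\xs)) = \xs$ for sorted $\xs$ is fine as well, but, as you yourself flag, it needs decidability of $\leq$, which this lemma does not assume, so the cancellation argument is the one to keep.
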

Insertion sort by $\leq$ always produces lists that satisfy $\isSorted_{\leq}$.
Sections that also produce lists satisfying $\isSorted_{\leq}$ are equal to insertion sort
by function extensionality.
\begin{proposition}[\alink{proposition}{}{sort-uniq}]
    \label{sort:sort-uniq}
    Given a total order $\leq$,
    if a section $\ssection$ always produces a sorted list, that is, $\forall \xs.\,\isSorted_{\leq}(\ssection(\xs))$,
    $\ssection$ is equal to insertion sort by $\leq$.
\end{proposition}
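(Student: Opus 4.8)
The plan is to prove this by function extensionality. Writing $\term{isort}_{\leq}$ for insertion sort parameterised by $\leq$, it suffices to fix an arbitrary $\xs : \MM(A)$ and show $\ssection(\xs) = \term{isort}_{\leq}(\xs)$ as elements of $\LL(A)$. I would use exactly two facts about $\term{isort}_{\leq}$, both already available: first, that it is a section to $\quotient$, i.e. $\quotient(\term{isort}_{\leq}(\xs)) = \xs$ (imposing and then forgetting an order merely permutes, as in \cref{sec:sort-section-from-order}); and second, that it always returns an $\isSorted_{\leq}$ list, which is the fact recorded immediately before this proposition.

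With $\xs$ fixed, the argument is then a direct appeal to the uniqueness lemma for sorted lists stated just above. We have $\quotient(\ssection(\xs)) = \xs$ because $\ssection$ is a section, and $\quotient(\term{isort}_{\leq}(\xs)) = \xs$ because $\term{isort}_{\leq}$ is a section, hence $\quotient(\ssection(\xs)) = \quotient(\term{isort}_{\leq}(\xs))$. Moreover $\isSorted_{\leq}(\ssection(\xs))$ holds by the hypothesis on $\ssection$, and $\isSorted_{\leq}(\term{isort}_{\leq}(\xs))$ holds by the second fact above. The three premises of the uniqueness lemma are thus met, so it yields $\ssection(\xs) = \term{isort}_{\leq}(\xs)$; abstracting over $\xs$ and applying function extensionality gives $\ssection = \term{isort}_{\leq}$.

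The substantive content has already been discharged elsewhere --- in the uniqueness lemma (proved by induction on the two $\isSorted_{\leq}$ derivations, using antisymmetry of $\leq$ together with the fact that $\quotient$ identifies exactly the permutations of a list) and in the correctness of insertion sort as a permutation. So within this proof the only point demanding care is checking that all three premises of the uniqueness lemma genuinely apply: the two $\isSorted_{\leq}$ facts are immediate, and the $\quotient$-equality hinges on insertion sort really being a section, which is where one leans on its established correctness. I do not expect a genuine obstacle here: the proposition is a clean corollary of the uniqueness lemma once ``insertion sort by $\leq$'' is pinned down as a concrete function $\MM(A) \to \LL(A)$.
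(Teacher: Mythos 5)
Your proposal is correct and matches the paper's own argument: the paper proves this exactly by combining the uniqueness lemma for $\isSorted_{\leq}$ lists with the facts that both $\ssection$ and insertion sort are sections producing sorted lists, then concluding pointwise equality and applying function extensionality. No gaps; the reliance on insertion sort being a section is covered by the earlier construction of a section from a decidable total order.
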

Finally, this gives us the correctness property of our axioms.
\begin{proposition}[\alink{proposition}{}{well-behave-sorts}]
    \label{sort:well-behave-sorts}
    Given a section $\ssection$ that satisfies $\isheadleast$ and $\istailsort$,
    and $\leqs$ the order derived from $\ssection$, then for all $\xs : \MM(A)$,
    it holds that $\isSorted_{\leqs}(\ssection(\xs))$.
    Equivalently, for all lists $\xs : \LL(A)$,
    it holds that
    $\xs \in \im{s}$ iff $\isSorted_{\leqs}(\xs)$.
\end{proposition}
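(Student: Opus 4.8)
The plan is to establish the ``iff'' formulation by induction on the list and read off the $\MM(A)$ statement as a corollary. First I would record that $\leqs$ is an honest total order, which is needed for $\isSorted_{\leqs}$ and the sorted-uniqueness lemma to even make sense: reflexivity, antisymmetry, and totality are \cref{sort:almost-order}, and transitivity is \cref{sort:trans} --- this last step is exactly where the $\isheadleast$ hypothesis is used. Since, by univalence, the statement is independent of the chosen presentation of the free monoid, I would also fix $\LL(A) = \List(A)$ so that plain list induction is available.

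The core is the implication: for every $\xs : \LL(A)$, if $\xs \in \im{\ssection}$ then $\isSorted_{\leqs}(\xs)$, proved by induction on $\xs$ with the (propositional) motive $P(\xs) \defeq (\xs \in \im{\ssection}) \to \isSorted_{\leqs}(\xs)$. The cases $\xs = []$ and $\xs = [x]$ are immediate from $\term{sorted-nil}$ and $\term{sorted-\eta}$, using no hypothesis. For $\xs = x \cons y \cons \zs$, assume $x \cons y \cons \zs \in \im{\ssection}$. Since $y \in x \cons y \cons \zs$ (membership of $y$ in a list that contains $y$ holds by reflexivity), the $\isheadleast$ axiom gives $\inimage{[x,y]}$, which \cref{sort:sort-to-order} converts to $x \leqs y$. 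Separately, $\istailsort$ applied to $x \cons y \cons \zs \in \im{\ssection}$ gives $y \cons \zs \in \im{\ssection}$, so the induction hypothesis $P(y \cons \zs)$ yields $\isSorted_{\leqs}(y \cons \zs)$. Assembling both with $\term{sorted-cons}$ closes the case. Note the two axioms play complementary roles mirroring the two nontrivial $\isSorted$ constructors: $\isheadleast$ supplies the head comparison $x \leqs y$, and $\istailsort$ enables the recursion on the tail.

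Both forms of the statement now follow. The displayed ``$\MM(A)$'' claim is the core implication applied at $\xs = \ssection(\ys)$, since $\ssection(\ys) \in \im{\ssection}$ is witnessed by $\ys$ itself. The forward direction of the ``iff'' is the core implication verbatim. For the converse, assume $\isSorted_{\leqs}(\xs)$ and compare $\xs$ with $\ssection(\quotient(\xs))$: since $\ssection$ is a section, $\quotient(\ssection(\quotient(\xs))) = \quotient(\xs)$; the forward direction gives $\isSorted_{\leqs}(\ssection(\quotient(\xs)))$; and $\isSorted_{\leqs}(\xs)$ is the hypothesis. The sorted-uniqueness lemma (two $\leqs$-sorted lists with the same underlying multiset are equal) then forces $\ssection(\quotient(\xs)) = \xs$, so $\xs \in \im{\ssection}$ with witness $\quotient(\xs)$.

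I do not expect a genuine obstacle: once \cref{sort:almost-order,sort:trans,sort:sort-to-order}, the two sorting axioms, and the sorted-uniqueness lemma are in place, the argument is a short piece of bookkeeping. The one point worth being careful about is the logical dependency: the $\isheadleast$ axiom is precisely what promotes $\leqs$ from a reflexive, antisymmetric, total relation to a full total order, and only then are $\isSorted_{\leqs}$ and its uniqueness lemma applicable --- so these must be invoked after, not before, \cref{sort:trans}.
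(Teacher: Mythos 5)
Your proposal is correct and follows essentially the same route as the paper: $\isheadleast$ together with \cref{sort:sort-to-order} supplies the head comparison, $\istailsort$ feeds the recursion, and $\term{sorted-cons}$ assembles $\isSorted_{\leqs}$ --- the only cosmetic difference being that you induct on the list with the image-membership motive while the paper inducts on the length of the multiset $\xs : \MM(A)$. Your explicit handling of the converse of the ``iff'' via the sorted-uniqueness lemma spells out a step the paper leaves implicit under ``equivalently'', and it is correct.
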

\begin{proof}
    We inspect the length of $\xs : \MM(A)$.
    For lengths 0 and 1, this holds trivially.
    Otherwise, we proceed by induction:
    given a $\xs : \MM(A)$ of length $\geq 2$, let $\ssection(\xs) = x \cons y \cons \ys$.
    We need to show
    $x \leqs y \land \isSorted_{\leqs}(y \cons \ys)$ to construct
    $\isSorted_{\leqs}(x \cons y \cons \ys)$.
    By $\isheadleast$, we have $x \leqs y$, and by $\istailsort$, we
    inductively prove $\isSorted_{\leqs}(y \cons \ys)$.
\end{proof}

\begin{lemma}[\alink{lemma}{}{s2o2s}]
    \label{sort:s2o2s}
    Given a decidable total order $\leq$ on $A$, we can construct
    a section $t_\leq$ satisfying the axioms $\isheadleast$ and $\istailsort$,
    such that, for the order $\leqs$ derived from~$\ssection$,
    we have $t_{\leq} = \ssection$.
\end{lemma}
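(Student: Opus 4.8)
The plan is to obtain $t_{\leqs} = \ssection$ by showing that $\ssection$ is a section all of whose outputs are sorted with respect to the order it induces, and then appealing to the uniqueness of such sections. First I would record that $\leqs$, defined in \cref{def:least}, is a genuine total order on $A$: reflexivity, antisymmetry and totality are \cref{sort:almost-order}, and transitivity is \cref{sort:trans}, whose proof is precisely where the $\isheadleast$ hypothesis is consumed. Since we are working over a carrier equipped with a decidable total order, $A$ is discrete by \cref{prop:decidable-total-order}, so $\term{head}(\ssection(\bag{x,y})) \id \inr(x)$ is a decidable proposition and $\leqs$ is in fact a \emph{decidable} total order; this is what makes insertion sort $t_{\leqs}$ parametrised by $\leqs$ a well-defined section of $\quotient$.

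Next I would invoke \cref{sort:well-behave-sorts}: because $\ssection$ satisfies both $\isheadleast$ and $\istailsort$, every list in the image of $\ssection$ is $\leqs$-sorted, that is, $\isSorted_{\leqs}(\ssection(\xs))$ holds for all $\xs : \MM(A)$. Thus $\ssection$ meets the hypothesis of \cref{sort:sort-uniq} applied to the order $\leqs$, and that proposition immediately yields $\ssection = t_{\leqs}$ as functions $\MM(A) \to \LL(A)$. Because $\isSorted_{\leqs}(\xs)$ is a proposition and a multiset has at most one $\leqs$-sorted representative (the uniqueness lemma stated just above \cref{sort:sort-uniq}), the pointwise comparison is forced, so function extensionality upgrades this to a bona fide equality of sections.

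Finally, for the formulation that starts from a decidable total order $\leq$ on $A$ together with a section $\ssection$ whose induced order $\leqs$ coincides with $\leq$ — the situation produced by \cref{sort:o2s2o} — the same chain gives $t_{\leq} = t_{\leqs} = \ssection$, completing the round trip from sections back to sections. I do not expect a genuinely hard step here: all the combinatorial work has already been carried out in \cref{sort:well-behave-sorts,sort:sort-uniq}, so the main obstacle is bookkeeping the hypotheses — in particular, confirming that $\leqs$ is decidable (so that $t_{\leqs}$ is defined at all) and that \cref{sort:sort-uniq} applies verbatim to $\leqs$ rather than to the ambient $\leq$ — rather than any substantive new argument.
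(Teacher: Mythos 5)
Your proposal is correct and follows the paper's own proof essentially verbatim: derive the decidable total order $\leqs$ from $\ssection$ (using $\isheadleast$ for transitivity and discreteness of $A$ for decidability), take $t_{\leqs}$ to be insertion sort by $\leqs$, and conclude $\ssection = t_{\leqs}$ from \cref{sort:well-behave-sorts} together with \cref{sort:sort-uniq}. Your closing remark identifying $t_{\leq}$ with $t_{\leqs}$ when $\leqs$ coincides with the ambient order is a harmless clarification of the lemma's phrasing, not a departure from the paper's argument.
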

\begin{proof}
    From $\ssection$ we can construct a decidable total order $\leqs$ since $\ssection$ satisfies
    $\isheadleast$ and $A$ has decidable equality by assumption.
    We construct $t_{\leqs}$ as insertion sort
    parameterized by $\leqs$ constructed from $\ssection$.
    By ~\cref{sort:sort-uniq} and ~\cref{sort:well-behave-sorts}, $s = t_{\leqs}$.
\end{proof}

\noindent
We can now state and prove our main theorem.
\begin{definition}[\alink{definition}{Sorting function}{sorting-function}]
    \label{def:sorting-function}
    \leavevmode
    A sorting function is a section $\ssection : \MM(A) \to \LL(A)$ to
    the canonical surjection $\quotient : \LL(A) \twoheadrightarrow \MM(A)$ satisfying two axioms:
    \begin{itemize}[leftmargin=*]
        \item $\isheadleast :
                  \forall\, x, y, \xs,\,\,
                  y \in x \cons \xs \mathrel{\land} \inimage{x \cons \xs} \to \inimage{[x, y]}
              $,
        \item $\istailsort :
                  \forall\, x, \xs,\,\,
                  \inimage{x \cons \xs} \to \inimage{\xs}
              $.
    \end{itemize}
\end{definition}
\begin{theorem}[\alink{theorem}{}{main}]
    \label{sort:main}
    Let $\term{DecTotOrd}(A)$ be the set of decidable total orders on $A$,
    $\term{Sort}(A)$ be the set of sorting functions with carrier set $A$,
    and $\term{Discrete}(A)$ be a predicate which states $A$ has decidable equality.
    There is a map $o2s \colon \term{DecTotOrd}(A) \to \term{Sort}(A) \times \term{Discrete}(A)$,
    which is an equivalence.
\end{theorem}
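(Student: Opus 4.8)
The plan is to exhibit an explicit quasi-inverse $s2o$ to $o2s$ and verify the two round-trips. Almost all of the mathematical content has already been established in \cref{sort:o2s2o,sort:s2o2s}, so the remaining task is mostly to assemble these and to check the propositionality bookkeeping that makes the two composites into identities.

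To define $o2s$, given a decidable total order $\leq$ on $A$, let $\ssection_{\leq} : \MM(A) \to \LL(A)$ be insertion sort parameterised by $\leq$. It is a section to $\quotient$ because imposing and then forgetting an order merely permutes the elements; it satisfies $\isheadleast$ by \cref{prop:order-to-sort-head-least}; and it satisfies $\istailsort$ by the same style of argument, since if $\ssection_{\leq}(\ys) = x \cons \xs$ then $x$ is $\leq$-least in $\ys$, so $\ys = \bag{x} \cup \ys'$ with $\ssection_{\leq}(\ys') = \xs$, witnessing $\inimage{\xs}$. Finally, a decidable total order makes $A$ discrete by \cref{prop:decidable-total-order}. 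We set $o2s(\leq) \defeq (\ssection_{\leq}, d_{\leq})$, where $d_{\leq}$ is that discreteness proof. Conversely, to define $s2o$ on $\term{Sort}(A)\times\term{Discrete}(A)$, take a sorting function $\ssection$ together with a proof that $A$ is discrete and form $\leqs$ as in \cref{def:least}: it is reflexive, antisymmetric and total by \cref{sort:almost-order}, transitive by \cref{sort:trans} (using $\isheadleast$), and decidable because $x \leqs y$ is an equality in $1+A$, which inherits decidable equality from $A$; hence $s2o$ lands in $\term{DecTotOrd}(A)$.

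For the round-trip $s2o \circ o2s$, starting from $\leq$ the order $\leqs$ derived from $\ssection_{\leq}$ satisfies $x \leqs y \Leftrightarrow x \leq y$ for all $x,y$ by \cref{sort:o2s2o}; since both relations are $\hProp$-valued, propositional and function extensionality give $\leqs = \leq$, and because the remaining axioms of a decidable total order are propositions this is an identity in $\term{DecTotOrd}(A)$. For $o2s \circ s2o$, starting from $(\ssection, d)$ the map $o2s$ returns insertion sort by the order $\leqs$ derived from $\ssection$, which equals $\ssection$ by \cref{sort:s2o2s} (itself via \cref{sort:well-behave-sorts,sort:sort-uniq}); the discreteness component agrees because $\term{Discrete}(A)$ is a proposition, and the section and two axiom components of $\term{Sort}(A)$ are propositions, so equality of the underlying functions suffices. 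Thus $o2s$ and $s2o$ are mutually inverse, so $o2s$ is an equivalence.

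The main obstacle is not a deep one: it is the bookkeeping that $\term{Sort}(A)$ and $\term{DecTotOrd}(A)$ are $\Sigma$-types whose components beyond the underlying function (respectively, relation) are propositions, and that $\term{Discrete}(A)$ is itself a proposition, so that the two homotopies above genuinely reduce to the cited equalities rather than requiring transport of witnesses. The one piece of still-missing (but routine) work is the ``drop the least element'' lemma that gives $\istailsort$ for insertion sort, if it has not already been isolated alongside \cref{prop:order-to-sort-head-least}.
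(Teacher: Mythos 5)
Your proposal is correct and follows essentially the same route as the paper: $o2s$ is insertion sort parameterised by $\leq$ (with discreteness from \cref{prop:decidable-total-order}), $s2o$ is the order $\leqs$ of \cref{def:least} justified by \cref{sort:almost-order,sort:trans}, and the two round-trips are exactly \cref{sort:o2s2o} and \cref{sort:s2o2s}. The extra material you supply — the propositionality bookkeeping and the routine check that insertion sort satisfies $\istailsort$ — is detail the paper leaves implicit (deferred to the formalisation), not a different argument.
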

\begin{proof}
    $o2s$ is constructed by parameterizing insertion sort with $\leq$.
    By~\cref{prop:decidable-total-order}, $A$ is $\term{Discrete}$.
    The inverse $s2o(s)$ is constructed by~\cref{def:least}, which produces
    a total order by~\cref{sort:almost-order,sort:trans},
    and a decidable total order by $\term{Discrete}(A)$.
    By~\cref{sort:o2s2o} we have $s2o \comp o2s \id \idfunc$,
    and by~\cref{sort:s2o2s} we have $o2s \comp s2o \id \idfunc$,
    giving an isomorphism, hence an equivalence.
\end{proof}

\begin{corollary}[\alink{corollary}{}{strict-order}]
    \label{sort:strict-order}
    Let $\term{DecSTotOrd}(A)$ be the set of decidable strict total orders on $A$,
    There is a map $t2s \colon \term{DecSTotOrd}(A) \to \term{Sort}(A) \times \term{Discrete}(A)$,
    which is an equivalence.
\end{corollary}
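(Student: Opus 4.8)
The plan is to derive this directly from \cref{sort:main} by pre-composing with the equivalence of \cref{prop:decidable-strict-total-order-equiv}. Recall that \cref{prop:decidable-strict-total-order-equiv} gives an equivalence $e \colon \term{DecSTotOrd}(A) \eqv \term{DecTotOrd}(A)$, sending a decidable strict total order $<$ to the decidable total order $\leq$ with $x \leq y \defeq (x < y) + (x = y)$, and conversely $\leq$ to the strict order $x < y \defeq (x \leq y) \times (x \neq y)$. Setting $t2s \defeq o2s \comp e$, where $o2s \colon \term{DecTotOrd}(A) \eqv \term{Sort}(A) \times \term{Discrete}(A)$ is the equivalence from \cref{sort:main}, I immediately obtain a map $t2s \colon \term{DecSTotOrd}(A) \to \term{Sort}(A) \times \term{Discrete}(A)$ that is an equivalence, since the composite of two equivalences is an equivalence, with inverse $e^{-1} \comp s2o$.

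Unfolded, $t2s$ sends a decidable strict total order $<$ to the pair consisting of insertion sort parameterised by the induced total order $x \leq y \defeq (x < y) + (x = y)$, together with the witness that $A$ is discrete; one could alternatively extract discreteness directly from $<$ via \cref{prop:decidable-strict-total-order}, but since $\term{Discrete}(A)$ is a proposition the two witnesses agree, so the second component is unambiguous. Symmetrically, $e^{-1} \comp s2o$ sends a sorting function $\ssection$ to the strict order derived from $\leqs$ of \cref{def:least} by $x < y \defeq (x \leqs y) \times (x \neq y)$.

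There is no genuine obstacle here: the corollary is a formal consequence of composing \cref{prop:decidable-strict-total-order-equiv} with \cref{sort:main}. The only thing worth spelling out is the routine bookkeeping that the composite agrees (up to the propositional discreteness component) with the map $t2s$ named in the statement, and the standard fact that equivalences are closed under composition.
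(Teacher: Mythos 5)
Your proposal is correct and follows essentially the same route as the paper: compose the equivalence between decidable strict total orders and decidable total orders with the equivalence of \cref{sort:main}, using closure of equivalences under composition. The only difference is cosmetic — you cite \cref{prop:decidable-strict-total-order-equiv} (which is indeed the right reference; the paper's own proof points to \cref{prop:decidable-total-order}, apparently a slip) and you spell out the unfolded maps, which the paper leaves implicit.
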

\begin{proof}
    By~\cref{prop:decidable-total-order}, we have an equivalence
    $\term{DecSTotOrd}(A) \simeq \term{DecTotOrd}(A)$, and the result follows from~\cref{sort:main}
    by transitivity of equivalences.
\end{proof}

\begin{corollary}[\alink{corollary}{}{lattice}]
    \label{sort:lattice}
    Assume $A$ has decidable equality.
    Let $\term{TMSLat}(A)$ be the set of total meet semilattices on $A$,
    There is a map $l2s \colon \term{TMSLat}(A) \to \term{Sort}(A)$,
    which is an equivalence.
\end{corollary}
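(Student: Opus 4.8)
The plan is to realise $l2s$ as a composite of equivalences that factors through the set of decidable total orders, and then invoke \cref{sort:main}. The only real work is observing that the standing hypothesis $\term{Discrete}(A)$ repairs the converse that fails in the proof sketch of \cref{prop:total-order-meet-semi-lattice}.

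First I would remark that, since $A$ has decidable equality, \cref{prop:discrete-meet-semi-lattice} makes every total meet semilattice structure on $A$ decidable; as ``being decidable'' is a mere proposition that always holds here, $\term{TMSLat}(A)$ is literally the set of decidable total meet semilattices. Next I would upgrade \cref{prop:total-order-meet-semi-lattice} to an equivalence $\term{TMSLat}(A) \simeq \term{DecTotOrd}(A)$. One direction is already given: a total order $\leq$ induces a (decidable) total meet semilattice via the locally constant map $x \meet y \defeq \term{if}\ x \leq y\ \term{then}\ x\ \term{else}\ y$, as in \cref{prop:total-order-meet-semi-lattice}. For the other direction, a total meet semilattice yields the relation $x \leq y \defeq x \meet y \id x$, which \cref{prop:total-order-meet-semi-lattice} shows is an $\hProp$-valued total order; moreover it is \emph{decidable}, because deciding $x \meet y \id x$ uses only decidable equality of $A$. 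The two round-trips are exactly the computations already verified in \cref{prop:total-order-meet-semi-lattice}, so this is an equivalence of sets. This is precisely where discreteness matters: without it, the meet-to-order direction need not land in decidable orders, which is the counterexample flagged after \cref{prop:total-order-meet-semi-lattice}.

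Finally I would compose
\[
  \term{TMSLat}(A) \;\simeq\; \term{DecTotOrd}(A) \;\simeq\; \term{Sort}(A) \times \term{Discrete}(A) \;\simeq\; \term{Sort}(A),
\]
where the middle equivalence is \cref{sort:main} and the last one holds because $\term{Discrete}(A)$ is a proposition inhabited by hypothesis, so the projection away from it is an equivalence. Unwinding, $l2s$ sends a total meet semilattice to $o2s$ of the induced total order $x \meet y \id x$, i.e.\ to insertion sort parameterised by that order. The main obstacle is the middle paragraph --- checking that the meet-to-order map genuinely produces a \emph{decidable} total order under $\term{Discrete}(A)$, and that the two inversion identities of \cref{prop:total-order-meet-semi-lattice} survive unchanged once everything in sight is decidable; the rest is bookkeeping with equivalences and the propositionality of $\term{Discrete}(A)$.
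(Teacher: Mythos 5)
Your proposal is correct and follows essentially the same route as the paper: use discreteness with \cref{prop:discrete-meet-semi-lattice} and \cref{prop:total-order-meet-semi-lattice} to get $\term{TMSLat}(A) \simeq \term{DecTotOrd}(A)$, then compose with \cref{sort:main}. You simply spell out two steps the paper leaves implicit --- that discreteness of $A$ makes the induced order $x \meet y \id x$ decidable, and that the $\term{Discrete}(A)$ factor can be projected away because it is an inhabited proposition.
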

\begin{proof}
    By assumption and~\cref{prop:discrete-meet-semi-lattice}, every total meet semilattice on $A$ is decidable.
    Thus, by~\cref{prop:total-order-meet-semi-lattice}, we have an equivalence
    $\term{TMSLat}(A) \simeq \term{DecTotOrd}(A)$, and the result follows from~\cref{sort:main}
    by transitivity of equivalences.
\end{proof}
The sorting axioms we have come up with are abstract properties of functions.
As a sanity check, we can verify that the colloquial correctness specification of a sorting function (starting from a
total order) matches our axioms. We use the correctness criterion developed in~\cite{alexandruIntrinsicallyCorrectSorting2023}.
\begin{proposition}[\alink{proposition}{}{sort-correctness}]
    \label{prop:sort-correctness}
    Assume a decidable total order $\leq$ on $A$.
    A sorting algorithm is a map $\term{sort} : {\LL(A) \to \OLL(A)}$,
    that turns lists into ordered lists,
    where $\OLL(A)$ is defined as $\dsum{\xs : \LL(A)}{\isSorted_{\leq}(\xs)}$,
    such that:
    \[\begin{tikzcd}
            {\LL(A)} && {\OLL(A)} \\
            & {\MM(A)}
            \arrow["{\term{sort}}", from=1-1, to=1-3]
            \arrow["q"', from=1-1, to=2-2]
            \arrow["{q \comp \pi_1}", from=1-3, to=2-2]
        \end{tikzcd}\]
    Sorting functions (as in~\cref{def:sorting-function}) give sorting algorithms.
\end{proposition}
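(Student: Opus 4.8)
The plan is to build the sorting algorithm by post-composing the quotient map with the given sorting function, letting \cref{sort:well-behave-sorts} supply the sortedness certificate. Fix a sorting function $\ssection : \MM(A) \to \LL(A)$ satisfying $\isheadleast$ and $\istailsort$, and let $\leqs$ be the total order on $A$ derived from $\ssection$ via \cref{def:least}; it is reflexive, antisymmetric, and total by \cref{sort:almost-order}, transitive by \cref{sort:trans}, and decidable because $A$ is discrete (which follows from the assumed decidable total order by \cref{prop:decidable-total-order}). This $\leqs$ is the order with respect to which we read $\OLL(A) = \dsum{\xs : \LL(A)}{\isSorted_{\leqs}(\xs)}$; by \cref{sort:main} and \cref{sort:o2s2o} it coincides with the ambient $\leq$ when $\ssection$ is the sort function induced by $\leq$, so nothing is lost by working with the derived order.

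Next I would define $\term{sort} : \LL(A) \to \OLL(A)$ by $\term{sort}(\xs) \defeq (\ssection(\quotient(\xs)), p_{\xs})$, where the certificate $p_{\xs} : \isSorted_{\leqs}(\ssection(\quotient(\xs)))$ is exactly the conclusion of \cref{sort:well-behave-sorts} instantiated at the multiset $\quotient(\xs) : \MM(A)$. By construction $\pi_1 \comp \term{sort} = \ssection \comp \quotient$, so the first component of $\term{sort}$ is the expected ``sort then view as a list'' map.

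It then remains to check the commuting triangle, namely $\quotient \comp \pi_1 \comp \term{sort} = \quotient$. Unfolding the left-hand side at $\xs$ gives $\quotient(\ssection(\quotient(\xs)))$, and since $\ssection$ is by definition a section of the surjection $\quotient$, i.e. $\quotient \comp \ssection = \idfunc_{\MM(A)}$, this reduces to $\quotient(\xs)$, as required.

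The one genuine subtlety, rather than an obstacle, is reconciling the two notions of ``sorted list'': the correctness criterion of \cite{alexandruIntrinsicallyCorrectSorting2023} fixes a decidable total order in advance to define $\OLL(A)$, whereas an abstract sorting function only carries the order $\leqs$ it induces. The real content — that $\ssection$ applied to any multiset produces a list sorted with respect to $\leqs$ — has already been discharged by \cref{sort:well-behave-sorts}, so once the orders are identified the remaining steps are routine: the sortedness certificate comes for free, and the triangle commutes purely because $\ssection$ is a section of $\quotient$.
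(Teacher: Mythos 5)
Your proposal is correct and follows essentially the same route as the paper: define $\term{sort} \defeq \ssection \comp \quotient$, obtain the $\isSorted$ certificate from \cref{sort:well-behave-sorts}, and get the commuting triangle from $\quotient \comp \ssection = \idfunc$. The only difference is that you spell out the identification of the derived order $\leqs$ with the ambient $\leq$ (via \cref{sort:o2s2o} and \cref{sort:main}), a point the paper's terse proof leaves implicit.
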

\begin{proof}
    We construct our section $\ssection:\MM(A) \to \LL(A)$,
    and define $\term{sort} \defeq \ssection \comp \quotient$,
    which produces ordered lists by~\cref{sort:well-behave-sorts}.
\end{proof}

\section{Discussion}
\label{sec:discussion}

We conclude by discussing some high-level observations, related work, and future directions.

\myparagraph{Formalisation}
The paper uses informal type theoretic language,
and is accessible without understanding any details of the formalisation.
However, the formalisation is done in Cubical Agda, which has a few differences and
shortcomings, that cause certain details to be more verbose than they are in informal type theory.

For simplicity we omitted type levels in the paper,
but our formalisation has many verbose uses of universe levels due to Agda's universe polymorphism.
Similarly, h-levels were restricted to sets in the paper, but the formalisation is parameterised in many places for any
h-level (to facilitate future generalisations).
The free algebra framework currently only works with sets.
Due to issues of regularity, certain computations only hold propositionally,
and the formalisation requires proving auxiliary $\beta$ and $\eta$ computation rules in some places.

\myparagraph{Universal Algebra}
One of the contributions of our work is a rudimentary framework for universal algebra in a more
categorical style, which lends itself to an elegant formalisation in type theory.
This framework may be improved and generalised from sets to groupoids,
using a system of coherences on top of a system of equations,
to talk about 2-algebraic structures on groupoids,
such as monoidal groupoids, symmetric monoidal groupoids, and so on.

\myparagraph{Free commutative monoids}
The construction of finite multisets and free commutative monoids in type theory has a long history,
and various authors have different approaches to it. We refer the reader to the discussions
in~\cite{choudhuryFreeCommutativeMonoids2023,joramConstructiveFinalSemantics2023} for a detailed survey of these
constructions.
Our work, in particular, was motivated by the colloquial observation that:
``there is no way to represent free commutative monoids using inductive types''.
From the categorical point of view, note that the free monoid functor on $\Set$ is polynomial,
but the free commutative monoid endofunctor is not polynomial, since it only weakly preserves pullbacks.
Various authors have given clever encodings of free commutative monoids using inductive types by adding
assumptions on the carrier set -- in particular, the assumption of total ordering on the carrier set leads to the
construction of ``fresh-lists'', by~\cite{kupkeFreshLookCommutativity2023}, which forces the canonical \emph{sorted}
ordering on the elements of the finite multiset.

It is worth noting that in programming practice,
it is usually the case that all user-defined types have some sort of total order enforced on them,
either because they're finite, or because they can be canonically enumerated.
Under these assumptions, the construction of fresh lists is a very reasonable way to represent free
commutative monoids, or finite multisets.

The free monoid and free commutative monoid constructions can be categorified
to free monoidal categories and free symmetric monoidal categories, respectively.
In type theory, these types can be groupoidified to free monoidal groupoids and free symmetric monoidal groupoids,
using the groupoid structure of identity types.
This is a natural next step,
and its connections to assumptions about total orders on the type of objects would be an important direction to explore.

\myparagraph{Correctness of Sorting}
Sorting is a classic problem in computer science --
the programming point of view of sorting and its correctness has been studied by various authors.
The simplest view of sorting is a function $\term{sort}: \LL(\Nat) \to \LL(\Nat)$,
which permutes the list and outputs an ordered list, which is studied in~\cite{appelVerifiedFunctionalAlgorithms2023}.
Fundamentally, this is a very extrinsic view of program verification,
which is commonly used in program verification and proof assistants,
and further, a special case of a more general sorting algorithm.

Henglein~\cite{hengleinWhatSortingFunction2009} studies sorting functions abstractly,
without requiring a total order on the underlying set.
He considers sorting functions as functions on sequences (lists),
and recovers the order by applying a ``sorting function'' on an $n$-element list,
and looking up the position of the elements to be compared.
Unlike our approach, he does not factorize sorting functions through free commutative monoids.
Henglein also studies sorting on preorders where antisymmetry may not hold,
whereas we have only considered total orders in this paper.
We are able to give a more refined axiomatisation of sorting because we consider permutations
explicitly, and work in a constructive setting (using explicit assumptions about decidability),
and which is an improvement over previous work.

Another intrinsic view of correctness of sorting has been studied in~\cite{hinzeSortingBialgebrasDistributive2012}
using bialgebras, which was further expanded in~\cite{alexandruIntrinsicallyCorrectSorting2023} with an accompanying formalisation in Cubical Agda,
that matches our point of view, as explained in~\cref{prop:sort-correctness}.
However, their work is not just about extensional correctness of sorting,
but also deriving various sorting algorithms starting from bialgebraic semantics and distributive laws.
Our work is complementary, in that we are not concerned with the computational content of sorting, but rather
the abstract properties of sorting functions, which are independent of a given ordering.
It is unclear whether the abstract property of sorting functions can be combined
with the intrinsic complexity of sorting algorithms --
and that is a direction for future work.

Although we only talk about sorting lists and bags,
the same ideas can be applied to other inductive types (polynomial functors) such as trees.
We speculate that this could lead to some interesting connections with sorting
(binary) trees, and constructions of (binary) search trees, from classical computer science.


\renewcommand{\appendixsectionformat}[2]{
  {Supplementary material for Section~#1 (#2)}
}

\bibliography{symmetries}
\appendix

\end{document}
